\newcommand{\ot}{\otimes}
\newcommand{\diag}{\text{diag}}
\newcommand{\eps}[1]{\epsilon_{\rm \footnotesize #1}}
\newcommand{\chg}{\textcolor{black}}
\newtheorem{theorem}{Theorem}
\newtheorem{definition}{Definition}
\newtheorem{lemma}{Lemma}
\begin{document}

\title{All states are universal catalysts in quantum thermodynamics}
\author{Patryk Lipka-Bartosik}
 \affiliation{H. H. Wills Physics Laboratory, University of Bristol, Tyndall Avenue, Bristol, BS8 1TL, United Kingdom}
 \affiliation{Institute of Theoretical Physics and Astrophysics,
National Quantum Information Centre, Faculty of Mathematics, Physics and Informatics,
University of Gdańsk, Wita Stwosza 57, 80-308 Gdańsk, Poland}
\author{Paul Skrzypczyk}
\affiliation{H. H. Wills Physics Laboratory, University of Bristol, Tyndall Avenue, Bristol, BS8 1TL, United Kingdom}

\date{\today}

\begin{abstract}
Quantum catalysis is a fascinating concept which demonstrates that certain transformations can only become possible when given access to a specific resource that has to be returned unaffected. It was first discovered in the context of entanglement theory and since then applied in a number of resource-theoretic frameworks, including quantum thermodynamics. Although in that case the necessary (and sometimes also sufficient) conditions on the existence of a catalyst are known, almost nothing is known about the precise form of the catalyst state required by the transformation. In particular, it is not clear whether it has to have some special properties or be finely tuned to the desired transformation. In this work we describe a surprising property of multi-copy states: we show that in resource theories governed by majorization all resourceful states are catalysts for all allowed transformations. In quantum thermodynamics this means that the so-called “second laws of thermodynamics” do not require a fine-tuned catalyst but rather any state, given sufficiently many copies, can serve as a useful catalyst. \chg{These analytic results are accompanied by several numerical investigations that indicate that neither a multi-copy form nor a very large dimension catalyst are required to activate most allowed transformations catalytically.}
\end{abstract}

\keywords{}
\maketitle

\section{Introduction}
The laws of physics are often expressed as limitations on what physical systems can and cannot do. The second law of thermodynamics is a cardinal example of this approach: it says which thermodynamic transformations can be performed under given conditions. Specifically, at a constant background temperature and volume the transition between two \chg{equilibrium} states can occur if and only if the Helmholtz free energy decreases during the process. The second law describes a relationship between average quantities (energy and entropy) and hence specifies the \emph{typical} thermodynamic behavior, i.e. justified in the limit of a large number of identically distributed and weakly interacting systems. 

Recent experiments provide evidence that with our current technology we can control and manipulate systems at much smaller scales than those governed by the second law \cite{Koski2014,Koski2015,Chida2017,Camati2016,Peterson2016,Zanin2019}. Therefore understanding thermodynamic behavior and, in particular, finding the correct way in which the standard laws of thermodynamics translate into this domain, is of crucial practical and theoretical importance. Very recently this translation into the microscopic regime was made possible using powerful tools  derived within the field of classical and quantum information theory \cite{shannon1948mathematical,Ruch1976,Marshall2011,Janzing2000}.

One of the most striking differences between standard thermodynamics and its microscopic counterpart is that transformations between states can become significantly more demanding. More specifically, there are paradigms where they are no longer described by a single second law, but an entire family of conditions, the so-called ``second laws of quantum thermodynamics'' \cite{Brand_o_2015}. In this way the free energy loses its meaning as the unique indicator of which state transitions are possible --- its role is replaced by a family of generalized free energies, a collection of information-theoretic quantities closely related to the Renyi entropies \cite{renyi1961}. This captures the idea that for microscopic systems more structure of the energy distribution must be specified in order to determine their thermodynamic properties. Importantly, by invoking typicality arguments it can be shown that in the limit of identically distributed and weakly interacting systems all members of this family of quantities approach the Helmholtz free energy, thus recovering the standard second law as a special case. 

However, these results rely on a specific assumption: that there exists some thermal machine or `catalyst' which is not consumed by the protocol but nonetheless makes the transformation possible. More specifically, if the second laws are satisfied for a pair of states $\rho$ and $\sigma$ then there is a quantum state $\omega$ which is unchanged by the protocol but still enables the joint transformation $\rho \ot \omega \rightarrow \sigma \ot \omega$. This becomes more natural once we realize that standard treatments implicitly adopt an analogous assumption; to perform a thermodynamic transformation one always needs to supply additional devices which can be cyclically reused (e.g. engines, refrigerators or heat pumps). In this way the ancillary state $\omega$ models the behavior of a thermal machine or an experimental apparatus which facilitates or even enables the transformation. This phenomenon of ``lifting restrictions without being consumed'' is called \emph{quantum catalysis}. 

As may be expected, quantum catalysis is not exclusively related to thermodynamics. The basic idea was introduced for the first time by Jonathan and Plenio in the context of entanglement transformations using local operations and classical communication (LOCC) \cite{Jonathan_1999}. However, the ability to borrow an ancillary state (the catalyst) which remains unchanged can allow for otherwise impossible transformations regardless of the specific physical situation. Because of this generality, the scenario of catalysis can be effectively described using the general tools developed within the framework of  quantum resource theories (QRTs) \cite{HORODECKI2012,Chitambar2019,Nielsen1999,Vidal1999,PhysRevA.91.052120,Horodecki2003,Streltsov2018,Brandao2013,Horodecki2013,Ng2018,Horodecki2015,Cavalcanti2016,Piani2016,Buscemi2020,Supic2019,Cavalcanti2017,magic2017,bhattacharya2018convex,wakakuwa2017operational}. More precisely, the problem has a particularly elegant and conceptually simple description for a class of theories referred to as \emph{majorization-based} quantum resource theories (MB-QRTs). In such theories quantum states are represented by probability vectors which encode their affiliation to the specific resource. The problem of conversion can be then formulated purely in terms of these vectors and answered using the concept of majorization \cite{Marshall2011}. Arguably the most well-studied examples of such theories are the resource theory of entanglement \cite{PhysRevA.53.2046,Nielsen1999,kumagai2016second}, coherence \cite{PhysRevA.91.052120,Napoli2016,Zhu_2017,chitambar2016comparison}, purity \cite{Gour_2015,streltsov2018maximal}, asymmetry \cite{Piani2016,ding2020amplifying} and thermodynamics (or athermality) \cite{Horodecki2013,PhysRevLett.111.250404,Brand_o_2015}. To focus attention we describe our findings in terms of the resource theory of quantum thermodynamics, however, the results which we present here are general and hold for any majorization-based resource theory. 

Returning to quantum thermodynamics, the second laws emerge when the catalyst is returned perfectly undisturbed. In reality, however, every thermodynamic protocol will modify the catalyst's state and so a realistic notion of catalysis must be robust against such perturbations. This leads to the notion of \emph{inexact catalysis}  where the catalyst is allowed to be returned up to some small error $\eps{C}$ \cite{vanDam2003,Brandao2013,Ng_2015}. A natural and operationally motivated error quantifier is the trace distance which also quantifies the best average probability of discriminating quantum states \cite{nielsen_chuang}. Surprisingly, states which are close in trace distance may have very different thermodynamic properties. This allows to ``cheat'' when using such catalysts, e.g. returning them with a small error as quantified by the trace distance but also much lower work content \cite{Ng_2015}. Protocols acting in this way extract work from the catalyst in order to lift the limitations imposed by the second laws and perform the transformation. In this sense the catalyst is used as a work source (or an entropy sink) rather than a device genuinely catalyzing the transformation, leading to the phenomenon known as (thermal) embezzlement \cite{Ng_2015}.

As a result, the partial order quantified by the second laws vanishes and everything becomes possible --- there are no longer any laws. One promising way to amend this situation is to quantify how the error on the catalyst scales with its dimension \cite{Brandao2015}. This approach naturally leads to two different regimes of catalysis: the \emph{embezzlement} regime in which the partial order between states completely vanishes and the \emph{genuine catalysis} regime where the partial order collapses to a certain extent  (so that only a subset of the second laws remains) or is even fully retained. The  boundary between these two regimes in terms of the trace distance error has been studied in \cite{Ng_2015}. There it was found that for systems with fully degenerate Hamiltonians all state transformations become possible when the error exceeds a certain threshold which scales linearly with the number of catalyst particles $n$ (or with the logarithm of the dimension). Furthermore, once the error scaling is better than linear in $n$, some of the generalized free energies are recovered, ultimately leading to a full partial order when no error is allowed. 

Arguably, one of the most important problems within this approach to thermodynamics is how to find a catalyst which can be useful for a given transformation. Many of the existing results are based on constructing a very specific catalyst. This, however, may be obscuring the true physical mechanism behind catalysis. Furthermore, it is still not well understood which properties of quantum states are relevant for catalysis. The second laws only guarantee the existence of the catalyst; even if they are satisfied by a pair of states, it may still be difficult to find which state catalyzes a particular process. This intuition comes from our macroscopic experience: chemical reactions can be catalyzed only by appropriately chosen chemical compounds; similarly thermal machines need to be carefully tuned so that the desired transformation may happen. In this way a natural question appears: how can we find a state which catalyzes a given transformation and how special are these states?

In this work we push forward our understanding of catalysis by reporting a surprising property of multi-copy catalysts which we term \emph{catalytic universality}. We show that any state, as long as enough copies of it are available, can serve as a catalyst for all allowable transformations. For the case of genuine catalysis this means that if the two states obey the second laws, then a catalyst formed from sufficiently many copies of \emph{any} state $\omega$ can catalyze the transformation from $\rho$ to $\sigma$ approximately, i.e with a disturbance on the catalyst decreasing almost exponentially with the number of copies. Furthermore, by employing a recent result from quantum information theory called the convex-split lemma, we show that the universality phenomenon manifests also into the embezzlement regime, i.e. when the partial order between states fully collapses. In this case sufficiently many copies of any state can catalyze \emph{any} state transformation with a vanishing disturbance, although much slower than in the regime of genuine catalysis. 

We also emphasize that these results are valid for any QRT whose transitions are governed by majorization. In this way the phenomenon of catalytic universality appears naturally in the resource theory of entanglement, coherence or purity. Therefore, our results also lead to new insights in the theory of entanglement by characterizing new and broad families of universal embezzling states.

The paper is structured as follows. In Sec. \ref{sec2} we introduce the relevant framework for thermodynamics and the main mathematical tools used to prove our results. In Sec. \ref{sec3} we describe the general protocol and then specify it to the two catalysis regimes: embezzlement and genuine catalysis. Then in Sec. \ref{sec4} we provide a numerical evidence that the phenomenon of catalytic universality can be even more general and conjecture that it holds for arbitrary states with a sufficiently large dimension.  In Sec. \ref{sec5} we provide a brief summary of our main results and finally, in Sec. \ref{sec6}, we discuss potential implications and describe several open problems which follow naturally from these findings.  

\section{Framework}
\label{sec2}
We begin by describing the resource theory of quantum thermodynamics. As a starting point we define a restricted set of quantum operations known as thermal operations (TO's) which were introduced in \cite{Horodecki2013}, and subsequently studied in \cite{Perry2015,Richens2017,Brandao2013,Brand_o_2015,Lostaglio2016,Perry2018,Faist2015gibbs,Masanes2017,Lostaglio2015,Korzekwa2016,Horodecki2014,Kwon2018clock,Mueller2018,obejko2020thermodynamics,binder2019thermodynamics}. This is an established setting which allows for exploring fundamental thermodynamic limitations by assuming perfect control over the environment. Furthermore, recent studies indicate that these operations can be achieved experimentally with a coarse-grained control \cite{Perry2018}. A readable introduction to this field of quantum thermodynamics can be found in \cite{Goold2015,Vinjanampathy_2016,Lostaglio2019}. One of the main benefits of using this formal framework is that it readily allows to apply the results of quantum and classical information theory in the thermodynamic context. This can be then adapted to specific physical scenario by considering a more demanding dynamics. 

The setting studied by the TO framework consists of a system $\rm S$ with Hamiltonian $H_{\rm S} = \sum_{i=1}^{d_{\rm S}} E_i \dyad{i}_{\rm S}$ and a heat bath $\rm{B}$ at temperature $T$ with Hamiltonian $H_{\rm{B}}$ satisfying a few reasonable assumptions about its energy spectrum (see \cite{Horodecki2013} for the details). We will always assume that the heat bath starts in a thermal state $\tau_{\rm{B}}^{\,} = e^{-\beta H_{\rm{B}}}/Z_{\rm{B}}$, where $Z_{\rm{B}} = \Tr e^{-\beta H_{\rm{B}}}$ is the partition function and $\beta = 1 / k_{\rm B}T$ is the inverse temperature. The interaction of the system with the heat bath is modelled using a unitary $U_{\rm SB}$ which conserves the total energy,  i.e. $[U_{\rm SB}, H_{\rm S} + H_{\rm B}] = 0$. The map $\mathcal{T}_{\rm S}$ arising from this unitary after tracing out the ancillary degrees of freedom is called a \emph{thermal operation} (TO) and can be written as:
\begin{align}
    \label{def_to_eq}
    \mathcal{T}_{\rm S}[\rho_{\rm S}] = \Tr_{\rm B'} \left[U_{\rm SB} \left(\rho_{\rm S} \ot \tau_{\rm B}\right) U_{\rm SB}^{\dagger}\right],
\end{align}
where the trace can be performed over any system $\rm B'$ inside $\rm SB$. In general, a complete characterization of the set of operations (\ref{def_to_eq}) is not known. However, for states $\rho$ and $\sigma$ block-diagonal in the energy eigenbasis there exists a simple criterion determining when there exists a TO such that $\mathcal{T}[\rho] = \sigma$. To present this criterion let us first construct a resource representation of the two states, i.e:
\begin{align}
    \label{eq:res_rep}
    \bm{p} = (p_1, p_2, \ldots, p_{d_{\rm S}}), \qquad \bm{q} = (q_1, q_2, \ldots, q_{d_{\rm S}})
\end{align}
where $p_i = \langle E_i |\rho| E_i \rangle$ and $q_i = \langle E_i |\sigma| E_i \rangle$ are the state's occupations in the energy eigenbasis. Similarly we denote the system's thermal state with $\tau_{\rm S}^{\,} = \text{diag}[\bm{g}] = (g_1, g_2, \ldots, g_{d_{\rm S}})$ with $g_i = e^{- \beta E_i} / Z_{\rm S}$. Let $\pi(i)$ be a permutation of the indices $i$, such that the vector with elements $p_{\pi(i)} / g_{\pi(i)}$ is sorted in a non-increasing order (\emph{beta-ordered}). Following \cite{Horodecki2013}, for such an ordered state one then constructs a curve (\emph{thermo-majorization curve}) by drawing points of the form:
\begin{align}
    \{(\sum_{i=1}^k g_{\pi(i)}, \sum_{i=1}^k p_{\pi(i)})\}_{k=1}^{d_{\rm S}}
\end{align}
together with $\{(0, 0)\}$, and connecting them piece-wise linearly to form a convex curve. As it was proved in \cite{Horodecki2013}, transformation between block-diagonal states $\rho \rightarrow \sigma$ via thermal operations is possible if and only if the thermo-majorization curve of $\rho_{\rm S}$ is never below the curve of $\sigma_{\rm S}$. This relation is known as thermo-majorization and will be denoted with ``$\succ_{\rm T}$''. Notice that this notion also recovers, as a special case, the standard majorization relation either by considering the limit of infinite temperature ($\beta \rightarrow 0$) or fully degenerate system's Hamiltonian ($H_{\rm S} \propto \mathbb{1}_{\rm S}$).

The framework of thermal operations  can naturally accommodate the phenomenon of catalysis. To do so let us consider an ancillary system $\rm{C}$ prepared in a state $\omega_{\rm C}$ of dimension $d_{\rm C}$ and Hamiltonian $H_{\rm C}$. It turns out that any transformation between \chg{diagonal} states which can be performed using a catalyst with a nontrivial energy spectrum can also be accomplished using a catalyst with a fully degenerate spectrum. In this sense, to describe all possible state transformations, without loss of generality we can always choose a trivial Hamiltonian $H_{\rm C} \propto \mathbb{1}_{\rm C}$ \cite{Brand_o_2015}. Having that said, we consider catalytic thermal operations (CTOs) to be transformations of the following form:
\begin{align}
    \label{eq:cat}
    \mathcal{T}_{\rm SC}[\rho_{\rm S } \ot \omega_{\rm C}] = \sigma_{\rm S} \ot \omega_{\rm C},
\end{align}
where now $\mathcal{T}_{\rm SC}$ is a thermal operation (\ref{def_to_eq}) with $\rm S$ being replaced by the joint system $\rm SC$. A fundamental question to ask is when there exists a catalyst $\omega_{\rm C}$ which can facilitate a given transformation $\rho \rightarrow \sigma$. In this case the necessary conditions for the existence of a transformation between two states is captured by a set of quantities called generalized or $\alpha$-free energies $F_{\alpha}$. An important result of Ref. \cite{Brand_o_2015} states that there exists a catalyst $\omega_{\rm C}$ which enables the transformation $\rho_{\rm S} \rightarrow \sigma_{\rm S}$ as in (\ref{eq:cat}) only if:
\begin{align}
    \label{eq:sec_laws}
    F_{\alpha}(\rho_{\rm S}, \tau_{\rm S})  \geq F_{\alpha}(\sigma_{\rm S}, \tau_{\rm S})  \qquad \forall\, \alpha \geq 0.
\end{align}
These are the \emph{second laws of thermodynamics} as stated in the introduction. In fact, the precise statement of the second laws which we give here requires two additional technical assumptions. First, it assumes an arbitrarily small but nonzero error in the transformation. Second, it requires borrowing a qubit in a pure state that is given back with an arbitrarily small, but again, nonzero error \footnote{Without this two technical assumptions the precise form of the second laws should read: $\forall\, \alpha \in \mathbb{R}\quad F_{\alpha}(\rho_{\rm S}, \tau_{\rm S}) > F_{\alpha}(\sigma_{\rm S}, \tau_{\rm S})$}. The functions $F_{\alpha}(\rho_{\rm S}, \tau_{\rm S})$ are defined as:
\begin{align}
    F_{\alpha}(\rho_{\rm S}, \tau_{\rm S}) := \frac{1}{\beta}\left[D_{\alpha}(\rho_{\rm S}||\tau_{\rm S}) - \log Z_{\rm S} \right]  
\end{align}
with $D_{\alpha}(\rho||\tau)$ being the quantum Renyi divergences defined in \cite{Lennert2013}. Importantly, the conditions (\ref{eq:sec_laws}) become sufficient if the states $\rho_{\rm S}$ and $\sigma_{\rm S}$ are block-diagonal in the energy basis determined by $H_{\rm S}$. This means that they commute with the operator $\tau_{\rm S}$ and hence the quantum Renyi divergence $D_{\alpha}(\rho_{\rm S}||\tau_{\rm S})$ for $\alpha \geq 0$ simplifies to:
\begin{align}
    \label{def:rel_ent}
    D_{\alpha} (\bm{p}||\bm{g}) =  \frac{1}{\alpha-1} \log\left[ \sum_{i} p_i^{\alpha} g_i^{1-\alpha}\right].
\end{align}
This also allows the second laws to be written in a much simpler form, and to see more clearly the connection between $F_{\alpha}$ and the non-equilibrium Helmholtz free energy, which is given by $F_{1} = - k_{\rm B}T \log Z_{\rm S}$. It is important to note that the second laws (\ref{eq:sec_laws}) are \emph{strictly} looser than the thermo-majorization criteria. This means that there are transformations which cannot be realized via TO, i.e. without a catalyst, but can be performed when given access to a one. This is precisely due to this realization that catalysis is an important and highly non-trivial phenomenon in the resource theory of thermodynamics.

The notion of catalysis can be naturally generalized to more physical scenarios if we allow for small perturbation in the final state of the catalyst. This relaxation leads to \emph{inexact catalysis}, where the error on the catalyst $\eps{C}$ is defined as:
\begin{align}
     \eps{C} := \norm{\Tr_{\rm S} \mathcal{T}_{\rm SC}[\rho_{\rm S} \ot \omega_{\rm C}] - \omega_{\rm C}}_1, 
\end{align}
where $\norm{M}_1 := \max{\{\Tr[P M]\,|\, \scriptsize 0 \preceq P \preceq \mathbb{1}\}}$ is the trace distance or $1$-norm. The case of exact catalysis can be recovered by  $(i)$ setting the error on the catalyst to zero, i.e. $\eps{C} = 0$ and $(ii)$ allowing no correlations between the system and the catalyst, i.e. demanding that the two subsystems end up in a product form. This is also the regime in which all of the second laws must be satisfied in order to transform one state into another. The case when $\eps{C} = 0$ but arbitrary correlations between $\rm S$ and $\rm C$ are allowed to build up has been thoroughly studied in \cite{Mueller2018}. There it was found that using a finely-tuned catalyst one can transform $\rho_{\rm S}$ into $\sigma_{\rm S}$, as long as the free energy of $\rho_{\rm S}$ is higher than the free energy of $\sigma_{\rm S}$. This leads to the conclusion that only one of the family of second laws remains, namely the non-equilibrium Helmholtz free energy $F_1$. Moreover, the authors of \cite{Brand_o_2015} showed that when the error on the catalyst scales \emph{linearly} with the number of particles (up to a constant factor) $n = \log d_{\rm C}$, that is when $\eps{C} \sim 1 / n$, then the non-equilibrium Helmholtz free energy $F_{ 1}$ again completely describes all possible transformations. Finally, in \cite{Ng_2015} it was found that the second laws completely vanish (meaning that all state transitions are possible) when the error on the catalyst surpasses a certain threshold which is determined by:
\begin{align}
    \label{def:emb_bnd}
    \eps{C}^{\text{bnd}} = \frac{d_{\rm S} - 1}{1 +(d_{\rm S} - 1) \log d_{\rm C}} \sim \frac{1}{n}.
\end{align}
In other words, this is the minimal error which can be achieved under the assumption that all states can be converted between each other. This sets a boundary between the embezzlement and genuine catalysis regime; whenever error on the catalyst scales with its dimension better than (\ref{def:emb_bnd}) then there are state transitions which are not allowed and so the partial order induced by the second laws is recovered to a certain degree. Whenever the scaling of $\eps{C}$ is worse or equal to (\ref{def:emb_bnd}) we will refer to the corresponding regime of catalysis as \emph{embezzlement} regime. On the contrary, we will use the term \emph{genuine catalysis} to indicate that transformations are still governed by the second laws (or some non-empty subset of them). In order to simplify notation in what follows we will indicate the type of scaling using the big-$\mathcal{O}$ notation, i.e. $\mathcal{O}(1/n)$ means that the error scales as $1/n$ up the leading order.

Importantly, even if the second laws are satisfied, there exists no general method of choosing catalysts for a given transformation. Moreover, not much is also known about thermodynamic properties of such catalysts, like their average energy, entropy or dimension. As such, there is still a lot to be understood about catalysis. In what follows we will show that in fact any catalyst composed of sufficiently many copies of an arbitrary state can catalyze any state transformation which is allowed by the transformation laws. We will refer to this phenomena as ``catalytic universality'' and demonstrate its appearance in the case of genuine catalysis and embezzlement. 

\section{Results}
\label{sec3}
In this section we present our main result, i.e. we prove that all multi-copy states can act as catalysts for all allowed transformations. We will prove the result by constructing a general protocol (see Fig. \ref{fig:1}) which will be then adapted to a specific regime of catalysis (embezzlement or genuine) by appropriately choosing the corresponding parameters. 

\begin{figure}
    \centering
    \includegraphics[width=0.5\textwidth]{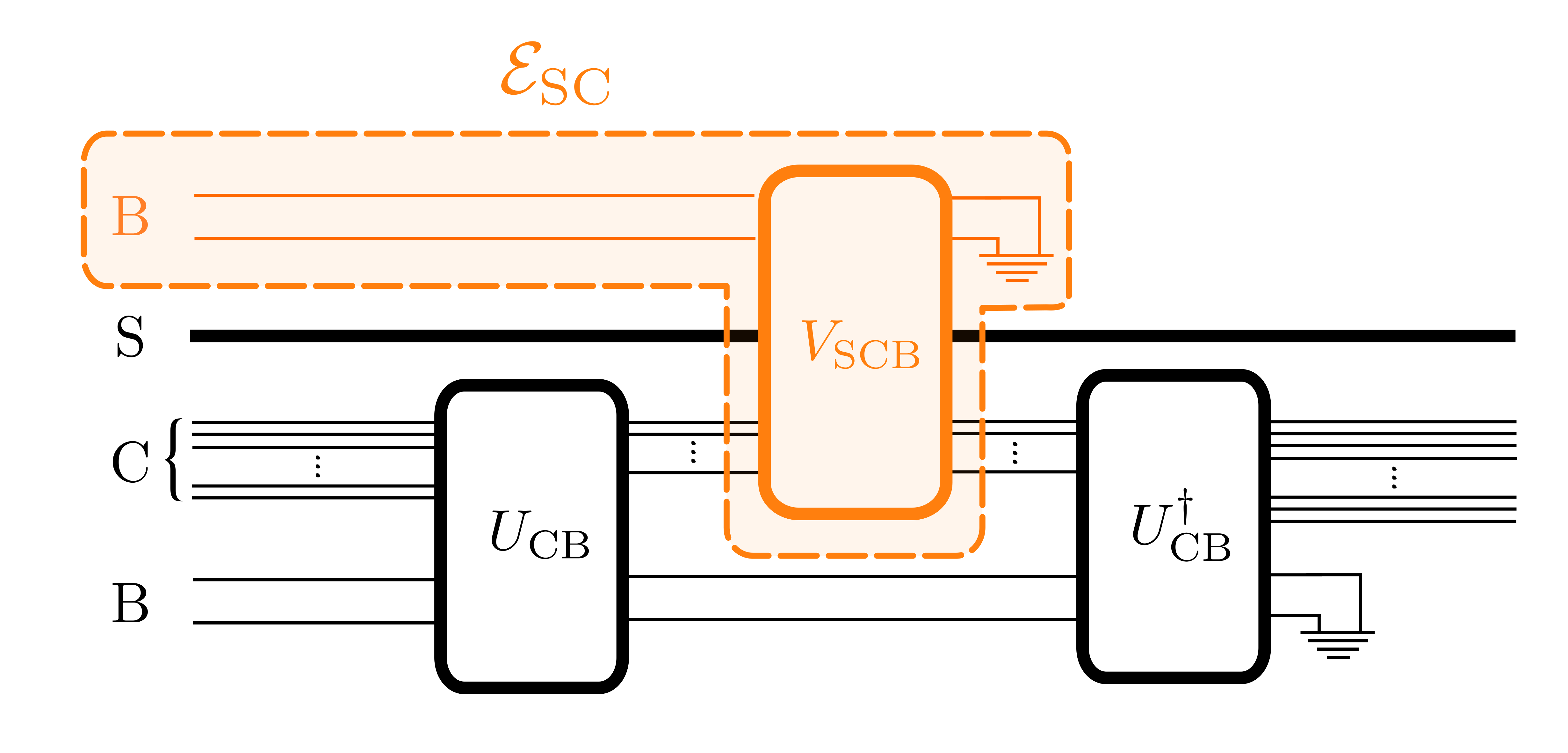}
    \caption{The main protocol. The catalyst $\rm C$ is processed using the unitary $U_{\rm CB}$ and then, along with the system $\rm S$, supplied as an input for the operation $\mathcal{E}_{\rm SC}(\cdot) := \Tr_{\rm B}[V_{\rm SCB}((\cdot)_{\rm SC} \ot \tau_{\rm B}^{\,})V_{\rm SCB}^{\dagger}]$. The resulting state of the catalyst is then transformed back using the unitary $U_{\rm CB}^{\dagger}$. As long as the back-action of the map  $\mathcal{E}_{\rm SC}$ on the catalyst is small, the protocol as a whole leaves the state of the catalyst approximately undisturbed.}
    \label{fig:1}
\end{figure}
\subsection{Intuition}
Before we describe our main protocol in full detail let us first qualitatively argue why multi-copy states can be seen as useful for catalysis. 

Consider the catalyst to be an $n$-copy state $\omega_{\rm C}^{\ot n}$ where the single-copy state $\omega_{\rm C}$ is an arbitrary state diagonal in the energy eigenbasis. Due to the law of large numbers \cite{Cover2006_book,Schumacher1995} in the asymptotic regime ($n \rightarrow \infty$) there exists a subset of eigenvalues of $\omega_{\rm C}^{\ot n}$, the so-called typical set, which carries almost the whole probability weight and is almost uniformly occupied. Such a state can be approximately reversibly converted into other states, at a rate quantified by the relative entropy. Importantly, this conversion can happen with a negligible error which vanishes quickly as the number of copies $n$ increases. 

In this way, when having access to a large number $n$ of copies of the state $\omega_{\rm C}^{\ot n}$ we can convert it almost reversibly into $m$ copies of another state which we can now finely-tune to our desired transformation. Once the catalyst is appropriately `preprocessed', we can apply the actual catalytic transformation and map $\rho_{\rm S}$ into $\sigma_{\rm S}$ with the help of the converted catalyst. Using the fact that for large $n$ the error is negligible and the conversion is almost reversible, we can approximately recover the initial state of the catalyst by applying a suitable reverse map, i.e. by `postprocessing' it. Such a combined transformation consisting of these three steps can be equivalently viewed as a valid thermal operation on the system and the catalyst. 

The surprising fact is that, for states which satisfy respective transformation laws, it is always possible to find an explicit intermediate state which can be used to catalyze a given state transformation with a sufficiently small or even no error. This is far from being obvious and solves one of the big problems of catalysis by explicitly determining the state which can catalyse a given transformation. 

We now present the two main theorems of this paper. Their proofs are based on constructing a specific protocol which formalizes the above reasoning and adapting it to the respective regime of catalysis. In the case when the error on the catalyst goes to zero with $n \rightarrow \infty$ slower than (\ref{def:emb_bnd}), the partial order between states vanishes and marks the emergence of the embezzlement regime. In this case any multi-copy state, provided that $n$ is large enough, can act as a catalyst, or more precisely, an embezzler, for any state transformation. This is formalized by the following theorem:  
\begin{theorem}
\label{thm1}
For any two states $\rho_{\rm S}$ and $\sigma_{\rm S}$ and any catalyst state $\omega_{\rm C}$ there exists a thermal operation $\mathcal{T}_{\rm SC}$ such that:
\begin{align}
    \mathcal{T}_{\rm SC}\left[\rho_{\rm S} \ot \omega_{\rm C}^{\ot n} \right] = \sigma_{\rm SC}',
\end{align}
so that the following holds:
\begin{align}
    \eps{C} &:= \norm{\Tr_{\rm S} \left[\sigma_{\rm SC}'\right] - \omega_{\rm C}^{\ot n}}_{1} \leq \mathcal{O}\left(\frac{1}{\sqrt{n}}\right), \\
    \eps{S} &:= \norm{\Tr_{\rm C} \left[\sigma_{\rm SC}'\right] - \sigma_{\rm S}}_{1} \leq \mathcal{O}\left(\frac{1}{n}\right).
\end{align}
The constants can be explicitly computed and are provided in the Appendix.
\end{theorem}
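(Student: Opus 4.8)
The plan is to realize $\mathcal{T}_{\rm SC}$ as the three-stage protocol of Fig.~\ref{fig:1}: a \emph{preprocessing} step that turns the generic catalyst $\omega_{\rm C}^{\ot n}$ into a purpose-built catalyst $\mu_{\rm C}$ tailored to the pair $(\rho_{\rm S},\sigma_{\rm S})$; a \emph{core} step that performs the (otherwise forbidden) transformation $\rho_{\rm S}\to\sigma_{\rm S}$ with the help of $\mu_{\rm C}$ while barely disturbing it; and a \emph{postprocessing} step that undoes the first. Since each stage is a thermal operation on the relevant registers, and a composition of thermal operations (suitably padded with thermal baths and energy-conserving unitaries) is again a thermal operation, the composite map is a legitimate $\mathcal{T}_{\rm SC}$. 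Throughout I would first reduce to the commuting case, as recalled above: take $\rho_{\rm S},\sigma_{\rm S}$ diagonal in the energy basis and (for the engineered catalyst) a trivial Hamiltonian, so that a thermal operation just means thermo-majorization and the whole problem becomes a statement about probability vectors and their Lorenz curves.

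For the core step I would use the structure behind the convex-split lemma. Let $\mu_{\rm C}$ be a ``flag plus $m$ registers'' state, $\mu_{\rm C}=\tfrac1m\sum_{j=1}^{m}|j\rangle\langle j|_{{\rm C}_0}\ot\sigma_{\rm S}^{\ot(j-1)}\ot\rho_{\rm S}^{\ot(m-j+1)}$, i.e.\ a staircase between $\sigma_{\rm S}$ and $\rho_{\rm S}$ with a flag register ${\rm C}_0$ recording the position of the step and with the $m$ register slots carrying the Hamiltonian of $\rm S$. Feeding in $\rho_{\rm S}$ turns this into a clean staircase over $m{+}1$ slots; a controlled cyclic permutation of the slots (conditioned on, and incrementing, the flag) then returns $\sigma_{\rm S}$ on $\rm S$ and $\mu_{\rm C}$ on the catalyst, each up to a single ``wrapped-around'' term, so the back-action on the catalyst is $\mathcal{O}(1/m)$; in a marginal picture this is precisely the content of the convex-split lemma, which bounds the disturbance in terms of $2^{D_{\max}(\rho_{\rm S}\|\sigma_{\rm S})}/m$. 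A controlled permutation is manifestly energy-conserving, hence a valid thermal operation, and choosing $m=\Theta(n)$ makes these core-step errors $\mathcal{O}(1/n)$.

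The preprocessing step carries the bulk of the work. I would show that $\omega_{\rm C}^{\ot n}$ can be converted by a thermal operation into a good approximation of $\mu_{\rm C}$ with $m=\Theta(n)$: the generalized free energies obey $D_\alpha(\omega_{\rm C}^{\ot n}\|\tau_{\rm C}^{\ot n})=n\,D_\alpha(\omega_{\rm C}\|\tau_{\rm C})$ and so grow linearly in $n$ for any non-trivial (i.e.\ $\omega_{\rm C}\neq\tau_{\rm C}$) catalyst, which, together with the concentration of the thermo-majorization curve of $\omega_{\rm C}^{\ot n}$ around its asymptotic shape (method of types / AEP), yields $\omega_{\rm C}^{\ot n}\succ_{\rm T}\mu_{\rm C}$ once the ``rate'' $m/n$ is fixed below the appropriate single-letter threshold. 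The finite-$n$ corrections to this comparison are controlled by the $\mathcal{O}(\sqrt n)$ Gaussian fluctuations of the Lorenz curve, which translate into an $\mathcal{O}(1/\sqrt n)$ cost that propagates to the catalyst through the preprocessing/postprocessing pair but not to the system, which only ever sees the essentially exact core step; assembling the contributions by the triangle inequality and monotonicity of the trace distance then gives $\eps{C}=\mathcal{O}(1/\sqrt n)$ and $\eps{S}=\mathcal{O}(1/n)$, with explicit constants read off from the type-counting and convex-split estimates.

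The hard part, as this makes clear, is reconciling the one-way nature of thermo-majorization with the demand that the catalyst be returned: a state reached from $\omega_{\rm C}^{\ot n}$ by a thermal operation generically cannot be mapped back, so a naive ``preprocess, act, un-preprocess'' scheme is inconsistent. The fix is to make the preprocessing (nearly) reversible---implementing it as an energy-conserving \emph{unitary} coupling $\rm C$ to a thermal bath that is left untouched by the core step and discharged again during postprocessing---and then to verify that the core step disturbs the joint catalyst-plus-bath state, not merely the catalyst marginal, by only $\mathcal{O}(1/\sqrt n)$; pinning down this error, and making the AEP/second-order estimates quantitative enough to extract the advertised constants, is the technical heart of the argument. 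A minor secondary point is that the convex-split bound requires $D_{\max}(\rho_{\rm S}\|\sigma_{\rm S})<\infty$, which one arranges by a small perturbation of $\sigma_{\rm S}$ to a full-rank state, at a cost that can be absorbed into the error terms.
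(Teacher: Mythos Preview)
Your three-stage architecture (unitary preprocessing on $\rm CB$, core step on $\rm SC$, inverse unitary) is exactly the paper's protocol, and you correctly isolate the key subtlety: reversibility forces the preprocessing to be an energy-conserving \emph{unitary} on catalyst-plus-bath rather than just a thermal channel (the paper's Lemma~\ref{lemma:gentleu}). The divergence is in the intermediate state and core map. The paper chooses $\eta_{\rm C}=\sigma_{\rm S}$, so the intermediate catalyst is the i.i.d.\ state $\sigma^{\ot m}$, and the core step is the uniformly random swap $\mathcal{T}^{(\mathrm{mix})}$; the convex-split lemma then bounds the back-action by $\sqrt{2^{D_{\max}(\rho\|\sigma)}/m}=\mathcal{O}(1/\sqrt m)$. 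Note that convex-split bounds the \emph{square} of the trace distance by $2^{D_{\max}}/m$, so your reading of it as an $\mathcal{O}(1/m)$ statement is off by a square root. Your flagged staircase $\mu_{\rm C}$ is not the convex-split construction at all---it is essentially the Duan state (\ref{def:duan}) that the paper reserves for Theorem~\ref{thm2}---and your controlled-swap wrap-around analysis does correctly give a strictly better $\mathcal{O}(1/m)$ core error.

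The real gap is the preprocessing. Because the paper's target $\sigma^{\ot m}$ is i.i.d., the conversion $\omega^{\ot n}\to\sigma^{\ot m}$ is a textbook i.i.d.-to-i.i.d.\ problem, and Lemma~\ref{lem:non-asymp} (moderate deviations, \cite{Chubb_2019}) supplies both the rate and a sub-exponential error $\delta(n)\le e^{-n^{\kappa}}$. Your $\mu_{\rm C}=\tfrac1m\bigoplus_j\sigma^{\ot(j-1)}\ot\rho^{\ot(m-j+1)}$ with $m=\Theta(n)$ is \emph{not} i.i.d., and nothing off the shelf gives $\omega_{\rm C}^{\ot n}\succ_{\rm T}\mu_{\rm C}$: the $\alpha$-free-energy comparison you invoke characterises \emph{catalytic} reachability, not direct thermo-majorization, and there is no factorisation shortcut (e.g.\ reaching $\mu_{\rm C}$ via $|1\rangle\ot\rho^{\ot m}$ would need $\rho^{\ot j}\succ_{\rm T}\sigma^{\ot j}$, precisely what fails in the embezzlement regime). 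Your ``Gaussian fluctuations of the Lorenz curve'' heuristic is designed for i.i.d.\ sources and targets and does not obviously control the thermo-majorization curve of a block mixture of $m$ distinct product states. (Contrast this with the paper's use of the Duan state in Theorem~\ref{thm2}: there $k$ is \emph{fixed}, so one converts $\omega^{\ot n}$ to many copies of a fixed-dimensional $\omega_k^{\rm D}$, which \emph{is} i.i.d.-to-i.i.d.) Finally, if your preprocessing really carries an $\mathcal{O}(1/\sqrt n)$ trace-distance error as you write, that error propagates to $\rm S$ through the core step by contractivity and spoils $\eps{S}=\mathcal{O}(1/n)$; the paper avoids this precisely because its $\delta(n)$ is sub-exponential.
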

\noindent The next theorem is more interesting as it relates to the regime when the partial order between states does not fully vanish. In this case, as long as the second laws are satisfied, a sufficient number of copies of any state can catalyze any state transformation with an error scaling sub-exponentially with the number of catalyst particles, i.e. genuine catalysis regime: 
\begin{theorem}
\label{thm2}
Let $\rho_{\rm S}$ and $\sigma_{\rm S}$ be two states with corresponding representations $\bm{p} = \emph{diag}[\rho_{\rm S}]$ and $\bm{q} = \emph{diag}[\sigma_{\rm S}]$ which satisfy:
\begin{align}
    \label{eq:seq_laws2}
    F_{\alpha}(\bm{p}, \bm{g}) > F_{\alpha}(\bm{q}, \bm{g}) \qquad \forall\, \alpha \geq 0,
\end{align}
where $\bm{g} = \emph{diag}[\tau_{\rm S}]$. Then, for any catalyst state $\omega_{\rm C}$ with  $\bm{c} = \emph{diag}[\omega_{\rm C}]$ and sufficiently large $n$ there exists a thermal operation $\mathcal{T}_{\rm SC}$ such that:
\begin{align}
    \mathcal{T}_{\rm SC}\left[\rho_{\rm S} \ot \omega_{\rm C}^{\ot n} \right] = \sigma_{\rm SC}',  
\end{align}
and the errors on the the system and the catalyst satisfy:
\begin{align}
    \eps{C} &:= \norm{\Tr_{\rm S} \left[\sigma_{\rm SC}'\right] - \omega_{\rm C}^{\ot n}}_{1} \leq \mathcal{O}\left(e^{- n^{\kappa}}\right), \\
    \label{eq:err_sys_gcat}
    \eps{S} &:= \norm{\Tr_{\rm C} \left[\sigma_{\rm SC}'\right] - \sigma_{\rm S}}_{1} = 0,
\end{align}
where $\kappa \in (0, 1)$ can be chosen arbitrarily. The explicit constants are provided in the Appendix. 
\end{theorem}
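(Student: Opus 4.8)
The plan is to make precise the three‑step ``preprocess / catalyse / postprocess'' protocol of Fig.~\ref{fig:1} and to reduce the whole statement to one about classical distributions and thermo‑majorization, the intermediate catalyst being engineered through the asymptotic equipartition property. \textbf{Reduction.} All states here are block‑diagonal in energy and, as recalled above, the catalyst Hamiltonian can be chosen trivial; hence a catalytic thermal operation on $\rm SC$ exists as soon as the thermo‑majorization curve of $\bm p\ot\bm c^{\ot n}$ (relative to $\tau_{\rm S}$ tensored with the uniform distribution $u$ on the $d_{\rm C}^{\,n}$‑dimensional catalyst space) lies everywhere above that of $\bm q\ot\bm\eta_n$. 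It therefore suffices to exhibit a distribution $\bm\eta_n$ on that space with
\begin{align}
\label{eq:plan-goal}
\bm p\ot\bm c^{\ot n}\ \succ_{\rm T}\ \bm q\ot\bm\eta_n,\qquad \norm{\bm\eta_n-\bm c^{\ot n}}_1=\mathcal O\!\left(e^{-n^{\kappa}}\right),
\end{align}
since the associated thermal operation sends $\rho_{\rm S}\ot\omega_{\rm C}^{\ot n}$ to the product $\sigma_{\rm S}\ot\eta_{\rm C}$, which has $\eps{S}=0$ and $\eps{C}=\norm{\eta_{\rm C}-\omega_{\rm C}^{\ot n}}_1$ exactly. (We take $\omega_{\rm C}$ to be genuinely resourceful, i.e.\ neither pure nor maximally mixed: tensoring $\bm p$ with a maximally mixed ancilla leaves its thermo‑majorization curve unchanged and with a pure one merely rescales it, so neither extreme can bridge a genuine catalytic gap and some such restriction is unavoidable.)

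\textbf{The protocol.} We produce $\bm\eta_n$ by composing three thermal operations on $\rm SC$, exactly as in Fig.~\ref{fig:1}: (i) a thermal operation acting on $\rm C$ alone --- a noisy operation, since $H_{\rm C}\propto\mathbb{1}$ --- that maps $\bm c^{\ot n}$ to a \emph{finely tuned} catalyst $\bm r_n$; (ii) the exact catalytic thermal operation $\bm p\ot\bm r_n\to\bm q\ot\bm r_n$; and (iii) a thermal operation on $\rm C$ alone mapping $\bm r_n$ to $\bm\eta_n$. Steps (i) and (iii), being governed by (relative) majorization, force $\bm c^{\ot n}\succ\bm r_n\succ\bm\eta_n$, so $\bm r_n$ must sit in a thin majorization shell around $\bm c^{\ot n}$; step (ii) requires $\bm r_n$ to be a genuine exact catalyst for $\bm p\to\bm q$, and the \emph{strictness} of the second laws (\ref{eq:seq_laws2}) is precisely what, via the sufficiency direction of Ref.~\cite{Brand_o_2015} for block‑diagonal states, guarantees that such an exact catalyst exists. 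The content of the theorem is that it can moreover be chosen inside the shell and $e^{-n^{\kappa}}$‑close to $\bm c^{\ot n}$.

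\textbf{Building $\bm r_n$ and bounding the errors.} This is the crux, and it is where typicality enters. If $\mathcal A_n^{\delta_n}$ denotes the $\delta_n$‑typical set of $\bm c^{\ot n}$, a Chernoff/large‑deviations estimate shows that the renormalised restriction of $\bm c^{\ot n}$ to $\mathcal A_n^{\delta_n}$ is $2^{-\Omega(n\delta_n^2)}$‑close to $\bm c^{\ot n}$ in trace distance, while on the typical set the spectrum is flat up to a factor $2^{\mathcal O(n\delta_n)}$ and is spread over exponentially many scales. One then perturbs this trimmed state, within the shell, so that its spectrum acquires the short ``staircase'' needed to smooth the finitely many crossings between the thermo‑majorization curves of $\bm p$ and $\bm q$; the strict gaps $F_\alpha(\bm p,\bm g)-F_\alpha(\bm q,\bm g)>0$ supply the slack for this smoothing, and the exponential spread of the spectrum of $\bm c^{\ot n}$ ensures that the reshaping does not leave the shell. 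Choosing $\delta_n\to 0$ with $n\delta_n^2\to\infty$ --- which amounts to $\delta_n\sim n^{(\kappa-1)/2}$ for a fixed $\kappa\in(0,1)$ --- keeps the trimmed state regular enough to be reshaped into a catalyst while making the trimming error, and hence $\eps{C}$, of order $e^{-n^{\kappa}}$; the threshold $n$ and all constants are collected in the Appendix. The main obstacle is exactly this step: proving that a catalyst confined to an $\mathcal O(e^{-n^{\kappa}})$ majorization shell around $\bm c^{\ot n}$ is still a working exact catalyst for \emph{every} $\bm p\to\bm q$ permitted by the strict second laws, and that the round trip (i)$\to$(ii)$\to$(iii) disturbs the catalyst only sub‑exponentially --- this trade‑off between the catalyst's ``richness'' and the achievable error is what prevents $\kappa$ from reaching $1$.
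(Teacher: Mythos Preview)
Your three-step architecture matches the paper's, but the heart of the argument --- the construction of the intermediate catalyst $\bm r_n$ --- is where your proposal and the paper's proof diverge, and where your proposal has a genuine gap.

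You require $\bm r_n$ to sit in a thin majorization shell around $\bm c^{\ot n}$ (so that $\bm c^{\ot n}\succ\bm r_n$ and $\bm r_n\succ\bm\eta_n$ with $\bm\eta_n$ close to $\bm c^{\ot n}$), and simultaneously to be an exact catalyst for $\bm p\to\bm q$. You propose to obtain this by trimming to the typical set and then reshaping the nearly-flat spectrum into a ``staircase'' that smooths the finitely many curve crossings. But you yourself flag this as ``the main obstacle'', and indeed no argument is given that such a reshaping exists with the required control: the strict $F_\alpha$ gaps guarantee that \emph{some} exact catalyst exists, not that one lies within an $e^{-n^\kappa}$ trace-distance ball of an arbitrary multi-copy state while still being majorized by it. This is not a detail to be deferred to the Appendix --- it is the entire content of the theorem.

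The paper sidesteps this difficulty by decoupling the two requirements. The intermediate catalyst is taken to be the Duan state $\omega_k^{\rm D}(\rho,\sigma)$ of Eq.~(\ref{def:duan}), which need not be close to $\omega_{\rm C}^{\ot n}$ at all. Two external inputs make this work: Lemma~\ref{lemma3} (adapted from \cite{jensen2019asymptotic}) shows that the strict second laws imply $\bm p^{\ot k}\succ_{\rm T}\bm q^{\ot k}$ for some finite $k$, whence the Duan construction gives an exact catalyst of fixed dimension $kd_{\rm S}^{k-1}$; and Lemma~\ref{lem:non-asymp} (the moderate-deviation conversion result of \cite{Chubb_2019}) converts $\omega_{\rm C}^{\ot n}$ into at least one copy of this Duan state with error $e^{-n^\kappa}$. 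Crucially, the return trip is \emph{not} a second forward noisy operation on $\rm C$ as you propose --- that would indeed trap you in the shell --- but the inverse of the preprocessing unitary applied to the \emph{same} bath register it used (the ``gentle unitary'' of Lemma~\ref{lemma:gentleu} in the Appendix ensures the error does not leak into the bath). Because the catalytic step is exact ($\nu(n)=0$), this unitary reversal recovers $\omega_{\rm C}^{\ot n}$ up to twice the preprocessing error, giving $\eps{C}\le 2\delta(n)\le \mathcal{O}(e^{-n^\kappa})$.

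So the ingredients you are missing are: (i) a concrete exact catalyst, namely the Duan state; (ii) the lemma linking the second laws to $k$-copy convertibility; and (iii) the observation that retaining the bath lets you undo the preprocessing \emph{unitarily} rather than via a fresh thermal operation. Your typicality route may be salvageable, but as written the crucial step is a conjecture rather than a proof.
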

\noindent The complete proofs of Theorems \ref{thm1} and \ref{thm2} are provided in the next section, with a few technical steps which we postpone to the Appendix.  
\subsection{The protocol}
\textit{\textbf{Preprocessing.}} Let $\rho_{\rm S}$ denote the initial state of the system. Our goal is to transform it into another state $\sigma_{\rm S}$ using $n$ copies of a catalyst $\omega_{\rm C}$ via thermal operations. The total state of the system, the catalyst and the heat bath is given by the product state:
\begin{align}
    \rho_{\rm SCB}^{(0)} = \rho_{\rm S} \ot \omega_{\rm C}^{\ot n} \ot \tau_{\rm B}^{\,}.
\end{align}
In the first step we transform $n$ copies of the catalyst $\omega_{\rm C}$ into $m$ copies of an arbitrary state $\eta_{\rm C}$. The particular form of this intermediate state will be specified later as it crucially depends on which regime of catalysis we choose. As we stated in the previous section, this conversion step can be accomplished with a sub-exponential error on the catalyst. To see this explicitly let us invoke the following result from Ref. \cite{Chubb_2019}:
\begin{lemma}[Multi-copy state conversion]
\label{lem:non-asymp}
\label{lemma_transf_error}
\label{eq:non-asymp}
There is a thermal operation $\mathcal{T}_{\rm C}$ which performs the transformation:
\begin{align}
    \label{eq:mapT}
    \mathcal{T}_{\rm C}[\omega^{\ot n}] = \widetilde{\eta}_m,  
\end{align}
such that: 
\begin{align}
 \delta(n) := \norm{\widetilde{\eta}_m - \eta^{\ot m}}_{1} \leq e^{-n^{\kappa}},
\end{align}
where $\kappa \in (0, 1)$ can be chosen arbitrarily and $m = n\cdot r_n$ with the conversion rate given by: 
\begin{align}
    \label{eq:rn_def}
    r_n = \frac{D(\omega||\tau)}{D(\eta||\tau)} - \mathcal{O}\left(\frac{1}{\sqrt{n^{1-\kappa}}}\right),
\end{align}
and $D(\rho||\sigma) := \Tr[\rho \log \rho] - \Tr [\rho \log \sigma]$ is the relative entropy.
\end{lemma}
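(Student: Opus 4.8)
The plan is to reduce the claim to a finite-size (moderate-deviation) analysis of the thermo-majorization relation $\succ_{\rm T}$ between the i.i.d.\ states $\omega^{\ot n}$ and $\eta^{\ot m}$. By the criterion of Ref.~\cite{Horodecki2013} recalled above, a thermal operation $\mathcal{T}_{\rm C}$ with $\mathcal{T}_{\rm C}[\omega^{\ot n}] = \widetilde{\eta}_m$ and $\|\widetilde{\eta}_m - \eta^{\ot m}\|_1 \le \delta(n)$ exists precisely when some block-diagonal state $\widetilde{\eta}_m$ that is $\delta(n)$-close in trace distance to $\eta^{\ot m}$ has a thermo-majorization curve lying nowhere above that of $\omega^{\ot n}$. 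So it suffices to show that, for the claimed $m = n r_n$, the trace-distance ball of radius $\delta(n)$ around $\eta^{\ot m}$ contains a state whose curve is dominated by the curve of $\omega^{\ot n}$ — i.e.\ to establish a \emph{smoothed} thermo-majorization $\omega^{\ot n} \succ_{\rm T} \eta^{\ot m}$ up to error $\delta(n)$.

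The next step is to analyse these curves via the information spectrum of the per-copy log-likelihood ratios. Writing $W_j := \log\!\big(c_{i_j}/g_{i_j}\big)$ for the $j$-th copy, these are i.i.d.\ with mean $D(\omega\|\tau)$ and variance $V(\omega\|\tau) := \mathrm{Var}_{\omega}\!\big[\log(c/g)\big]$, and the location of the ``bend'' of the thermo-majorization curve of $\omega^{\ot n}$ is governed by the tail probabilities $\Pr\!\big[\tfrac1n\sum_j W_j \gtrless t\big]$; the analogous statement holds for $\eta^{\ot m}$ with $D(\eta\|\tau)$ and $V(\eta\|\tau)$. To leading exponential order in the relevant region this bend — and hence the whole curve — is fixed by the single number $D(\cdot\|\tau)$, so the nesting holds at leading order iff $m D(\eta\|\tau) \le n D(\omega\|\tau)$. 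The $\delta(n)$-smoothing, which lets one discard probability weight $\delta(n)$ from the most demanding region of the target curve, together with the Gaussian-scale fluctuations of the $W_j$'s, refines this to a scalar comparison of the appropriate smoothed extremal free energies, whose i.i.d.\ second-order behaviour is controlled by $V(\omega\|\tau)$, $V(\eta\|\tau)$ and $\ln(1/\delta(n))$.

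It then remains to evaluate everything in the moderate-deviation window $\delta(n)=e^{-n^{\kappa}}$, $\kappa\in(0,1)$. Using Cram\'er/Bahadur--Rao tail estimates in their moderate-deviation form, the relevant smoothed quantity of $\omega^{\ot n}$ behaves as $nD(\omega\|\tau) - \sqrt{2nV(\omega\|\tau)\ln(1/\delta(n))}\,(1+o(1))$, and likewise for $\eta^{\ot m}$; since $\ln(1/\delta(n))=n^{\kappa}$ and $m = \Theta(n)$, both corrections are of order $\sqrt{n^{1+\kappa}} = n\cdot\mathcal{O}(1/\sqrt{n^{1-\kappa}})$. Solving the nesting inequality for the largest admissible $m$ then yields $m = n r_n$ with $r_n = D(\omega\|\tau)/D(\eta\|\tau) - \mathcal{O}(1/\sqrt{n^{1-\kappa}})$, the integer part of $m$ and all $o(\cdot)$ remainders being absorbed into the $\mathcal{O}$-term; here one uses $0 < D(\eta\|\tau),\,D(\omega\|\tau) < \infty$ so that the rate is well defined, the degenerate cases $\omega=\tau$ or $\eta=\tau$ being trivial.

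I expect the main obstacle to be making the second step fully rigorous: the condition ``curve of $\omega^{\ot n}$ lies above the $\delta(n)$-smoothed curve of $\eta^{\ot m}$ everywhere'' is a constraint along the entire abscissa, not just at the endpoints where the extremal ($\alpha\to 0,\infty$) divergences live, so one must show that concentration makes a single endpoint-type smoothed-free-energy inequality the binding one, with all other points satisfied with room to spare of order $o(n^{(1+\kappa)/2})$. This is exactly where a genuine moderate-deviation estimate — sharper than the central limit theorem but not as crude as a large-deviation bound — is required, and it is what forces $\kappa$ to lie strictly inside $(0,1)$: $\kappa<1$ keeps the tail estimates in the Gaussian-dominated regime, while $\kappa>0$ ensures $\delta(n)\to 0$.
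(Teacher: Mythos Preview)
The paper does not prove this lemma at all: both in the main text and in the Appendix (where it reappears as Lemma~\ref{lemma:nonasymp}) it is simply \emph{quoted} as a result from Ref.~\cite{Chubb_2019} (Chubb--Tomamichel--Korzekwa), with the explicit rate formula and error scaling imported wholesale. So there is no ``paper's own proof'' to compare against; the paper treats this as a black-box input.

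Your sketch is in fact a reasonable outline of the strategy used in Ref.~\cite{Chubb_2019}: reduce approximate thermal convertibility to a smoothed thermo-majorization comparison, control the Lorenz/thermo-majorization curves of i.i.d.\ states through the information spectrum of the log-likelihood ratios $\log(p_i/g_i)$, and then work in the moderate-deviation window $\delta(n)=e^{-n^{\kappa}}$ so that the second-order correction is $\sqrt{nV\ln(1/\delta)}=\Theta(n^{(1+\kappa)/2})$, giving the claimed $\mathcal{O}(1/\sqrt{n^{1-\kappa}})$ deficit in the rate. You are also right that the genuinely hard part is upgrading the endpoint (smoothed $D_0/D_\infty$-type) comparison to full curve domination; in \cite{Chubb_2019} this is handled by a careful tail-curve analysis rather than a single scalar inequality, and it is precisely this step that requires the moderate-deviation refinement of the CLT and forces $\kappa\in(0,1)$. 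So your proposal is on the right track, but be aware that turning it into an actual proof amounts to reproving a substantial portion of \cite{Chubb_2019}; for the purposes of this paper, citing that reference (as the authors do) is the intended route.
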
 \noindent
In what follows we will not be directly  interested in the the thermal operation $\mathcal{T}_{\rm C}$ but in the unitary $U_{\rm CB}$ which generates it as in (\ref{def_to_eq}). In particular, we will compose this unitary alongside unitaries from the other two steps of the protocol to form a single thermal operation (which is different from just composing the thermal operations of the steps).  Moreover, recall that we are working with block-diagonal states; for such states any thermal operation can be realized using a `gentle' unitary which ($i$) permutes energy levels inside the subspaces of equal energy and ($ii$) leads to the same error on $\rm C$ and the joint system $\rm CB$ (see the Appendix for details). In particular, this second observation will allow us to reverse the action of the unitary in order to recover the original catalyst with a sufficiently small error.

It is important to emphasize that Lemma \ref{lem:non-asymp} \emph{does not} require that the number of catalyst particles $n$ goes to infinity. In fact, the lemma is valid in the intermediate regime, i.e. when the number of copies is large but still finite. In that case the conversion rate $r_n$ will generally be smaller than the asymptotic one ($r_{\infty}$). However, this discrepancy can be quantified by looking at the second order corrections (see the Appendix and Ref. \cite{Chubb_2019} for more details). For our purposes this means that in order to achieve the rate $r_{n} \approx r_{\infty}$ \emph{and} the error scaling $\approx e^{-n^{\kappa}}$ it is enough to consider large but not infinitely large $n$. What ``large'' here means depends on the states involved in the transformation and can be found by estimating the second order corrections in (\ref{eq:rn_def}), i.e. we can say that $n$ is large  when $\mathcal{O}(1/\sqrt{n^{1-\kappa}}) \ll 1$. Naturally, including more terms in the expansion would allow for a better estimate of the exact rate, and even smaller numbers of copies comprising the catalyst.

Let us now consider the unitary  $U_{\rm CB}$ that implements the map that performs the transformation (\ref{eq:mapT}). There are many unitaries which can implement this map - for our purposes we use the one which does not propagate the error to the environment, i.e. satisfies the property ($ii$). This in turn implies:
\begin{align}
    \label{eq:err_del}
    \norm{\Tr_{\rm B} \mathcal{U}_{\rm CB}[\omega^{\ot n}_{\rm C} \ot \tau_{\rm B}]  - \eta_{\rm C}^{\ot m}}_1 \leq \delta(n),
\end{align}
where $\mathcal{U}_{\rm CB}[\cdot] := U_{\rm CB}(\cdot)U_{\rm CB}^{\dagger}$. The state of the system, the catalyst and the environment after the first step of the protocol is given by:
\begin{align}
    \rho_{\rm SCB}^{(1)} =  (\mathcal{I}_{\rm S} \ot \mathcal{U}_{\rm CB})[\rho_{\rm SCB}^{(0)}].
\end{align}
If we now trace out the bath, the state of the system and the catalyst for large $n$ will be close to $\rho_{\rm S} \ot \eta_{\rm C}^{\ot m}$, with $m = r_n \cdot n$ being linearly proportional to $n$. That is, the conversion rate  $r_n$ up to the leading order depends only on the states $\omega_{\rm C}$, $\eta_{\rm C}$ and their respective thermal states. The precise form of the state $\eta_{\rm C}$ will be specified later when we focus on specific regimes of catalysis.

\textit{\textbf{Catalytic transformation.}}
In the next step we use the preprocessed catalyst state to facilitate the main thermal operation $\mathcal{E}_{\rm SC}$ which transforms $\rho_{\rm S}$ into $\sigma_{\rm S}$ and (potentially) perturbs the state of the catalyst. Depending on the acceptable size of the backaction on the catalyst we will construct the transformation $\mathcal{E}_{\rm SC}$ differently. With this in mind, the total state of the system $\rm SCB$ after this step reads:
\begin{align}
    \rho_{\rm SCB}^{(2)} = (\mathcal{E}_{\rm SC} \ot \mathcal{I}_{\rm B})[\rho_{\rm SCB}^{(1)}].
\end{align}
As this can differ from our target state $\sigma_{\rm S} \ot \eta_{\rm C}^{\ot m}$, we also define the transformation error induced on the catalyst in this step to be:
\begin{align}
    \label{eq:err_nu}
    \nu(n) := \norm{\Tr_{\rm S}\mathcal{E}_{\rm SC}[\rho_{\rm S} \ot \eta_{\rm C}^{\ot m}] - \eta_{\rm C}^{\ot m}}_{1}.
\end{align}
Let us note that this is also the only time in the protocol in which we modify the state of the system $\rm S$. Hence, the transformation error on the system can be entirely associated with the map $\mathcal{E}_{\rm SC}$. 

\textit{\textbf{Postprocessing.}}
In the last step we apply the inverse of the unitary transformation which we applied in the preprocessing stage. As a result, the initial state of the catalyst should be approximately recovered, given that it was not perturbed too much during the previous step. To do so we simply apply the inverse unitary channel $\mathcal{U}^{\dagger}_{\rm CB}$ to the joint state of the catalyst and the heat bath. Importantly, in order to do so we use the part of the heat bath which had only interacted with the catalyst in the preprocessing stage (see Fig. \ref{fig:1}). This allows the catalyst to be transformed back to its initial state again with a small error. In this way the final state of the three systems becomes:
\begin{align}
    \rho_{\rm SCB}^{(3)} =  (\mathcal{I}_{\rm S} \ot \mathcal{U}^{\dagger}_{\rm CB})[\rho_{\rm SCB}^{(2)}].
\end{align}
Notice that this reversal is possible only if we keep the state of the bath from the preprocessing step. This implies that the correlations with the heat bath which are created during the preprocessing step play an important role in the whole protocol and allow to keep the final error on the catalyst acceptably small. Crucially, this composed protocol is still a thermal operation. 

\textit{\textbf{Analysis of the protocol.}}
We now move on to quantifying the total disturbance induced on the subsystems. The final state of the catalyst can be written as: 
\begin{align}
    \omega_{\rm C}' &= \Tr_{\rm SB}[\rho_{\rm SCB}^{(3)}] \\
    &= \Tr_{\rm B}\left[\mathcal{U}^{\dagger}_{\rm CB} \left(\mathcal{E}^{(\rho)}_{\rm C} \ot \mathcal{I}_{\rm B}\right) \mathcal{U}_{\rm CB}[\omega_{\rm C}^{\ot n} \ot \tau_{\rm B}]\right],
\end{align}
where we labelled the effective channel which acts on the catalyst with $\mathcal{E}_{\rm C}^{(\rho)}[\omega] := \Tr_{\rm S}\mathcal{E}_{\rm SC}[\rho_{\rm S} \ot \omega_{\rm C}]$. In a similar way we find the final state of the system $\rm S$ to be:
\begin{align}
    \rho_{\rm S}' = \Tr_{\rm CB}[\rho_{\rm SCB}^{(3)}] = \Tr_{\rm C} \mathcal{E}_{\rm SC}\left[\rho_{\rm S} \ot \mathcal{T}_{\rm C}[\omega_{\rm C}^{\ot n}]\right],
\end{align}
where $\mathcal{T}_{\rm C}$ is the thermal operation which performs the conversion from Lemma \ref{lem:non-asymp} and is related to the unitary $U_{\rm CB}$ via (\ref{def_to_eq}). With this in mind we can now find the final transformation errors on the system $\rm S$ and the catalyst $\rm C$. Recall that our main goal was to perform the transformation $\rho_{\rm S} \rightarrow \sigma_{\rm S}$ while keeping the $n$-copy catalyst state $\omega_{\rm C}^{\ot n}$ approximately undisturbed. Using the triangle inequality (see Appendix for details) and standard algebraic manipulations it can be shown that the errors on the system and the catalyst satisfy:
\begin{align}
    \label{def:eps_S}
    \eps{S} &:= \norm{\rho_{\rm S}' - \rho_{\rm S}}_1, \\
    \label{def:eps_C}
    \eps{C} &:= \norm{\omega_{\rm C}' - \omega_{\rm C}^{\ot n}}_{1} \leq 2  \delta(n) + \nu(n),
\end{align}
where $\delta(n)$ is related to the unitary $U_{\rm SB}$ used in the pre- and postprocessing steps (\ref{eq:err_del}) and $\nu(n)$ comes from the channel $\mathcal{E}_{\rm SC}$ (\ref{eq:err_nu}). Notice that the two contributions in (\ref{def:eps_C}) account for all possible types of error incurred on the catalyst during the main protocol. In particular, the term $\delta(n)$ quantifies both the error due to the failure in preparing the desired intermediate state $\eta_{\rm C}^{\ot m}$ using unitary $U_{\rm CB}$ in the preprocessing step, as well as the failure in reversing the action of the unitary $U_{\rm CB}$ in the postprocessing step. This can happen for example when the action of the the channel $\mathcal{E}_{\rm SC}$ in the catalytic step significantly disturbs the catalyst. On the other hand, the term $\nu(n)$ is related solely to the disturbance applied to the catalyst in the catalytic step and, as we shall see, changes depending on the specific regime of catalysis one is interested in.

So far we have treated the intermediate state $\eta_{\rm C}$ and the map $\mathcal{E}_{\rm SC}$ as parameters of the main protocol. Depending on the specific choice of these parameters we can now address different regimes of catalysis. In the remaining part of the paper we will specialize the above protocol first to the embezzlement regime and then to the regime of genuine catalysis. 

\subsection{Embezzlement regime}
We begin by applying our protocol to the embezzlement regime, i.e. when the partial order between states completely vanishes and all transformation become possible. 
Even though embezzlement is not a proper form of catalysis, it is still an interesting phenomenon which has found several important applications mostly in the resource theory of pure-state entanglement. The power of embezzling has been exploited in several areas of quantum information, such as coherent state exchange protocols \cite{leung2008coherent} or entangled projection games  \cite{dinur2013parallel}. Moreover, embezzlement can be also viewed as a protocol hiding quantum states from external observers. With this interpretation it was used to prove the quantum version of the reverse Shannon theorem \cite{Berta_2011,Bennett_2014}.

Today embezzlement is still a mysterious concept and its full significance in a thermodynamic context still constitutes an important open problem. Even though its role is not well understood, quantifying which states can be used as embezzlers is a very relevant problem. Recent studies revealed a few families of universal embezzling states, both in the context of the resource theory of entanglement \cite{leung2008coherent} and thermodynamics \cite{Ng_2015}. Such universal embezzlers have the power to ''catalyze'' any state transformation. In the case of the resource theory of entanglement, a further study exposed another family of universal embezzling states and some of their properties where examined in  \cite{leung2014characteristics}. In general, however, very few such families of universal embezzlers are known, and the effects related to the dimension, entropy or energy of the embezzler have been hardly studied. 

The main technical tool we will make use of in this section is the \emph{convex-split lemma}, a recently discovered result from quantum communication theory. This is an important tool which allows us to prove that $n$ copies of \emph{any} state can serve as a universal embezzler, i.e. can help in facilitating any state transformation. As we will see, in this case the error on the catalyst, up to the leading order, is proportional to $1/ \sqrt{n}$ and so approaches zero as $n \rightarrow \infty$. Since this rate is slower than the boundary specified by (\ref{def:emb_bnd}), the transformation should be associated with the embezzlement regime.

We begin by specifying the preprocessed catalyst state $\eta_{\rm C}$. We choose it to be precisely equal to the target state of the system $\sigma_{\rm S}$, that is:
\begin{align}
    \eta_{\rm C}^{\ot m} = \sigma_{\rm C}^{\ot m}.
\end{align}
In order to specify the channel for the catalytic step, $\mathcal{E}_{\rm SC}$, let us consider the following mixing process acting on an $(m+1)-$partite system:
\begin{align}
    \label{eq:mix_perm}
   \mathcal{T}^{(\text{mix})} [\cdot] = \frac{1}{m+1} \sum_{i=0}^{m} S_{(0, i)} (\cdot) S_{(0,i)}^{\dagger},
\end{align}
where $S_{(i, j)}$ is a unitary which swaps subsystems $i$ and $j$ leaving all the remaining subsystems untouched, i.e.:
\begin{align}
    S_{(i, j)} \ket{\ldots a_i, \ldots a_j, \ldots} = \ket{\ldots a_j, \ldots a_i, \ldots}.
\end{align}
This type of operation naturally preserves the thermal state (or any other state which is combined of identically distributed and independent copies), hence it is also a valid thermal operation. We will apply this map to the state of the system $\rho_{\rm S}$ (treated as the zeroth subsystem) and $m$ copies of the preprocessed catalyst state $\eta_{\rm C} = \sigma_{\rm C}$ (treated as the remaining $m$ subsystems). The resulting state is:
\begin{align}
    \label{eq:con_spl_sta}
    \mathcal{T}_{\mathrm{SC}} ^{\text{(mix)}}\left[\rho_{\mathrm{S}} \ot \sigma_{\mathrm{C}}^{\ot m}\right] \!= &\frac{1}{m+1}\Big(\rho_{\mathrm{S}} \ot \sigma_{\mathrm{C}_{1}} \ot \ldots \ot \sigma_{\mathrm{C}_{m}}+ \ldots \nonumber
    \\ 
    & \ldots + \sigma_{\mathrm{S}} \ot \sigma_{\mathrm{C}_{1}} \ot \ldots \ot \rho_{\mathrm{C}_{m}}\Big), 
\end{align}
where now $\mathrm C = \mathrm C_1 \mathrm C_2 \ldots \mathrm C_m$. It can be easily verified (see the Appendix) that choosing this particular transformation in the main protocol, i.e. $\mathcal{E}_{\rm SC} = \mathcal{T}_{\rm SC}^{\text{(mix)}}$, allows to bound the error term $\nu(n)$ as:
\begin{align}
    \label{eq:nu_conv_split_mt}
    \nu(n) & \leq \norm{ \mathcal{T}^{\text{(mix)}}_{\rm SC} [\rho_{\rm S} \ot \sigma_{\rm C}^{\ot m}] - \sigma_{\rm S} \ot \sigma_{\rm C}^{\ot m}}_{1}, 
\end{align}
Let us now present the main technical tool of this section, that is the convex-split lemma adapted from Ref. \cite{Anshu_2017}:
\begin{lemma}[Convex-split]
\label{lemma2}
Let $\rho$ and $\sigma$ be two quantum states satisfying $\emph{supp}(\rho) \subseteq \emph{supp}(\sigma)$. Then for any $m \geq 1$ it holds that: 
\begin{align}
    \norm{\mathcal{T}_{\mathrm{SC}} ^{\emph{(mix)}}\left[\rho_{\mathrm{S}} \ot \sigma_{\mathrm{C}}^{\ot m}\right] - \sigma_{\rm S} \ot \sigma_{\rm C}^{\ot m}}_{1}^2 \leq  \frac{2^{D_{\infty}(\rho||\sigma)}}{m} .
\end{align}
where $D_{{\infty}}(\rho||\sigma)$ is the max-relative entropy corresponding to $\alpha \rightarrow \infty$ in the definition (\ref{def:rel_ent}).
\end{lemma}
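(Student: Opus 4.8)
The plan is to prove the bound directly by a second–moment estimate; this route, unlike going through Pinsker's inequality, produces exactly the constant in the statement. Throughout write $N:=m+1$ for the total number of registers and $\tau:=\mathcal{T}_{\mathrm{SC}}^{(\mathrm{mix})}[\rho_{\mathrm S}\ot\sigma_{\mathrm C}^{\ot m}]$. The first step is the structural observation that, because $S_{(0,0)}=\mathbb{1}$ and each $S_{(0,i)}$ simply moves the register carrying $\rho$ into slot $i$ while the remaining slots already carry $\sigma$, the channel $\mathcal{T}^{(\mathrm{mix})}$ outputs the uniform mixture over the $N$ slots of the state ``$\rho$ in that slot, $\sigma$ in all others''. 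Consequently
\begin{align}
    \tau-\sigma^{\ot N}=\frac{1}{N}\sum_{j=1}^{N}\big(\rho^{(j)}-\sigma^{(j)}\big)\ot\bigotimes_{i\neq j}\sigma^{(i)},
\end{align}
a sum of traceless terms that vanishes in the mean --- the source of the eventual $m^{-1/2}$ decay.

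Since every state appearing in this paper is block--diagonal in the energy eigenbasis, $\rho$ and $\sigma$ commute, and I will carry out the argument in this (classical) case, commenting on the general one at the end. Regard $\rho,\sigma$ as probability vectors with $\mathrm{supp}(\rho)\subseteq\mathrm{supp}(\sigma)$ and put $L(x):=\rho(x)/\sigma(x)$, so that the $\alpha\to\infty$ limit of (\ref{def:rel_ent}) gives $\max_x L(x)=2^{D_{\infty}(\rho\|\sigma)}$. Then $\tau(\vec x)/\sigma^{\ot N}(\vec x)=\tfrac1N\sum_{j=1}^N L(x_j)$, hence
\begin{align}
    \norm{\tau-\sigma^{\ot N}}_1=\mathbb{E}_{\sigma^{\ot N}}\!\big[\,|Y|\,\big],\qquad Y:=\frac1N\sum_{j=1}^N\big(L(x_j)-1\big),
\end{align}
and by the Cauchy--Schwarz inequality $\norm{\tau-\sigma^{\ot N}}_1^2\le\mathbb{E}_{\sigma^{\ot N}}[Y^2]$.

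The computation then reduces to a variance. Under $\sigma^{\ot N}$ the coordinates are i.i.d. with law $\sigma$, and $\mathbb{E}_\sigma[L(x_j)-1]=\sum_x\sigma(x)\tfrac{\rho(x)}{\sigma(x)}-1=0$, so the $N$ summands of $Y$ are independent and mean zero. Therefore
\begin{align}
    \mathbb{E}_{\sigma^{\ot N}}[Y^2]=\frac1N\,\mathrm{Var}_\sigma(L)=\frac1N\Big(\sum_x\frac{\rho(x)^2}{\sigma(x)}-1\Big)\le\frac1N\big(2^{D_{\infty}(\rho\|\sigma)}-1\big),
\end{align}
using $\sum_x\rho(x)^2/\sigma(x)=\sum_x\rho(x)L(x)\le\max_x L(x)$. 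With $N=m+1$ and the elementary inequality $(2^{D_{\infty}(\rho\|\sigma)}-1)/(m+1)\le 2^{D_{\infty}(\rho\|\sigma)}/m$, this gives $\norm{\tau-\sigma^{\ot N}}_1^2\le 2^{D_{\infty}(\rho\|\sigma)}/m$, as claimed.

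The only genuinely delicate point is the general quantum statement of the lemma: there one must replace the likelihood ratio $L(x)$ by the relative--density operator $\sigma^{-1/2}\rho\,\sigma^{-1/2}$ --- well defined on $\mathrm{supp}(\sigma)$ precisely because $\mathrm{supp}(\rho)\subseteq\mathrm{supp}(\sigma)$ --- and the scalar Cauchy--Schwarz step by its $\sigma^{\ot N}$--weighted operator version, after which the ``variance'' must be expanded keeping track of the non-commuting operator ordering, exactly along the lines of Ref.~\cite{Anshu_2017}. For the present work the commuting computation above is already sufficient, since all relevant states are diagonal in the energy basis.
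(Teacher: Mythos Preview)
Your argument is correct in the commuting case, which is all the paper actually requires since every state it considers is block-diagonal in the energy eigenbasis. Note, however, that the paper itself does not prove this lemma: both in the main text and in the appendix it is simply stated as ``adapted from Ref.~\cite{Anshu_2017}'', so there is no in-paper derivation to compare against line by line.

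What you have supplied is a clean $\chi^2$/second-moment argument: you recognise that $\tau/\sigma^{\otimes N}$ is the sample average of i.i.d.\ likelihood ratios, bound the trace norm by the $\chi^2$-divergence via Cauchy--Schwarz, and compute the latter exactly as $\mathrm{Var}_\sigma(L)/N\le(2^{D_\infty}-1)/(m+1)$ from independence. This is more elementary than the relative-entropy-plus-Pinsker route you allude to, yields precisely the constant $2^{D_\infty}/m$ appearing in the statement, and makes the law-of-large-numbers mechanism behind the $1/\sqrt{m}$ decay completely transparent. Your remark that the fully quantum version follows by replacing $L(x)$ with the relative-density operator $\sigma^{-1/2}\rho\,\sigma^{-1/2}$ and repeating the variance computation operator-wise is also correct in outline; for the purposes of this paper the classical computation already suffices.
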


\noindent Using the lemma from above we can bound the error term from (\ref{eq:nu_conv_split_mt}) as:
\begin{align}
 \nu(n) &\leq  \sqrt{\frac{2^{D_{\infty}(\rho||\sigma)}}{r_n}}  \cdot \frac{1}{\sqrt{n}} \\
 &= c_n \cdot \frac{1}{\sqrt{n}},
 \end{align}
where $c_n$ for fixed input and output states $\rho$ and $\sigma$ is effectively constant for large $n$. Consequently, the error term $\nu(n)$ scales with $n$ as $\nu(n) \sim n^{-1/2}$ and using (\ref{def:eps_C}) we find that:
\begin{align}
    \eps{C} \leq \mathcal{O}\left( \frac{1}{\sqrt{n}}\right)
\end{align}
Note that this procedure does not assume anything particular about the states $\rho$, $\sigma$ and $\omega$. The only technical assumption is that $\text{supp}(\rho) \subseteq \text{supp}(\sigma)$, which can be achieved for any states $\rho$ and $\sigma$ by an arbitrarily small perturbation. This means that by choosing sufficiently large $n$ we can carry out \emph{any} state transformation on the system, at the same time keeping an arbitrarily small error on the catalyst. In this way we have proven Theorem \ref{thm1} from the beginning of this section.  

\subsection{Genuine catalysis regime}
Let us now consider the other regime of inexact catalysis, that is when the error scales with the catalyst dimension slow enough to maintain the partial order between states. We will again apply our main protocol with a specific choice of the intermediate state $\eta_{\rm C}$ and transformation $\mathcal{E}_{\rm SC}$. 

Recall that we can represent the states $\rho_{\rm S}$ and $\sigma_{\rm S}$ with probability vectors $\bm{p}$ and $\bm{q}$ as described in (\ref{eq:res_rep}). We are going to assume that these two states satisfy the second laws of thermodynamics (\ref{eq:sec_laws}). This means that there exists a catalyst which can be used to transform $\bm{p}$ into $\bm{q}$ without any disturbance on the catalyst. Interestingly, this is not the only implication guaranteed by the second laws. It turns out that these family of conditions also provide a sufficient condition for a multi-copy state transformation between $\rho_{\rm S}$ and $\sigma_{\rm S}$. This is captured by a recent result from majorization theory \cite{jensen2019asymptotic} (Proposition 3.2.7). For our purposes this result can be adapted to the thermodynamic case by using the so-called embedding map introduced in \cite{Brand_o_2015}. Intuitively, the embedding map is is an operation which allows to be translated between the microcanonical and macrocanonical descriptions of a thermodynamic system. Leaving the technical details of the proof to the Appendix 
, here we just present the lemma which we adapted from \cite{jensen2019asymptotic}:
\begin{lemma}
\label{lemma3}
Let $\rho_{\rm S} = \emph{diag}[\bm{p}]$ and $\sigma_{\rm S} = \emph{diag}[\bm{q}]$ be two quantum states with dimension $d_{\rm S}$, Hamiltonian $H_{\rm S}$ and a corresponding thermal state $\tau_{\rm S}^{\,} = \emph{diag}[\bm{g}]$ such that:
\begin{align}
    \label{lem3:ineq}
    F_{\alpha}(\bm{p}, \bm{g}) \geq F_{\alpha}(\bm{q}, \bm{g}) \qquad \forall \, \alpha \geq 0.
\end{align}
Then, for sufficiently large $k$, the following holds:
\begin{align}
\label{eq:mcopy}
\bm{p}^{\ot k} \succ_{\rm T} \bm{q}^{\ot k}.
\end{align}
\end{lemma}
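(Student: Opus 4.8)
The plan is to reduce the statement about thermo‑majorization to a statement about ordinary majorization, and then to invoke the asymptotic‑majorization result of Ref.~\cite{jensen2019asymptotic}. The main vehicle is the embedding map $\Gamma$ of Ref.~\cite{Brand_o_2015}: after approximating the Gibbs vector $\bm g$ by one with rational entries $g_i = d_i/D$ with $D = \sum_i d_i$ (the error of this approximation being controlled by a continuity argument deferred to the Appendix), one maps a distribution $\bm p$ to the $D$‑dimensional vector $\Gamma(\bm p)$ obtained by replacing each entry $p_i$ with $d_i$ copies of $p_i/d_i$. Three facts make this useful: (i) $\Gamma$ turns thermo‑majorization with respect to $\bm g$ into plain majorization, $\bm p \succ_{\rm T} \bm q \iff \Gamma(\bm p)\succ\Gamma(\bm q)$; (ii) $\Gamma$ is multiplicative, $\Gamma(\bm p^{\ot k}) = \Gamma(\bm p)^{\ot k}$, since the Gibbs state and the integers $d_i$ factorize over copies; and (iii) the $\alpha$‑free energies become Rényi entropies of the embedded vectors, $D_\alpha(\bm p\|\bm g) = \log D - H_\alpha(\Gamma(\bm p))$, so that hypothesis \eqref{lem3:ineq} is equivalent to $H_\alpha(\Gamma(\bm p)) \le H_\alpha(\Gamma(\bm q))$ for all $\alpha \ge 0$ — including the limits $\alpha\to 0$, which reads $|\mathrm{supp}\,\Gamma(\bm p)|\le|\mathrm{supp}\,\Gamma(\bm q)|$, and $\alpha\to\infty$, which reads $\max_j\Gamma(\bm p)_j\ge\max_j\Gamma(\bm q)_j$.

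With $\bm x := \Gamma(\bm p)$ and $\bm y := \Gamma(\bm q)$ this reduces the lemma to the purely classical claim: if $H_\alpha(\bm x)\le H_\alpha(\bm y)$ for every $\alpha\in[0,\infty]$, then $\bm x^{\ot k}\succ\bm y^{\ot k}$ for all sufficiently large $k$. This is exactly Proposition 3.2.7 of Ref.~\cite{jensen2019asymptotic}. Its proof proceeds via the method of types: one groups the entries of $\bm y^{\ot k}$ according to the empirical distribution of the corresponding length‑$k$ string, uses that the sorted partial sums of a tensor power are, up to sub‑exponential corrections, governed by the single‑copy Rényi quantities, and shows that the strict inequalities $H_\alpha(\bm x) < H_\alpha(\bm y)$ on the interior $\alpha\in(0,\infty)$ open a gap wide enough to absorb these finite‑$k$ corrections; the boundary orders $\alpha=0,\infty$ handle the two extreme partial sums (support size and largest element) that the interior argument does not reach. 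Multiplying through with the chain rule $H_\alpha(\bm z^{\ot k})=kH_\alpha(\bm z)$ then yields the many‑copy majorization.

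Finally I would reconcile the ``$\ge$'' in the hypothesis with the strict inequalities the asymptotic step consumes: if $H_\alpha(\bm x)=H_\alpha(\bm y)$ for every $\alpha$, then $\bm x$ and $\bm y$ coincide up to permutation (the full family of Rényi entropies fixes the sorted distribution via its power sums), so $\bm p^{\ot k}\succ_{\rm T}\bm q^{\ot k}$ holds trivially; otherwise one runs the argument above on the strict inequalities that do hold. I expect the genuinely delicate point to be the uniform‑in‑$\alpha$ control of the sub‑exponential corrections in the type decomposition — guaranteeing that a single $k$ works simultaneously for \emph{every} partial sum — together with the book‑keeping required to make the rational‑Gibbs embedding rigorous in the limit of irrational $\bm g$; both are the steps I would relegate to the Appendix.
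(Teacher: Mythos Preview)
Your approach is essentially the paper's: reduce thermo-majorization to ordinary majorization via the embedding map of Ref.~\cite{Brand_o_2015} (with the rational-Gibbs assumption), translate the free-energy hypotheses into R\'enyi-entropy inequalities, and invoke Proposition~3.2.7 of Ref.~\cite{jensen2019asymptotic}. On that structural level there is nothing to add.

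The one place your argument diverges from the paper --- and where it has a genuine gap --- is the passage from the non-strict hypothesis $H_\alpha(\bm x)\le H_\alpha(\bm y)$ to the strict input the cited proposition actually consumes. Your dichotomy ``either $H_\alpha(\bm x)=H_\alpha(\bm y)$ for \emph{every} $\alpha$, or run the argument on the strict inequalities that do hold'' does not close: the intermediate case where some $\alpha$ give equality and others are strict (e.g.\ equal maximal entries, so $H_\infty(\bm x)=H_\infty(\bm y)$, but different remaining spectra) is neither covered by the ``coincide up to permutation'' branch nor by a direct appeal to the proposition, which needs strictness across the full range. The paper handles this instead by a perturbation-and-limit argument: replace $\bm q$ by $\bm q_\epsilon:=(1-\epsilon)\bm q+\epsilon\,\bm u$, use strict Schur-concavity of $H_\alpha$ for $\alpha>0$ to obtain $H_\alpha(\bm p)<H_\alpha(\bm q_\epsilon)$ for all $\alpha$, apply the proposition to get $\bm p^{\ot k}\succ\bm q_\epsilon^{\ot k}$, and then send $\epsilon\to 0$ using continuity of majorization. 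Swapping that step in for your dichotomy fixes the proof.
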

\noindent The Lemma ensures that a $k$-copy transformation between $\rho_{\rm S}$ and $\sigma_{\rm S}$ is possible for a finite $k$ if the second laws are satisfied for all $\alpha \geq 0$. The second ingredient which we are going to use is a special catalyst state which can ``simulate'' any $k$-copy state transformation. It turns out that if $k$ copies of the state $\rho_{\rm S}$ can be transformed into $k$ copies of another state $\sigma_{\rm S}$, then there exists a special state that can be used as a catalyst when transforming only a single copy of $\rho_{\rm S}$ into a single copy of $\sigma_{\rm S}$. With this in mind, let us introduce the \emph{Duan state} $\omega^{\rm D}_k(\rho, \sigma)$ \cite{PhysRevA.71.062306,PhysRevA.74.042312} to be a state of the form:
\begin{align}
    \label{def:duan}
    \omega^{\rm D}_k(\rho, \sigma) :=\! \frac{1}{k} \sum_{i=1}^k  \sigma^{\ot k-i} \ot \rho^{\ot i-1} \ot \dyad{i}.
\end{align}
Then, as shown in \cite{PhysRevA.71.062306}, the Duan state $\omega^{\rm D}_k(\rho, \sigma)$ can be used as a catalyst which ``simulates'' the $k$-copy transformation between $\rho_{\rm S}$ and $\sigma_{\rm S}$. This transformation is exact, in the sense that if $\rho_{\rm S}^{\ot k} \rightarrow \sigma_{\rm S}^{\ot k}$ then:
\begin{align}
    \rho_{\rm S} \ot \omega^{\rm D}_k(\rho, \sigma) \rightarrow \sigma_{\rm S} \ot \omega^{\rm D}_k(\rho, \sigma).
\end{align}
The above can be easily verified (see the Appendix for the details). The dimension of the Duan state is a function of both $k$ and $d_{\rm S}$, that is $\text{dim}[\omega^{\rm D}_k(\rho, \sigma)] = k d_{\rm S}^{k-1}$ and hence, for a given $\rho_{\rm S}$ and $\sigma_{\rm S}$, is constant and does not scale with $n$.

In this way, once the second laws are satisfied, there always exists a large enough (and finite) integer $k$ such that $k$ copies of $\rho_{\rm S}$ can be converted into $k$ copies of $\sigma_{\rm S}$. This, on the other hand, means that there always exists a special catalyst (the Duan state) which can catalyze the transformation from $\rho_{\rm S}$  to  $\sigma_{\rm S}$  \emph{without} any disturbance on the catalyst. We formalize this in the following theorem:
\begin{theorem}
\label{thm3}
Let $\rho_{\rm S} = \emph{diag}[\bm{p}]$ and $\sigma_{\rm S} = \emph{diag}[\bm{q}]$ be two quantum states with dimension $d_{\rm S}$, Hamiltonian $H_{\rm S}$ and a corresponding thermal state $\tau_{\rm S}^{\,} = \emph{diag}[\bm{g}]$ such that:
\begin{align}
    \label{sec_laws_2}
    \forall \, \alpha \geq 0 \qquad F_{\alpha}(\bm{p}, \bm{g}) \geq F_{\alpha}(\bm{q}, \bm{g}).
\end{align}
then, for sufficiently large $k$, the following holds:
\begin{align}
    \label{eq:duan_transf}
    \rho_{\rm S} \ot \omega^{\rm D}_k(\rho, \sigma) \xrightarrow{\emph{TO}} \sigma_{\rm S} \ot \omega^{\rm D}_k(\rho, \sigma),
\end{align}
where $\omega^{\rm D}_k(\rho, \sigma)$ is the Duan state defined in (\ref{def:duan}).
\end{theorem}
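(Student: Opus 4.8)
The plan is to derive Theorem~\ref{thm3} by feeding Lemma~\ref{lemma3} into the defining property of the Duan state, in the spirit of \cite{PhysRevA.71.062306}. First I would fix $k$ large enough that Lemma~\ref{lemma3} applies, so that $\bm{p}^{\ot k} \succ_{\rm T} \bm{q}^{\ot k}$. Since $\rho_{\rm S}$ and $\sigma_{\rm S}$ are block-diagonal in the energy eigenbasis, thermo-majorization is the \emph{exact} feasibility criterion of \cite{Horodecki2013}, so there is a thermal operation $\mathcal{T}^{(k)}_{\rm S}$ with $\mathcal{T}^{(k)}_{\rm S}[\rho_{\rm S}^{\ot k}] = \sigma_{\rm S}^{\ot k}$ exactly, implemented by some energy-conserving unitary $U^{(k)}_{\rm SB}$ as in (\ref{def_to_eq}). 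Everything after this is careful bookkeeping of permutations on the $k$ ``slots'' (copies of the $\rm S$-space) together with the $k$-dimensional pointer register carried by $\omega^{\rm D}_k(\rho,\sigma)$.

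Reading off (\ref{def:duan}), on the branch where the Duan pointer is $\dyad{i}$ the input state $\rho_{\rm S}\ot\omega^{\rm D}_k(\rho,\sigma)$ carries the multiset $\{\rho_{\rm S}^{\ot i},\sigma_{\rm S}^{\ot k-i}\}$ on the $k$ slots, while the target $\sigma_{\rm S}\ot\omega^{\rm D}_k(\rho,\sigma)$ carries $\{\rho_{\rm S}^{\ot i-1},\sigma_{\rm S}^{\ot k-i+1}\}$. A permutation of the $k$ slots conditioned on the pointer is an energy-conserving unitary on $\rm SC$ — permutations of copies commute with $\sum_j H_{\rm S}$, and the pointer Hamiltonian may be taken trivial, following \cite{Brand_o_2015} as recalled above — hence a thermal operation. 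So I may first bring the input to the ``staircase'' form $\xi := \tfrac{1}{k}\sum_{i=1}^{k}\rho_{\rm S}^{\ot i}\ot\sigma_{\rm S}^{\ot k-i}\ot\dyad{i}$, and a further controlled permutation converts $\xi' := \tfrac{1}{k}\sum_{i=1}^{k}\rho_{\rm S}^{\ot i-1}\ot\sigma_{\rm S}^{\ot k-i+1}\ot\dyad{i}$ into $\sigma_{\rm S}\ot\omega^{\rm D}_k(\rho,\sigma)$; it therefore suffices to realize $\xi \to \xi'$ by a thermal operation.

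The transformation $\xi\to\xi'$ I would do in two moves. First relabel the pointer cyclically, $\dyad{i}\mapsto\dyad{(i\bmod k)+1}$ — a permutation of a degenerate register, hence a thermal operation — which sends $\xi$ to $\tfrac{1}{k}\bigl(\rho_{\rm S}^{\ot k}\ot\dyad{1}+\sum_{i=2}^{k}\rho_{\rm S}^{\ot i-1}\ot\sigma_{\rm S}^{\ot k-i+1}\ot\dyad{i}\bigr)$, which already agrees with $\xi'$ on every branch except $\dyad{1}$. Second, apply $\mathcal{T}^{(k)}_{\rm S}$ on the $k$ slots conditioned on the pointer being $\dyad{1}$ and the identity otherwise: the controlled unitary $\dyad{1}\ot U^{(k)}_{\rm SB}+(\mathbb{1}-\dyad{1})\ot\mathbb{1}_{\rm SB}$ is again energy-conserving on $\text{pointer}+\rm S^{\ot k}+\rm B$ (because $U^{(k)}_{\rm SB}$ is, the pointer term is trivial, and $\dyad{1}$ commutes with it), so it is a thermal operation, and by exactness it turns the $\dyad{1}$ branch $\rho_{\rm S}^{\ot k}$ into $\sigma_{\rm S}^{\ot k}$ while leaving the others untouched, producing exactly $\xi'$. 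Composing the energy-conserving unitaries of all these steps (with a product bath) gives a single thermal operation realizing (\ref{eq:duan_transf}).

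The only genuinely delicate points I expect are: (i) making precise that the controlled permutations and the controlled application of $\mathcal{T}^{(k)}_{\rm S}$ are themselves thermal operations, which rests entirely on the freedom to give the Duan register a trivial Hamiltonian and on energy conservation being preserved under such pointer-controls; and (ii) the exactness — zero error and genuine product form — of the $k$-copy conversion, which needs the block-diagonal hypothesis so that \cite{Horodecki2013} yields an exact, not merely approximate, thermal operation. Both are standard but worth spelling out, and I would relegate the explicit slot-permutation bookkeeping to the Appendix, as the paper already signals.
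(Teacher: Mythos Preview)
Your proposal is correct and follows the same route as the paper: Lemma~\ref{lemma3} supplies $\bm{p}^{\ot k}\succ_{\rm T}\bm{q}^{\ot k}$ for large enough $k$, and the Duan telescoping trick of \cite{PhysRevA.71.062306} turns this into the single-copy catalytic statement~(\ref{eq:duan_transf}). The only difference is in presentation. The paper works at the level of probability vectors: it applies the embedding map $\Gamma_{\bm d}$ to reduce thermo-majorization to ordinary majorization, then observes that $\widetilde{\bm p}\ot\bm{z}^{\rm D}_k$ and $\widetilde{\bm q}\ot\bm{z}^{\rm D}_k$, written as direct sums over the pointer, differ in exactly one block ($\widetilde{\bm p}^{\,\ot k}$ versus $\widetilde{\bm q}^{\,\ot k}$), so $\widetilde{\bm p}^{\,\ot k}\succ\widetilde{\bm q}^{\,\ot k}$ together with $\bm a\succ\bm b\Rightarrow\bm a\oplus\bm c\succ\bm b\oplus\bm c$ gives the result in three lines; the existence of a thermal operation is then invoked once from \cite{Horodecki2013}. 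You instead build the thermal operation explicitly as a composition of pointer-controlled slot permutations, a cyclic pointer shift, and a pointer-controlled application of the $k$-copy map. Your version is more constructive and makes the energy-conservation checks transparent (in particular it makes explicit why the Duan register may be taken with trivial Hamiltonian), at the cost of some length; the paper's version is terser but leaves the actual TO implicit. Both encode the same combinatorial identity, so the difference is one of style rather than substance.
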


Notice that the above Theorem tells us explicitly how to find a good catalyst when the second laws hold. Indeed, once the conditions in (\ref{sec_laws_2}) are satisfied we now have a method of choosing the catalyst state which can facilitate a given transformation.

Let us now return to our main protocol and choose the intermediate state $\eta_{\rm C}$ to be precisely the Duan state corresponding to $\rho_{\rm S}$ and $\sigma_{\rm S}$, that is:
\begin{align}
    \eta_{\rm C} = \omega^{\rm D}_k(\rho, \sigma).
\end{align}
In fact we only need one copy of the state $\eta_{\rm C}$ to transform $\rho_{\rm S}$ into $\sigma_{\rm S}$ on the system $\rm S$ and so any $m \geq 1$ will lead to a valid transformation. However, now we have to assure that our preprocessing step applied to the initial catalyst state produces at least one copy of the respective Duan state. If we recall that $m = n \cdot r_n$, then $m \geq 1$ can be guaranteed if:
\begin{align}
    n \geq r_n^{-1} \approx \frac{\log D - H(\omega_k^{\rm D}(\rho, \sigma))}{\log d_{\rm C} - H(\omega_{\rm C})},
\end{align}
where $H(\rho) := -\Tr[\rho \log \rho]$ is the von Neuman entropy and $D = k \cdot d_{\rm S}^{k-1}$ is the dimension of the corresponding Duan state.

Now, from Theorem \ref{thm3} we know that there exists a TO which transforms $\rho_{\rm S}$ into $\sigma_{\rm S}$ using the Duan state as in (\ref{eq:duan_transf}). This transformation is exact, and so it does not introduce any additional error in the main protocol. In this way the only disturbance of the catalyst is applied by the pre- and postprocessing steps. In this way any $n$-copy catalyst state $\omega$ can be used to catalyze any allowed state transformation, i.e. any transformation which obeys the family of the second laws (\ref{eq:seq_laws2}). This concludes the proof of Theorem \ref{thm2}.

From a practical point of view it is interesting to see what is the typical value of $k$ such that catalytic universality holds. Indeed, by Lemma \ref{lemma3} we know that the second laws imply the existence of a finite $k$ and a thermal operation such that $\rho^{\ot k}_{\rm S} \rightarrow \sigma_{\rm S}^{\ot k}$. It turns out that for many pairs of states a multi-copy transformation exists even for small values of $k$, which implies that the dimension of the associated Duan state $\omega_k^{\rm D}(\rho, \sigma)$ is often not too large. In such situations the main protocol does not require having control over an asymptotically large Hilber space, as might be expected \emph{a priori}. This can be seen by looking at Fig. \ref{fig:3} which estimates the fraction of all transitions which can be realised by having only a small number of copies $k$. We will return to the discussion of the problem of the amount of control required by the main protocol in Sec. \ref{sec:control}.

\begin{figure}
    \centering
    \includegraphics[width=\linewidth]{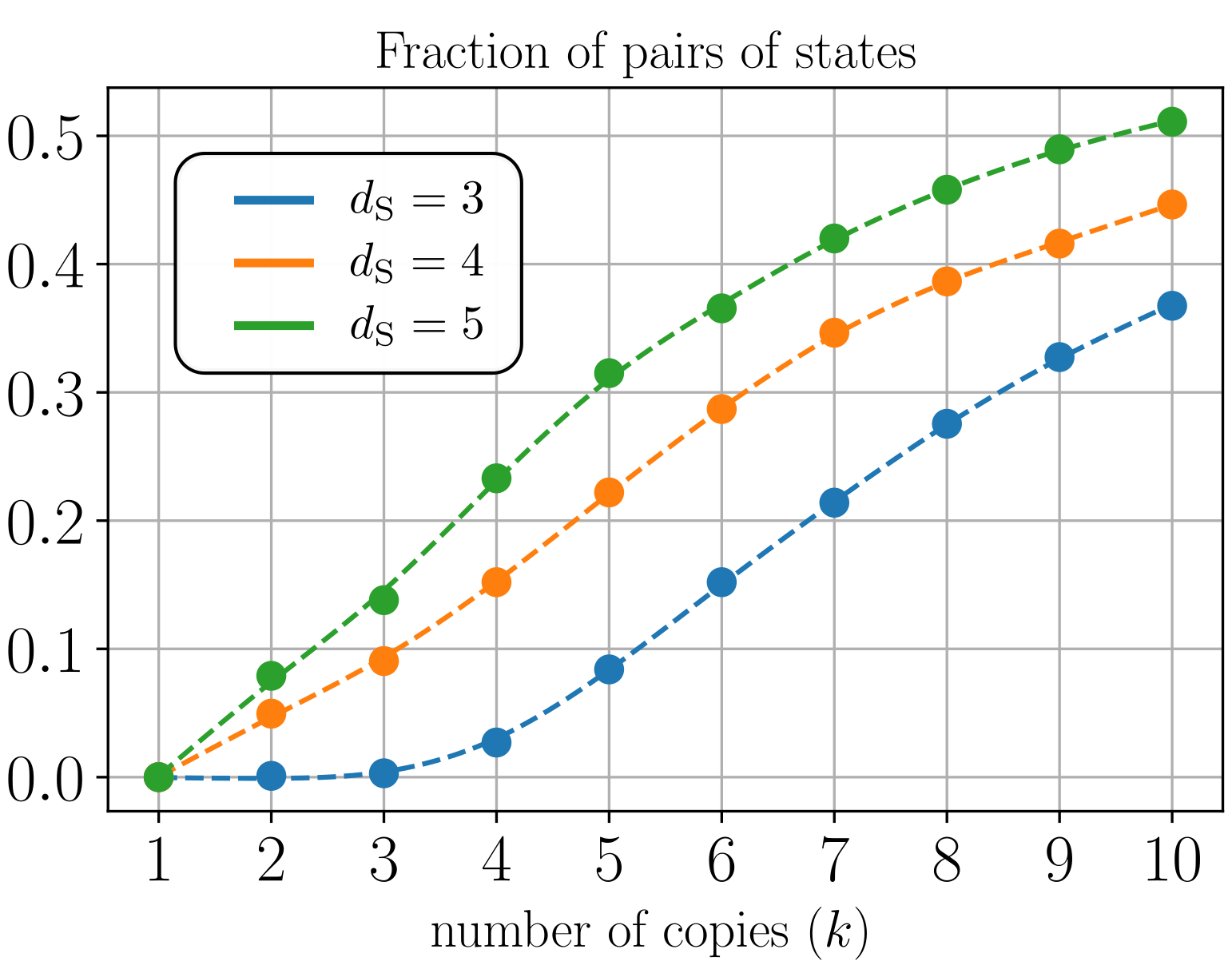}
    \caption{An estimate of the volume of all block-diagonal pairs of states $(\rho_{\rm S}, \sigma_{\rm S})$ with $H_{\rm S} \propto \mathbb{1}_{\rm S}$ for which there is a transformation taking $\rho_{\rm S}^{\ot i}$ into $\sigma_{\rm S}^{\ot i}$ for some $i \leq k$, divided by the volume of all block-diagonal pairs which can be catalytically activated. Catalytically activated pairs are such that they $(i)$ satisfy the respective second laws and $(ii)$ $\rho_{\rm S}$ cannot be directly converted into $\sigma_{\rm S}$.}
    \label{fig:3}
\end{figure}

\subsection{Practical aspects of the main protocol}
\label{sec:control}

We have described a general protocol which allows for the use of multiple copies of any non-equilibrium state to catalyse any state transformation which is allowed by the second laws (\ref{eq:sec_laws}). Still, in order to show that catalytic universality is a relevant feature of single-shot quantum thermodynamics, a similar behavior must also appear outside of the idealised assumptions of our framework. In other words, the catalytic universality should not be merely an artifact of the idealized model but be robust to its reasonable relaxations.

One might raise two natural and fair objections about the methods we used to demonstrate catalytic universality. First, one might be worried that the set of transformations which we consider here (thermal operations) is too large. This is because, at least in principle, it contains all possible energy-preserving unitaries, among them also those which might be impossible to implement in practice. Such unitaries might require the experimenter to accurately manipulate all of the particles contained in a typical heat bath, which is clearly beyond the scope of even the most passionate experimenter. Second, our protocol essentially requires that the number of catalyst particles $n$ is \emph{large enough}, so that the preprocessing step can prepare a sufficiently large Duan state. Again, on might be worried that for practically relevant situations this number will be unrealistically large, making any reasonable implementation of the protocol infeasible. These two problems must be resolved before catalytic universality can be seen as a practical and relevant feature of single-shot quantum thermodynamics. In the remaining part of this section we address these two important issues separately.

\subsubsection{How much control is needed by thermal operatons?}
The framework of thermal operations proved to be a useful approach in deriving \emph{fundamental} limitations for general thermodynamic processes. In parallel, the practical applicability of the framework was criticised for its requirement of accessing a huge number of degrees of freedom involved in the dynamics, even for small quantum systems (see e.g. Ref. \cite{Lostaglio2018} for a discussion). Indeed, the unitaries from Eq. (\ref{def_to_eq}) are required to be energy-conserving, but are otherwise arbitrary. Such unitaries may, in principle, act on a large number of energy levels of the heat bath at once. This is beyond the scope \chg{of} any realistic scenario, as realising such unitaries might require the experimenter to access and manipulate an overwhelming number of different parameters.

Since thermal operations allow for such unitaries, we cannot assure \emph{a priori} that the results derived within this formalism are relevant also beyond the paradigm of infinite control. Thankfully, this important problem was recently \chg{mitigated} in Ref. \cite{Perry2018}, where it was shown that any transformation of the form (\ref{def_to_eq}) can be realised using much simpler class of operations called Crude Thermal Operations.  These thermodynamic transformations require \chg{much less} control and consist of three types of operations:
\begin{enumerate}
    \item \textbf{Partial level thermalisations}, which selectively thermalise any two energy levels of the system. This type of operations can be implemented either by performing a partial SWAP between the system and the bath, or by selectively putting the system's energy levels in contact with the bath, preceded by filtering out irrelevant frequencies.  
    \item \textbf{Level transformations}, which either raise or lower any two energy levels of the system’s Hamiltonian. This is a standard type of transformations considered within thermodynamics, whose implementation generally requires performing work. Interestingly, it is enough to consider only those level transformations which do not require work. 
    \item \textbf{Subspace rotations}, which involve energy conserving unitaries acting upon the system only. For example, when the system has degenerate energy levels one may apply such unitaries within the degenerate subspace to rotate the state into a state diagonal in a specific basis. 
\end{enumerate}
The main result of Ref. \cite{Perry2018} states that, for states block-diagonal in the energy eigenbasis, any transformation of the form (\ref{def_to_eq}) can be implemented using the above three types of operations applied to the system and a single thermal qubit. In other words, even though the framework of thermal operations allows for unitaries with arbitrary complexity and level of control, such unitaries cannot provide any thermodynamic advantage over their much simpler counterparts. \chg{However, let us also remind that although Ref. \cite{Perry2018} shows that thermal operations can be decomposed into a sequence of simpler unitaries, even this might not be easy to implement in practice. Firstly, the actual sequence of crude thermal operations may be very long and thus experimentally infeasible; secondly, the three basic types of crude operations (1-3) might still be difficult to realise in a practical scenario. This is mainly because realising partial level thermalisations (1) and level transformations (2) require, in general, a considerable amount of effort. Finally, the approach presented in Ref. \cite{Perry2018} does not seem to be easily scalable and therefore it might still be necessary to search for alternative solutions that could bring thermal operations closer to their experimental implementations. }

Let us now return to our main problem. Translating the result of Ref. \cite{Perry2018} for our purposes means that the unitaries $U_{\rm CB}$ and $V_{\rm SCB}$ which we apply in the main protocol (see Fig. \ref{fig:1}) require only a coarse-grained control over the joint state of the system and the catalyst ($\rm SC$) and the heat bath ($\rm{B}$).    

\subsubsection{How large the catalyst must be?}
Unfortunately, the argument from the previous section only partially solves the problem of control in the main protocol. Although the transformation can be realized using simple interactions with a small region of the heat bath, it may still be necessary to manipulate a large number of particles composing the multi-copy catalyst. 

Let us study this problem in more detail. Notice that before the unitary $V_{\rm SCB}$ from the main protocol can be applied, the associated Duan state $\omega_{k}^{\rm D}(\rho, \sigma)$ must be determined. This requires finding a number $k$ such that $k$ copies of $\rho_{\rm S}$ can be converted into $k$ copies of $\sigma_{\rm S}$ (see Eq. (\ref{eq:mcopy})) . Due to Lemma \ref{lemma2} we know that, for states satisfying respective second laws, such a $k$ (however large) always exists. This assures that the unitary $V_{\rm SCB}$ can realize the catalytic transformation. However, it might still be difficult to apply this unitary since the required number of copies $k$ can be large, and hence also the dimension of the associated Duan state. Consequently, both unitaries $U_{\rm CB}$ and $V_{\rm SCB}$ might need to act on a large number of energy levels of joint state of the system and the catalyst.

Let us explain the origin of this problem. The second laws of thermodynamics provide a mathematically clean and succinct characterisation of allowed transformations in the resource theory of quantum thermodynamics. However, the generality of this characterisation also leads to certain drawbacks. In particular, there are transformations which are allowed by the second laws, but at the same time the catalyst they require has to be infinitely large (see Proposition $2$ from Ref. \cite{Sanders_2009}). This situation is far from being physical and demonstrates that certain transformations, although technically allowed by the second laws, can never be realised using practical catalysts. This implies that in such instances our main protocol may also require asymptotically large catalysts.

This problem can be circumvented if we sacrifice the elegant mathematical description in terms of the second laws for a more operational characterisation. In particular, instead of using $\alpha-$generalised free energies, we can describe the partial order of states using the concept of $k$-copy transformations. This amounts to considering only those pairs of states $(\rho_{\rm S}, \sigma_{\rm S})$ for which there is a transformation that takes $\rho_{\rm S}^{\ot i}$ into $\sigma_{\rm S}^{\ot i}$ for some $i \leq k$.  Interestingly, the partial orders described by the second laws and $k$-copy transformations (for all $k$) are exactly the same, which can be seen as a direct consequence of Theorem $1$ from Ref. \cite{PhysRevA.71.062306} and Lemma \ref{lemma3}. For a fixed $k$ this characterisation involves checking at most $d_{\rm S}^2 (d_{\rm S}^{k-1}-1) / (d_{\rm S}-1)$ conditions, which can be done e.g. by looking at the elbows of the thermo-majorization curves. We emphasize that the price to pay for considering a \emph{fixed} $k$ is that we limit the set of all possible transformations, as compared to those allowed by the second laws. At the same time, a large fraction of all transformations allowed by the second laws can be $k$-copy transformed for a relatively small $k$ (see Fig. \ref{fig:3}). Using the concept of $k$-copy transformations gives rise to the following alternative version of Theorem \ref{thm2}: 

\newtheorem*{T1}{Theorem~2 (alternative)}
\begin{T1}
Let $\rho_{\rm S}$ and $\sigma_{\rm S}$ be two states for which there exists a $k$-copy transformation such that:
\begin{align}
    \rho_{\rm S}^{\ot k} \rightarrow \sigma_{\rm S}^{\ot k},
\end{align}
Then, for any catalyst state $\omega_{\rm C}$ and for a sufficiently large $n$ satisfying: 
\begin{align}
     n \geq \frac{\log D - H(\omega_k^{\rm D}(\rho, \sigma))}{\log d_{\rm C} - H(\omega_{\rm C})}
\end{align}
there is a thermal operation $\mathcal{T}_{\rm SC}$ such that:
\begin{align}
    \mathcal{T}_{\rm SC}\left[\rho_{\rm S} \ot \omega_{\rm C}^{\ot n} \right] = \sigma_{\rm SC}',  
\end{align}
and $\Tr_{\rm C} \left[\sigma_{\rm SC}'\right] = \sigma_{\rm S}$ with the following disturbance of the catalyst:
\begin{align}
    \eps{C} &:= \norm{\Tr_{\rm S} \left[\sigma_{\rm SC}'\right] - \omega_{\rm C}^{\ot n}}_{1} \leq \mathcal{O}\left(e^{- n^{\kappa}}\right),
\end{align}
where $\kappa \in (0, 1)$ can be chosen arbitrarily. The explicit constants are provided in the Appendix.
\end{T1}

Let us summarise the reasoning presented in this section. First, by replacing the partial order of second laws with a mathematically equivalent partial order characterised by $k$-copy transformations, one can formulate an alternative version of Theorem \ref{thm2}. This allows the phenomenon of catalytic universality to be demonstrated using multi-copy catalysts composed of a moderate number of particles. Although in that case the multi-copy catalyst will no longer be useful for all transformations (unless more copies are supplied), it will be useful for all of these transformations in which the participating states are $k$-copy convertible. 




%



\subsection{Extension to general majorization-based resource theories}
As a final note we emphasize that although we have focused our presentation on quantum thermodynamics, catalysis can be easily introduced into any QRT. Arguably the most well-studied class of resource theories are those which are governed by majorization. In such theories transitions between states are governed by pure majorization or its generalized version, $\bm{d}$-majorization \cite{Ruch1976}. In this work we focused mainly on quantum thermodynamics which can be thought of as a particular resource theory whose transitions are governed by a variant of majorization called thermo-majorization. In fact, each theory which can be described using majorization has its own collection of ``second laws'' which can be expressed using $\alpha$-Renyi entropies. Although the physical settings and interpretation of such laws differ among resource theories, the mathematical framework governing state transitions in such theories remains the same. Thanks to this wide applicability of majorization, the results presented in this work can be interpreted as well in the context of other majorization-based QRTs like the theory of pure-state entanglement or pure-state coherence. In this way the phenomenon of catalytic universality naturally trancends into other physical settings and can be viewed as a general feature of all majorization-based QRTs. 

\section{Generic catalysts}
\label{sec4}
We have shown that multi-copy states are universal catalysts. It is interesting to ask if, and to what extent, these results may be able to be generalized even further. As a first step, one might ask whether any high-dimension state is a catalyst, i.e. whether the crucial property is just high dimensionality, or whether the multi-copy form is essential, since there we have much stronger typicality properties emerging. This extension is relevant from an experimental point of view, where experimenters ability to manipulate the system are often limited by the number of degrees of freedom that can be effectively controlled. In this section we present numerical evidence which indicates that the catalytic universality phenomenon is more general and, in fact, concerns almost all large dimensional catalysts. We leave the proof of this conjecture for future research. The code used to run the numerics presented in this section was developed using MATLAB and is freely available at \cite{git_code}.

Throughout this section the Hamiltonians of the system $\rm S$ and the catalyst $\rm C$ are fully degenerate, meaning that $H_{\rm S} \propto H_{\rm C} \propto \mathbb{1}$. This will allow us to simplify both the presentation and numerical computation, since checking majorization is computationally easier than checking thermo-majorization. It should be noted that this does not reduce generality of our findings, as thermo-majorization criteria can always be expressed in terms of standard majorization using the embedding map \cite{Brand_o_2015}. Moreover, for the purpose of visualisation we focus here on the case when $d_{\rm S} = 3$. This will allow us to describe the numerical findings in a more natural and visually appealing way. In the Appendix we report further numerical evidence which indicate that these conclusions naturally extend to larger dimensional systems. 

\subsection{Fixed initial and final state}
Let us consider two states $\rho_{\rm S}$ and $\sigma_{\rm S}$ such that $\bm{p} = \text{diag}[\rho_{\rm S}]$ and $\bm{q} = \text{diag}[\sigma_{\rm S}]$ and chosen such that $(i)$ they satisfy the corresponding second laws (\ref{eq:sec_laws}), meaning that $H_{\alpha}(\bm{p}) \leq H_{\alpha}(\bm{q})$ for all $\alpha \geq 0$ and such that $(ii)$ the probability vector $\bm{p}$ does \emph{not} majorize $\bm{q}$ and vice versa. In this way we know that neither of $\rho_{\rm S}$ and $\sigma_{\rm S}$ can be transformed into each other, but there exists a catalyst $\omega_{\rm C}$ which can be used to facilitate the transformation from $\rho_{\rm S}$ to $\sigma_{\rm S}$. For illustrative purposes let us choose the following two representative states:
 \begin{align}
    \label{eq:special_states}
     \bm{p^{\star}} = (0.65,\, 0.2,\, 0.15), \qquad \bm{q}^{\star} = (0.5,\, 0.4,\, 0.1).
 \end{align}
 It is easy to check that $\bm{p}^{\star}$ and $\bm{q}^{\star}$ are incomparable using e.g. the concept of Lorenz curves \cite{Horodecki2013}. In this part of the section we  focus exclusively on these particular states.
 
 Suppose now that we choose a probability distribution $\mathcal{P}_{\text{dist}}$ and draw $d_{\rm C}$ positive numbers according to this distribution. We organize them in a vector and then normalize, obtaining a valid probability vector. In this way we have a simple method of drawing random catalysts, which for a large dimension $d_{\rm C}$ well approximates drawing from the probability simplex. Let us denote with $\bm{c}^{(i)} = (c_1^{(i)}, c_2^{(i)}, \ldots, c_{d_{\rm C}}^{(i)})$ an $i$-th probability vector obtained via this method. Each $\bm{c}^{(i)}$ will model a random catalyst drawn according to a respective probability distribution. In this way the elements of each such random catalyst are given by:
 \begin{align}
     c_{k}^{(i)} = \frac{X_{\text{dist}}^k}{\sum_{k = 1}^{d_{\rm C}} X_{\text{dist}}^k},
 \end{align}
  where $X_{\text{dist}}^k$ is a random variable drawn according to the probability distribution $\mathcal{P}_{\text{dist}}$. In what follows we will consider three different distributions:
  \begin{align}
      &\mathcal{P}_{\text{ray}} \rightarrow \mathsf{Prob}(X_{\text{ray}}^k = x) \sim x e^{-x^2/2}, \\
      & \mathcal{P}_{\text{uni}} \rightarrow \mathsf{Prob} (X_{\text{uni}}^k = x) \sim const, \\
      &\mathcal{P}_{\text{exp}} \rightarrow \mathsf{Prob}(X_{\text{exp}}^k = x) \sim e^{-x}.
      \label{eq:exp_dist}
  \end{align}
 Next we fix the error which we can tolerate on the catalyst $\eps{C}$ and repeat the process of sampling catalysts many times. Having done so we can now ask: how frequently does a randomly chosen catalyst can catalyze a given state transformation for a fixed error? Counting the frequency of cases in which the following transformation is possible:
 \begin{align}
     \bm{p}^{\star} \ot \bm{c}^{(i)} \rightarrow \bm{q}^{\star} \ot \widetilde{\bm{c}}^{\,(i)},
 \end{align}
 with $\widetilde{\bm{c}}^{\,(i)}$ chosen such that $\norm{\widetilde{\bm{c}}^{\,(i)} - \bm{c}^{(i)}}_1 \leq \eps{C}$, leads to the success probability $p_{\text{succ}}(\bm{p}^{\star}, \bm{q}^{\star}, \eps{C})$. This is an estimate of the probability that a randomly chosen catalyst can help in facilitating a given state transformation, with the disturbance on the catalyst at most $\eps{C}$. Moreover, using the results of Ref. \cite{Horodecki_2018} we can readily determine the final state of the catalyst $\widetilde{\bm{c}}^{\,(i)}$ to be the so-called $\epsilon$-flattest state (see Ref. \cite{Horodecki_2018} for the method of constructing these states). To summarise, the success probability $p_{\text{succ}}(\bm{p}^{\star}, \bm{q}^{\star}, \eps{C})$ is computed using the following set of steps:
 
 \vspace{10pt}
 \begin{algorithm}[H]
    \label{alg1}
    \DontPrintSemicolon
    \KwInput{$\bm{p}, \, \bm{q}, \, d_{\rm C},\, \eps{C}, \, \mathcal{P}_{\rm dist}$}
  \KwOutput{Estimate of $p_{\text{succ}}(\bm{p}, \bm{q}, \eps{C})$}
  \KwParams{$N_{\rm C}$ \tcp*{precision of estimation}} 
  $pos = 0$\\
  \ForEach{$i \in \{1, \ldots, N_{\rm C}\}$}
   {
   		$\bm{c}^{(i)}$ $\gets$ random catalyst sampled using $\mathcal{P}_{\rm dist}$ \; 
   		\If{$\,\emph{there exists} \,\, \widetilde{\bm{c}}^{\,(i)} \,\,\emph{s.t.}\,\, \bm{p} \ot \bm{c}^{(i)} \rightarrow \bm{q} \ot \bm{\widetilde{c}}^{\,(i)}$ \emph{and} $\norm{\bm{\widetilde{c}}^{(i)} - \bm{c}^{(i)}}_1 \leq\eps{C}$ }
   		{
   		$pos \gets pos + 1$\;
   		}
   }
   $p_{\text{succ}}(\bm{p}, \bm{q}, \eps{C}) \gets pos / N_{\rm C}$
\caption{{Estimating $p_{\text{succ}}(\bm{p}, \bm{q}, \eps{C})$ by sampling}}
\end{algorithm}
\vspace{10pt}

The results of this numerical experiment are summarized in Fig. \ref{fig:2}. As we can see, when we increase the dimension $d_{\rm C}$, the probability that a randomly chosen catalyst can catalyze a given transformation increases and very rapidly approaches a fixed value. This value, as well as the rate at which it is approached, depends on the specific distribution $\mathcal{P}_{\text{dist}}$ we choose. This indicates that the success probability $p_{\text{succ}}(\mathcal{P}_{\text{dist}}, d_{\rm C})$ depends both on the dimension of the catalyst and the distribution of its eigenvalues. This numerical experiment allows us to conclude that there are state transformations for which random states act as catalysts with high probability.

\begin{figure}
    \centering
    \includegraphics[width=0.9\linewidth]{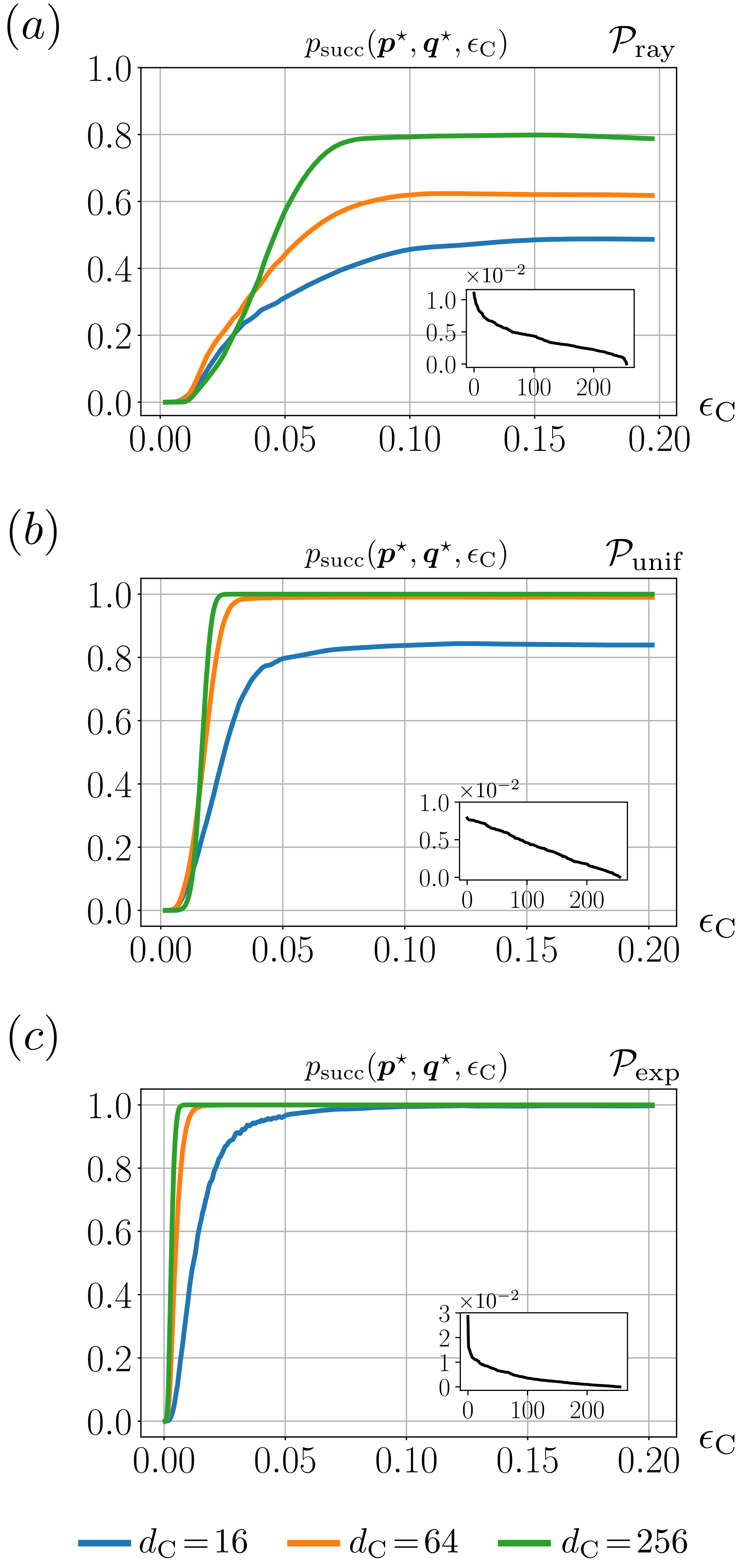}
    \caption{The probability $p_{\text{succ}}(\bm{p}^{\star}, \bm{q}^{\star}, \eps{C})$ that a random state of dimension $d_{\rm C}$ drawn from probability distribution $\mathcal{P}_{\text{dist}}$ can catalyze a fixed state transformation $\bm{p}^{\star} \rightarrow \bm{q}^{\star}$. Each panel corresponds to a different distribution: ($a$) Rayleigh $\mathcal{P}_{\text{ray}}$, ($b$) uniform $\mathcal{P}_{\text{unif}}$ and ($c$) exponential $\mathcal{P}_{\text{exp}}$. Inset plots illustrate an exemplary distribution of eigenvalues $c_{k}^{(i)}$ of a random catalyst $\bm{c}^{(i)}$ drawn according to a respective distribution and then normalized.}
    \label{fig:2}
\end{figure}

\subsection{{Fixed initial state and arbitrary final state}}
In the previous section we studied how useful random catalysts are for a fixed state transformation. We now go one step further and generalise this investigation to arbitrary final states, while still keeping the initial state fixed. Let us then consider again the state $\bm{p}^{\star}$ given by Eq. (\ref{eq:special_states}) as the input state and let $\bm{q}$ be an arbitrary state. Since $\bm{p}^{\star}$ and $\bm{q}$ are $d_{\rm S}$-dimensional probability vectors, it is useful to think of them as points in the space of all $d_{\rm S}$-dimensional probability vectors, the so-called probability simplex $\Delta_{d_{\rm S}}$ defined as:
\begin{align}
    \Delta_{N} :=\! \left\{ \bm{x} = (x_1, \ldots, x_{N})\,|\, x_i \geq 0 \,\,\text{and}\,\, \sum_{i=1}^{N} x_i = 1\!\right\}
\end{align} 
Let us now define two sets of states inside $\Delta_{d_{\rm S}}$:
\begin{align}
    \label{set_S}
    \mathsf{S}(\bm{p}) &= \{\bm{q}\, |\, \bm{p} \rightarrow \bm{q} \,\,\text{and}\,\, \bm{q} \in \Delta_{d_{\rm S}}\}, \\
    \label{set_T}
    \mathsf{T}(\bm{p}) &= \{\bm{q}\, |\, \bm{p} \ot \bm{c} \rightarrow \bm{q} \ot \bm{c}, \,\, \bm{q} \in \Delta_{d_{\rm S}} \,\,\text{and}\,\, \bm{c} \in \Delta\},
\end{align}
where $\Delta$ is the set of all $N$-dimensional probability simplices $\Delta_{N}$ for all natural $N$. The set $\mathsf{S}(\bm{p})$ contains all states $\bm{q}$ which are (thermo-) majorised by $\bm{p}$, that is all states which can be reached via thermal operations when starting from a state described by $\bm{p}$. The set $\mathsf{T}(\bm{p})$ contains all states $\bm{q}$ which can be reached via thermal operations with the help of some (unspecified) catalyst. We will refer to these sets as the thermal and the catalytic-thermal set, respectively. It can be readily verified that $\mathsf{S}(\bm{p}) \subseteq \mathsf{T}(\bm{p})$ for all $\bm{p}$. In this language, the main result of Ref. \cite{Jonathan_1999} shows that $\mathsf{S}(\bm{p}) \subset \mathsf{T}(\bm{p})$ for some $\bm{p}$. Moreover, due to the results of Refs. \cite{PhysRevA.64.42314,Turgut_2007,Brand_o_2015} a complete characterisation of the set $\mathsf{T}(\bm{p})$ is known and whether $\bm{q} \in \mathsf{T}(\bm{q})$ is determined by the second laws (\ref{eq:sec_laws}). 

\begin{figure*}
    \centering
    \includegraphics[width=\linewidth]{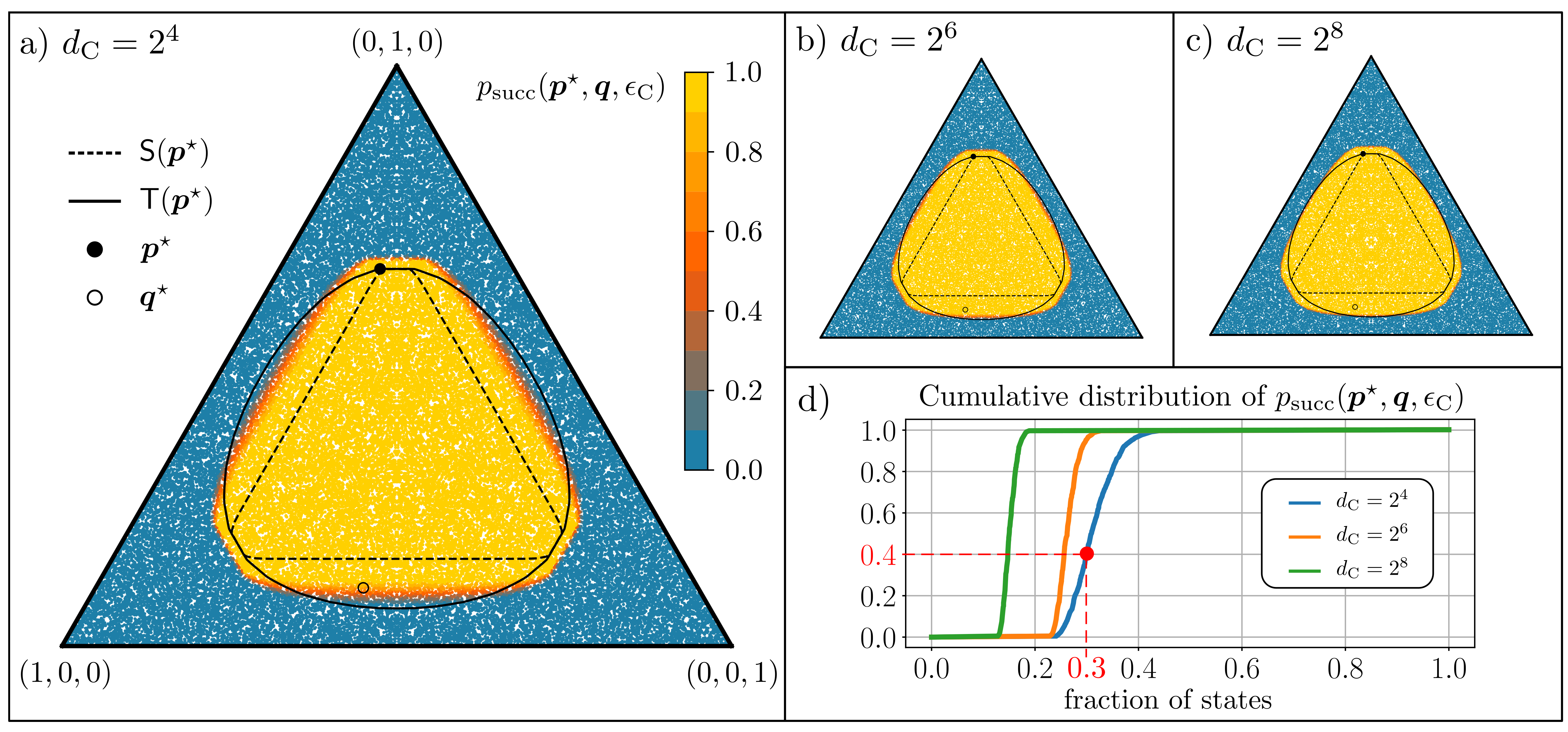}
    \caption{The probability $p_{\rm succ}(\bm{p}^{\star}, \bm{q}, \eps{C})$ that a random state can be used to enable the transformation from $\bm{p}^{\star}$ to $\bm{q}$ approximately catalytically, i.e. with an error $\eps{C} = \mu \eps{C}^{\rm bnd}$ and $\mu = 0.1$. Plots $(a)-(c)$ correspond to different dimension of the random catalyst ($d_{\rm C} = 2^4$, $2^6$ and $2^8$ respectively). States inside the region bounded by dashed lines define the thermal set $\mathsf{S}(\bm{p}^{\star})$, consisting of all states that can be reached from $\bm{p}^{\star}$ using thermal operations. The solid line corresponds to the catalytic-thermal set $\mathsf{T}(\bm{p}^{\star})$, consisting of all states which can be reached from $\bm{p}^{\star}$ with the help of some (potentially finely-tuned) catalyst. Note that the probability of success $p_{\rm succ}(\bm{p}^{\star}, \bm{q}, \eps{C})$ is generally very close to $1$ for most final states $\bm{q}$ inside $\mathsf{D}(\bm{p}^{\star}) := \mathsf{T}(\bm{p}^{\star}) \setminus \mathsf{S}(\bm{p}^{\star})$, even when the dimension of the catalyst is relatively small. Plot $(d)$ illustrates the cumulative distribution of $p_{\rm succ}(\bm{p}^{\star}, \bm{q}, \eps{C})$. To simplify interpretation, an exemplary point is drawn in red. It corresponds to the case when $30\%$ of all states $\bm{q} \in \mathsf{D}(\bm{p}^{\star})$ can be reached using random catalysts of dimension $d_{\rm C} = 2^4$, with probability less than $0.4$. In other words, $70\%$ of all possible catalytic transformations can be realised using random catalysts with probability $0.4$ or higher. }
    \label{fig:4}
\end{figure*}

Let us now perform our second numerical experiment. In the previous section we saw that for a fixed transformation, catalysts sampled from the exponential distribution (\ref{eq:exp_dist}) achieve a high probability of success $p_{\rm succ}(\bm{p}^{\star}, \bm{q}^{\star}, \eps{C})$, even for moderate dimensions of catalysts. Let us now use the exponential distribution to sample random catalysts and compute the associated success probability. Furthermore, to assure that we do not work in the embezzlement regime, we also fix the allowable error on the catalyst to be $\eps{C} = \mu \eps{C}^{\rm bnd}$, where $\eps{C}^{\rm bnd}$ is the embezzlement bound from Eq. (\ref{def:emb_bnd}) and $0 \leq \mu < 1$. For arbitrary points $\bm{q} \in \Delta_{d_{\rm S}}$ we then estimate $p_{\rm succ}(\bm{p}^{\star}, \bm{q}, \eps{C})$, the probability that a random catalyst can be used to transform $\bm{p}^{\star}$ into $\bm{q}$ using the method described in Alg. \ref{alg1}. 

The results of this numerical experiment are summarized in Fig. \ref{fig:4}. For the purpose of illustration we also draw the sets $\mathsf{S}(\bm{p}^{\star})$ and $\mathsf{T}(\bm{p}^{\star})$. The numerics demonstrate that the probability $p_{\rm succ}(\bm{p}^{\star}, \bm{q}, \eps{C})$ is large for most $\bm{q}$ in the region $\mathsf{D}(\bm{p}^{\star}) := \mathsf{T}(\bm{p}^{\star}) \setminus \mathsf{S}(\bm{p}^{\star})$ even for small dimension of the catalyst (e.g. when $d_{\rm C} = 2^4$). Interestingly, the success probability increases significantly with the dimension of the catalyst, so that for $d_{\rm C} = 2^8$ the value of $p_{\rm succ}(\bm{p}^{\star}, \bm{q}, \eps{\rm C}) \approx 1$ for almost all points $\bm{q}$ inside $\mathsf{D}(\bm{p}^{\star})$. As a consequence, we can infer that there are (input) states for which random states act as catalysts with high probability \emph{for all} possible output states. 

\subsection{{Arbitrary initial and final states}}
In the previous section we saw that a random catalyst can catalyze most of the possible state transformations for a fixed state $\bm{p}$, even for catalysts with a moderate dimension. In this section we extend our analysis and show that this behavior is a generic feature valid for arbitrary initial states. 

Before going into the details, let us emphasize that not all initial states $\bm{p}$ lead to an interesting catalytic advantage. For example, when the system starts in a thermal state, $\bm{p} = \bm{g}$, there is no catalyst which can enhance the system's transformation potential. In that case the thermal and catalytic-thermal sets coincide, i.e. $\mathsf{S}(\bm{g}) = \mathsf{T}(\bm{g})$. Similar situation happens when the initial state of the system is a pure state. One natural way to quantify the potential for a catalytic improvement is to estimate the volume of set defined as the difference between the thermal and the catalytic-thermal set. In what follows we will refer to such a set of states $\mathsf{D}(\bm{p}) := \mathsf{T}(\bm{p}) \setminus \mathsf{S}(\bm{p})$ as the \emph{catalytic activation set} (CAS). Naturally, the volume of this region in the space of distributions largely varies between different initial states $\bm{p}$. 

The aim of the next numerical experiment is to extend the results from the previous section to arbitrary initial states. We again fix a small error on the catalyst $\eps{C} = \mu \eps{C}^{\rm bnd}$ to assure that we do not work in the embezzlement regime. We then sample uniformly the initial state $\bm{p}$ and compute the associated CAS, denoted $\mathsf{D}(\bm{p})$. For each point $\bm{q} \in \mathsf{D}(\bm{p})$ we estimate the probability that a random catalyst can be used to transform $\bm{p}$ into $\bm{q}$ using the methods described in Alg. \ref{alg1}. Finally, we calculate the number of states inside $\mathsf{D}(\bm{p})$ for which the probability of catalysing using random catalysts, $p_{\rm succ}(\eps{C})$, is larger than a fixed threshold value $\gamma_{\rm thd}$ \footnote{It's worth to remind that the transformations which we consider here are always deterministic and the probability $p_{\rm succ}(\eps{C})$ refers to the sampled catalysts, i.e. how probable it is to \emph{deterministically} catalyze a given transformation using a catalyst drawn at random.}. This allows us to estimate, for each $\bm{p}$, the fraction $f(\bm{p})$ of all possible transformations which can be catalyzed using random catalysts with probability at least $\gamma_{\rm thd}$, i.e.:
\begin{align}
    \label{f_quant}
    f(\bm{p}) := \frac{| \widetilde{\mathsf{D}}(\bm{p})|}{|\mathsf{D}(\bm{p})|},
\end{align}
where $\widetilde{\mathsf{D}}(\bm{p}) \subseteq \mathsf{D}(\bm{p})$ and $\bm{q} \in \widetilde{\mathsf{D}}(\bm{p})$ if and only if $p_{\rm succ}(\bm{p}, \bm{q}, \eps{C}) \geq \gamma_{\rm thd}$. The quantity $f(\bm{p})$ can be estimated using the following simple algorithm:

\vspace{10pt}
 \begin{algorithm}[H]
    \label{alg2}
    \DontPrintSemicolon
    \KwInput{$\bm{p}, d_{\rm C},\, \eps{C}, \gamma_{\rm{thd}},\, \mathcal{P}_{\rm dist}$}
  \KwOutput{Estimate of $f(\bm{p})$}
  \KwParams{$N_{\rm S}$} 
  \vspace{5pt}
  $\mathsf{A}(\bm{p})$ $\gets$ initialise uniformly $N_{\rm S}$ states of dimension $ d_{\rm \rm S}$ \\
  $\mathsf{S}(\bm{p})$ $\gets$ all states in $\mathsf{A}(\bm{p})$ satisfying (\ref{set_S}) \\ 
  $\mathsf{T}(\bm{p})$ $\gets$ all states in $ \mathsf{A}(\bm{p})$ satisfying (\ref{set_T})\\
  $\mathsf{D}(\bm{p})$ $\gets$ $\mathsf{T}(\bm{p})\setminus\mathsf{S}(\bm{p})$ \\
  
  \vspace{5pt}
  $pos = 0$ \\
  \ForEach {$\bm{q} \in \mathsf{D}(\bm{p})$}{
   $p_{\rm succ}(\bm{p}, \bm{q}, \eps{C})$ $\gets$ compute using Alg. \ref{alg1} \\
   \If{$p_{\rm succ}(\bm{p}, \bm{q}, \eps{C}) \geq \gamma_{\rm thd}$}{$pos \gets pos + 1$}
   }
   $f(\bm{p}) \gets pos /|D(\bm{p})|$
\caption{Estimating $f(\bm{p})$ by sampling}
\end{algorithm}
\vspace{10pt}

The results of this numerical experiment are summarized in Fig. \ref{fig:5}. Interestingly, even for relatively small catalyst dimensions (e.g. $d_{\rm C} = 16$) there is a modest fraction of possible transformations that with high probability can be catalysed with a random catalyst. Furthermore, by increasing dimension of the catalyst this fraction improves significantly, so that already for moderate sized catalysts ($d_{\rm C} = 256$) most of the possible transformations can be catalysed using random catalysts with a large probability. As a consequence, this numerical investigation allows us to infer that most of all possible transformations can be catalytically activated with high probability using random states as catalysts. In the Appendix we give analogous plots for a different choice of the relative error $\mu$, the threshold value $\gamma_{\rm thld}$ and system dimension $d_{\rm S}$, to demonstrate that this behavior is generic, i.e. it does not depend on our particular choice of parameters. 

\begin{figure*}
    \centering
    \includegraphics[width=\linewidth]{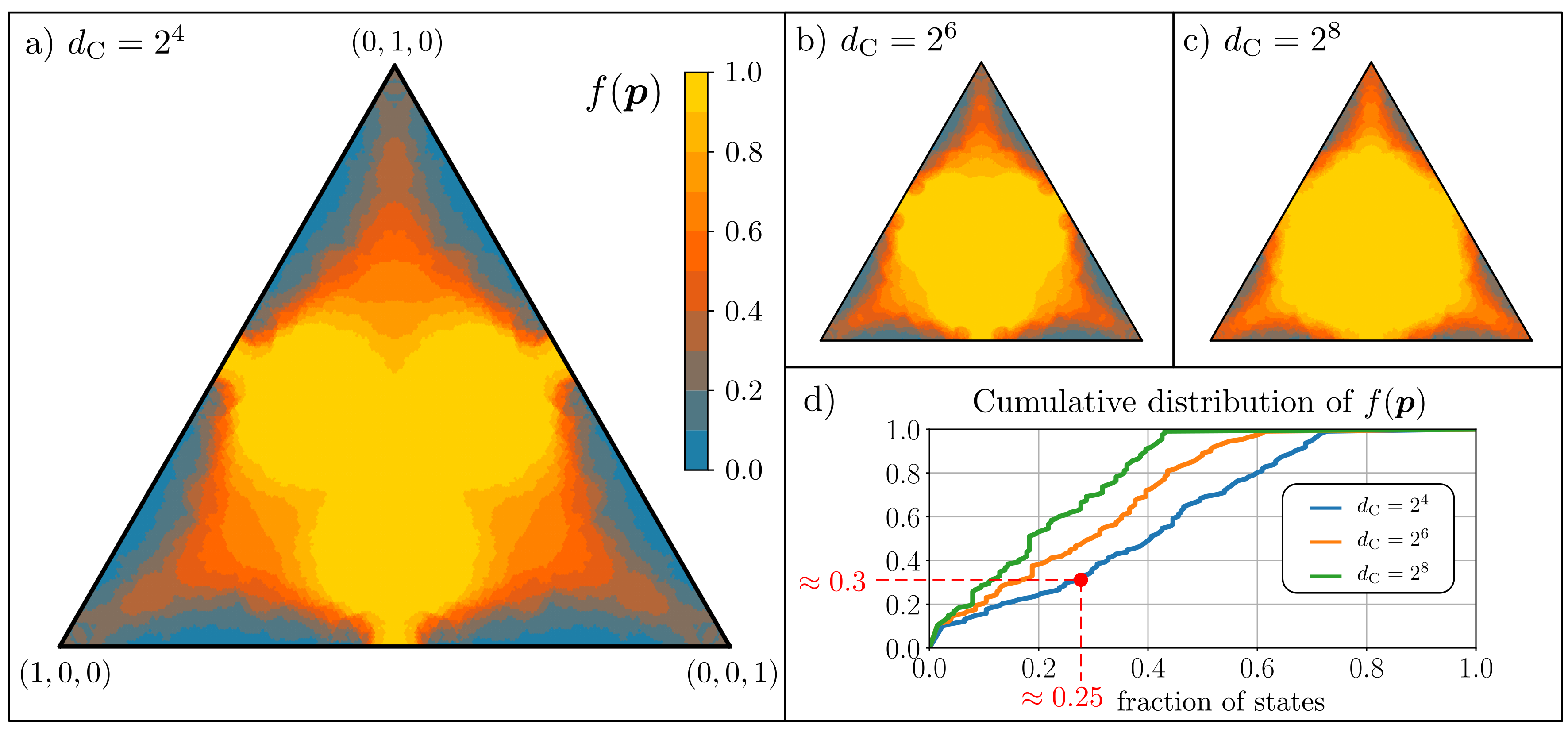}
    \caption{The fraction $f(\bm{p})$ of all states inside the catalytic activation set $\mathsf{D}(\bm{p})$ (CAS) for which the probability $p_{\rm succ}(\bm{p}, \bm{q}, \eps{C})$ that a random state can be used as a catalyst, is larger than the threshold value $\gamma_{\rm thd} = 0.9$. Plots $(a-c)$ correspond to random catalysts of dimensions $2^4$, $2^6$ and $2^8$ respectively. Plot $(d)$ illustrates the cumulative distribution of $f(\bm{p})$. As an example, the red point corresponds to the situation where random states of dimension $d_{\rm C} = 2^4$  are used to catalyze possible transformations. These random catalysts are not useful for roughly $25\%$ of all possible initial states $\bm{p}$, i.e. for each such state they allow at most $\approx 30 \%$ of all output states $\bm{q}$ in CAS to be reached with probability equal to or greater than $\gamma_{\rm thld}$. In other words, such random catalysts can be used to reach more than $30\%$ of all possible output states $\bm{q} \in \mathsf{D}(\bm{p})$, with probability at least $\gamma_{\rm thld}$, for at least $75\%$ of all initial states $\bm{p}$. }
    \label{fig:5}
\end{figure*}

\subsection{{Comparison with multi-copy states}}
In the previous sections we studied how useful random states are in catalysing thermodynamic transformations. In the final numerical experiment we compare these insights with the analytical results presented in this paper. In particular, we compute the quantity $f(\bm{p})$ defined in Eq. (\ref{f_quant}) using as a catalyst multiple copies of a fixed state. This will allow us to examine some of the practical aspects of our construction, i.e. when the catalysts consists of a moderate number of particles. 

We again fix the error on the catalyst to be $\eps{C} = \mu \eps{C}^{\rm bnd}$ and compute the quantity $f(\bm{p})$ using Alg. \ref{alg2}, with the only difference that now $p_{\rm succ}(\bm{p}, \bm{q}, \eps{C})$ is computed using a fixed multi-copy catalyst. Hence, it can be either $0$ (the multi-copy catalyst does not allow to transform $\bm{p}$ into $\bm{q}$ within the allowed error on the catalyst) or $1$ (when $\bm{p}$ can be transformed into $\bm{q}$ within the allowed error). The single-copy catalyst is chosen to be a qubit in a state $\omega_{\rm C} = \diag(1-r, r)$ with $0 \leq r < 1/2$. The results of this numerical experiment are summarised in Fig. \ref{fig:6}. Comparing these with Fig. \ref{fig:5} illustrates that multi-copy catalysts generally achieve a larger fraction $f(\bm{p})$ for the same dimension of catalyst. However, in the multi-copy case the improvement in $f(\bm{p})$ obtained by increasing the catalyst's dimension is generally smaller than in the case of random catalysts. This indicates that catalytic universality using only few copies might be generally possible when the catalyst is sufficiently mixed. Finally, Fig. \ref{fig:6} demonstrates that some marks of catalytic universality can be observed for a relatively modest number of particles comprising the catalyst.

\begin{figure*}
    \centering
    \includegraphics[width=\linewidth]{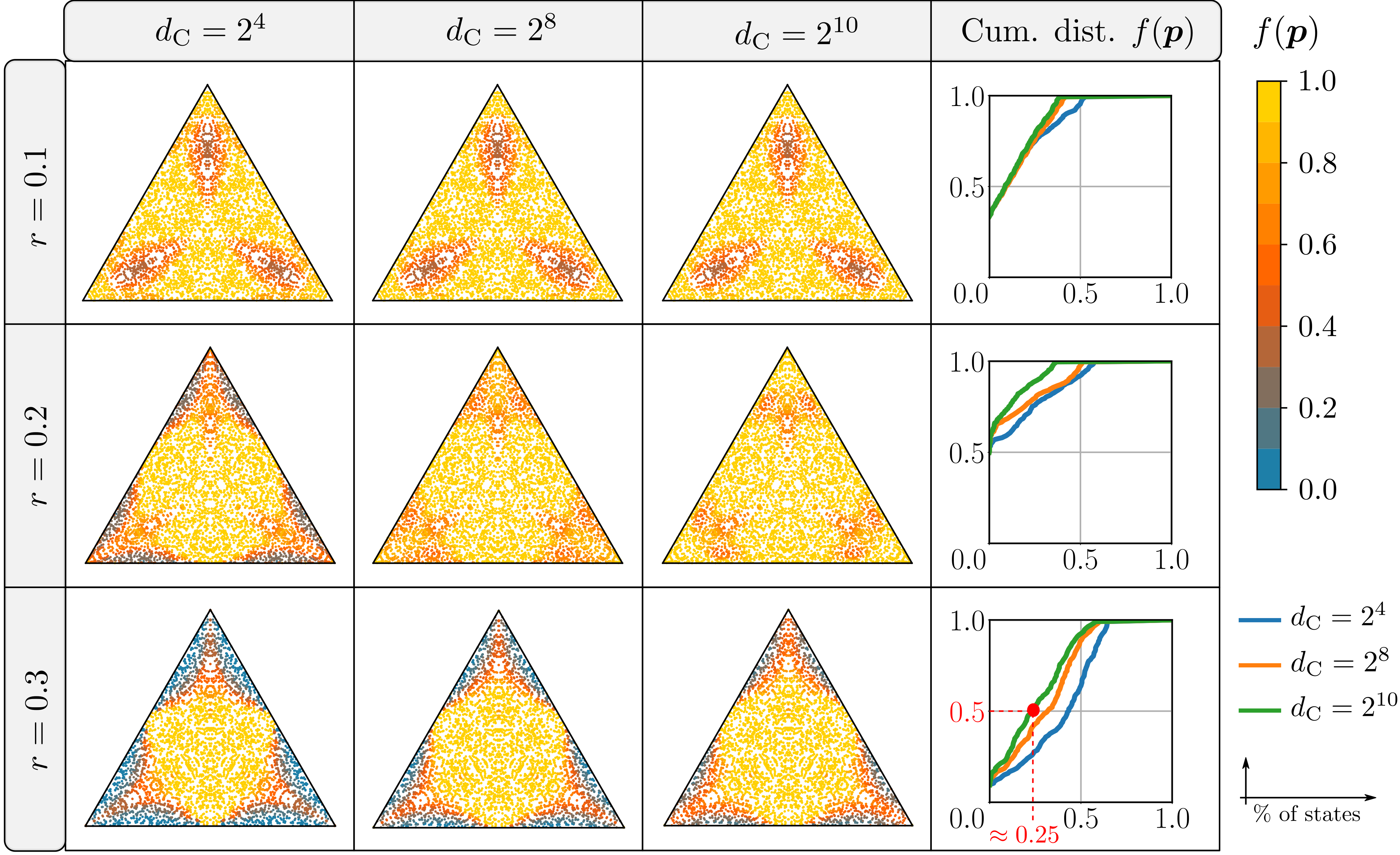}
    \caption{The fraction $f(\bm{p})$ of all states inside the catalytic activation set of $\bm{p}$ which can be catalysed using a multi-copy catalyst composed of $n \in \{4, 8, 10\}$ qubits $\omega_{\rm C} = \diag(1-r, r)$, for different values of parameter $r$. Column $4$ shows a cumulative distribution of $f(\bm{p})$, i.e. it illustrates the fraction of catalytic transformations that can be activated using a multi-copy catalyst. For example, the red exemplary point indicates that for around $0.25$ of all possible input states $\bm{p}$, a quantum system composed of $n = 10$ qubits in a state with $r = 0.3$ acts as a catalyst for less than $0.5$ of all possible output states $\bm{q}$ in CAS. Equivalently, for this choice of parameters, the multi-copy catalyst can activate more than $0.5$ of all possible transformations for approximately $0.75$ of all possible input states.}
    \label{fig:6}
\end{figure*}

\subsection{{Further directions}}
Although still somewhat preliminary in nature, these numerical findings strongly suggest that high-dimensional states, with high probability, will act as catalysts. This indicates that the universality phenomenon we uncovered here may be even more general than we can yet prove analytically. It seems reasonable to expect that a potential route to further analytic results will be to look for statements which hold with high probability. We leave this tantalising extension of our results for future work.  

Finally, in the Appendix we provide further numerical evidence that the catalytic universality phenomenon is a generic feature of sufficiently high dimensional catalysts. In particular, we $(i)$ perform numerical calculations for larger dimensional systems $\rm S$, $(ii)$ change the sampling of catalysts to other distributions and $(iii)$ choose different thresholds for the catalyst error $\eps{C}$ and $\delta$.

\section{Summary}
\label{sec5}
In this work we have studied the problem of catalysis in quantum thermodynamics. We have shown that any state can act as a (universal) catalyst for all transformations allowed by the laws of thermodynamics, provided that enough copies of the catalyst are available. In particular, in the case of genuine catalysis we have shown that all states can be used as catalysts in transforming $\rho$ into $\sigma$, as long as the two states satisfy the second laws of thermodynamics. In this case the error on the catalyst decreases sub-exponentially with the number of particles. Furthermore, in the embezzlement regime, which appears when the second laws completely vanish, we found an analogous behavior. In that case the error on the catalyst scales with the number of particles $n$ much worse than in the regime of genuine catalysis, however still approaches zero as $n \rightarrow \infty$.   

We have also emphasized that this surprising property of catalysis is a genuine feature arising in any majorization-based resource theory. In this way the results can be applied in a wide range of other contexts, ranging from the theory of pure entanglement to the theory of purity or coherence.

Finally, we conjectured that this phenomena is a genuine feature of large-dimensional catalysts and provided simple numerical evidence to support this conjecture.

\section{Discussion and open problems}
\label{sec6}
In this work we have presented and proved a surprising property of multi-copy catalyst states: that every state given sufficiently many copies can act as a universal catalyst. We believe that this new realisation is a substantial step forward in our understanding of catalysis and provides new insights both in the field of quantum thermodynamics and resource theories. 
What is more, our work opens the door for new avenues of exploration which will be of independent interest. In the following subsections we briefly sketch the most promising, in our opinion, directions of extending the results presented in this work.

\subsection{The mechanism of catalysis}
Since the seminal paper of Jonathan and Plenio \cite{Jonathan_1999} our understanding of catalysis has grown significantly. However, we still do not fully understand the real mechanism behind catalysis and how it allows for lifting some of the restrictions imposed by allowable operations.

Here we made a step forward in explaining this mechanism. However, before a satisfactory understanding can be reached, several important challenges still need to be tackled.
In particular, a long-standing open problem is determining which physical properties of states are important for catalysis? Moreover, we do not know how is the set of states reachable via catalytic transformations modified when additional constraints on the catalyst are made - e.g. in terms of energy, entropy or the distribution of its eigenvalues. What is the main property or ''resource'' relevant for catalysis? Our analytic results and preliminary numerics strongly suggest that dimension of the catalyst and distribution of its eigenvalues are both important properties which make a good catalyst. This also indicates a trade-off relation between catalyst dimension and its ability to catalyze transformations. Quantifying and understanding this potential trade-off will significantly advance our understanding of catalysis.

\subsection{Catalytic universality for generic states}
In Sec. \ref{sec4} we presented a simple numerical evidence which indicates that the catalytic universality might appear for arbitrary large-dimensional catalysts. We believe that solving this problem will shed more light on the fundamental problem of what does the catalyst really do to facilitate the transformation. In particular, should we expect to find only specific catalysts if we modify some of our initial assumptions? 

Another interesting way to proceed would be to study how important for the catalytic universality are correlations between the subsystems which form the catalyst. Looking more closely at the proofs presented here we can see that both in the embezzlement and genuine catalysis regime  the main catalytic transformation $\mathcal{E}_{\rm SC}$ does not build such correlations. In this respect, the only time where correlations can increase is during the pre and post-processing steps. However, since this potential increase in correlations is only due to the transformation error, we conjecture that it is not a necessary requirement for our results to hold.

In this respect, it would be also interesting to revisit the results from Ref. \cite{Mueller2018} and check whether in the regime where only correlations are allowed to build up (that is when the reduced state of the catalyst subsystems remain undisturbed), multi-copy catalysts can be still viewed as universal catalysts. 

\subsection{Improving the error scaling}
Another interesting direction of extending the results presented in this paper would be to reduce the disturbance induced on the catalyst during the main protocol.

In particular, in the embezzlement regime this would involve choosing a different intermediate state $\eta_{\rm C}$ and choosing a more specialized  transformation $\mathcal{E}_{\rm SC}$. It seems plausible that using a more elaborate mixing transformation would allow one to further reduce the error scaling and, potentially, approach the threshold specified by (\ref{def:emb_bnd}).

On the other hand, in the regime of genuine catalysis we expect that the scaling of the catalyst disturbance can be further improved using the tools of large deviation theory. The primary source of error in this regime are the pre- and postprocessing steps which both incur a subexponential disturbance on the catalyst. Indeed, the results of \cite{Chubb_2019} were obtained in the regime when 
the error term $\delta(n)$ scales no better than $\mathcal{O}(e^{-n^{\kappa}})$, where $\kappa \in (0, 1)$. In the large deviation regime the error is exponentially vanishing at the cost of a constant gap between the effective ($r_n$) and asymptotic ($r_{\infty}$) conversion rates. Since the presence of such gap would not alter our main protocol, we believe that working in the large deviation regime would effectively allow the error scaling to be improved to an exponential one. 

Finally, in our work we assumed that the catalyst system has a fully degenerate energetic spectrum. Due to the result from \cite{Brand_o_2015} we know that it is sufficient to consider only such catalysts. However, what we do not know is whether these catalysts are optimal with respect to either the dimension or entropy. It would be interesting to study in what way considering a nontrivial Hamiltonian of the catalyst allows the error scaling or the minimal number of necessary copies to be improved. In particular, one can ask if it is possible to achieve a faster error scaling or equivalently, require smaller catalysts for the same error, by using more energetic catalysts.

 \subsection{Quantification of catalysis regimes}
 In this work we divided catalysis in two different regimes. We explored the embezzlement regime in which the partial order between states completely vanishes and the genuine catalysis regime in which we are guaranteed that at least some of the second laws remain.
 
 It would be interesting to pursue this idea more carefully and examine which of the second laws remain as valid monotones when allowing for a certain type of error scaling on the catalyst. We know from \cite{Brand_o_2015} that when one allows for an error scaling which is linear in the number of particles, that is $\eps{C} \sim \mathcal{O}(n^{-1})$, then the non-equilibrium Helmholtz free energy remains as the only necessary and sufficient second law. What we do not know is how many and which laws remain when we allow for other type of errors. Solving this problem would certainly increase our understanding of catalysis and its physical importance for quantum thermodynamics and other resource theories. 
  
  \subsection{Catalytic universality and second laws for coherence}  
  In our work we have not explored catalysis in the regime where states $\rho_{\rm S}$ and $\sigma_{\rm S}$ contain coherences between energy levels. It is known that in this case the second laws (\ref{eq:sec_laws}) provide only necessary but not sufficient conditions for state transformations. When considering fully general states with coherences, one has to additionally satisfy a completely new set of conditions resulting from the time symmetry constraints \cite{lostaglio2015description}. Loosely speaking, these new laws tell us that coherences between energy levels must decrease during thermodynamic transformation.  In that case it would be interesting to see if the catalytic universality phenomenon can appear also for fully general coherent states.

 \subsection{Other potential directions}
 
 \textbf{Consequences of the resonance phenomenon.}  One of the main tools which we used in our protocol was the multi-copy state conversion in the moderate deviation regime \cite{Chubb_2019}. In a recent article \cite{Korzekwa_2019} it was shown that moderate deviation analysis exhibits an interesting phenomenon of resource resonance which arises during non-asymptotic state conversions in the resource theories of entanglement, coherence and thermodynamics. This resource resonance implies that certain pairs of resource states can be interconverted at the asymptotically optimal rate with negligible error, even in the regime of finite $n$. In the context of our results this means that for certain states catalytic transformations can be achieved by using many fewer copies of the catalyst state. We believe that understanding the role of the resonance behavior in catalysis can lead to novel insights not only for quantum thermodynamics, but also for the resource theories of entanglement, purity and coherence.
 
 \textbf{Extending catalytic universality to arbitrary QRTs.} A natural question which arises when studying catalysis in the context of majorization-based QRT's is whether the catalysis phenomenon can be properly defined and studied for general resource theories as well. Interestingly, there are examples of QRTs for which catalysis does not enlarge the set of states which can be reached using free operations \cite{schmid2020standard}. This leads to an interesting question: what are the necessary properties of a general QRT which allow it to have a nontrivial catalysis?  Consequently, one can further ask if the catalytic universality phenomenon can also emerge for such theories. If not, then it would mean that catalytic universality is a unique feature of majorization-based QRTs and it would be interesting to see which special aspects of such theories allow for the catalytic universality? 
   
\section{acknowledgements}
We would like to thank Michał Horodecki for helpful discussions, especially for suggesting using the ``flattest state'' in the numerical part of this work. PLB acknowledges support from the UK EPSRC (grant no. EP/R00644X/1). PS acknowledges support from a Royal Society URF (UHQT). This study did not involve any underlying data.

\bibliography{citations}

\onecolumngrid
\appendix

\section*{Appendix}
\section{Notation}
\noindent For a discrete probability distribution $\bm{p} = (p_1, p_2, \ldots, p_{d_{\rm S}})$ the Shannon entropy $H(\bm{p})$ and entropy variance $V(\bm{p})$ are defined as:
\begin{align}
    H(\bm{p}) &:= - \sum_{i} p_i \log p_i, \qquad
    V(\bm{p}) := \sum_{i} p_i \left[-\log p_i - H(\bm{p}) \right]^2 
\end{align}
Similarly we define the relative entropy $D(\bm{p}||\bm{q})$ and the relative entropy variance $V(\bm{p}||\bm{q})$ as:
\begin{align}
    D(\bm{p}||\bm{q}) := \sum_{i} p_i \log\left(\frac{p_i}{q_i}\right), \qquad V(\bm{p}||\bm{q}) := \sum_{i} p_i \left[\log\left(\frac{p_i}{q_i}\right) - D(\bm{p}||\bm{q})\right]^2.
\end{align}
Renyi entropies $H_{\alpha}$ and the corresponding Renyi relative entropies $D_{\alpha}(\bm{p}||\bm{q})$ for all $\alpha \in \mathbb{R}$ are defined by:
\begin{align}
    H_{\alpha}(\bm{p}) := \frac{\text{sgn}(\alpha)}{1 -\alpha}\log\left[ \sum_{i} p_i^{\alpha} \right], \qquad D_{\alpha}(\bm{p}||\bm{q}) := \frac{\text{sgn}(\alpha)}{1 - \alpha} \log \left[\sum_i p_i^{\alpha} q_i^{1-\alpha}\right]
\end{align}
The generalized free energies for a state block-diagonal in the energy eigenbasis $\rho$, such that $\bm{p} = \text{diag}[\rho]$, with Hamiltonian $H$ and in contact with environment at inverse temperature $\beta$, are defined as:
\begin{align}
    \label{def:gen_fre_ene}
    F_{\alpha}(\bm{p}) = D_{\alpha}(\bm{p}||\bm{g}) - \log Z,
\end{align}
where $Z = \Tr[e^{-\beta H}]$ is the partition function, $\bm{g} = \diag[{\tau}]$ and $\tau = e^{-\beta H} / Z$ is the thermal state.

\section{Preliminary lemmas}
Here we present two lemmas which are essential for proving our results. We start with a simple realization that for any thermal operation which maps the system's state into another state with some error we can always find a ``gentle'' unitary which achieves the same error, as measured by the trace distance, on the system and its environment. 
\begin{lemma}
\label{lemma:gentleu}
Let $\mathcal{T}$ be an thermal operation satisfying:
\begin{align}
    \label{eq:lemma1_1}
    \norm{\mathcal{T}[\rho_{\rm S}] - \sigma_{\rm S}}_1 \leq \epsilon.
\end{align}
for block-diagonal states $\rho_{\rm S}$ and $\sigma_{\rm S}$. Then there exists a unitary $U_{\rm SB}$ and the environment system $\rm B$ with Hamiltonian $H_{\rm B}$ such that:
\begin{align}
\norm{U_{\rm SB}\left(\rho_{\rm S} \ot \tau_{\rm B}\right)U_{\rm SB}^{\dagger} - \sigma_{\rm S} \ot \tau_{\rm B}}_1 \leq \epsilon,
\end{align}
where $\tau_{\rm B} := e^{-\beta H_{\rm B}} / Z_{\rm B}$ with $Z_{\rm B} := \Tr[e^{-\beta H_{\rm B}} ]$ is the thermal state of the environment and $[U_{\rm SB}, H_{\rm S} + H_{\rm B}] = 0$.
\end{lemma}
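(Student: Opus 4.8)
The plan is to reduce the statement about a general thermal operation to a statement about a single energy-conserving unitary, by unpacking the Stinespring-type definition of $\mathcal{T}$ in Eq.~(\ref{def_to_eq}) and then purifying/enlarging the environment so that the residual channel becomes trivial. Concretely, by definition there is some bath $\rm B_0$ with thermal state $\tau_{\rm B_0}$ and an energy-conserving unitary $U_{\rm SB_0}$ with $\mathcal{T}[\rho_{\rm S}] = \Tr_{\rm B_0'}[U_{\rm SB_0}(\rho_{\rm S}\otimes\tau_{\rm B_0})U_{\rm SB_0}^\dagger]$, where $\rm B_0'$ is some subsystem of $\rm SB_0$. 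The first step is to rewrite this trace over $\rm B_0'$ as a trace over a genuine ancilla: since the states are block-diagonal and $\tau_{\rm B_0}$ can itself be taken (after a unitary embedding into a larger degenerate register, or by Uhlmann-type dilation with thermal weights) to have a purification within a thermal state of a still larger bath, one arranges that the partial trace defining $\mathcal{T}$ coincides with tracing out a fixed factor $\rm B_1$ that started thermal. The point is to absorb whatever auxiliary system was traced out into the bath $\rm B$, so that $\mathcal{T}[\rho_{\rm S}] = \Tr_{\rm B}[\widetilde U_{\rm SB}(\rho_{\rm S}\otimes\tau_{\rm B})\widetilde U_{\rm SB}^\dagger]$ with $[\widetilde U_{\rm SB},H_{\rm S}+H_{\rm B}]=0$ and the trace over \emph{all} of $\rm B$.

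Once $\mathcal{T}$ is written with a full partial trace over $\rm B$, the second and main step is the ``gentleness'' upgrade: I want a unitary whose output on $\rm SB$, not just on $\rm S$, is $\epsilon$-close to $\sigma_{\rm S}\otimes\tau_{\rm B}$. The idea is to post-process the bath. Let $\rho_{\rm SB}' := \widetilde U_{\rm SB}(\rho_{\rm S}\otimes\tau_{\rm B})\widetilde U_{\rm SB}^\dagger$, so $\Tr_{\rm B}\rho_{\rm SB}' = \mathcal{T}[\rho_{\rm S}]$ is $\epsilon$-close to $\sigma_{\rm S}$. Enlarge the bath to $\rm B\rm B'$ with $\rm B'$ a copy carrying a degenerate (or appropriately thermal) Hamiltonian, and append a further energy-conserving unitary $W_{\rm SBB'}$ that ``repairs'' the bath: using that both $\rho_{\rm SB}'$ and any target are block-diagonal with the same energy structure, and that within each energy eigenspace the reduced bath marginal can be rotated by a unitary conditioned on $\rm S$, one can find $W$ that maps $\rho_{\rm SB}'\otimes\tau_{\rm B'}$ to a state whose $\rm SBB'$-marginal is $\sigma_{\rm S}\otimes\tau_{\rm B}\otimes\tau_{\rm B'}$ up to error controlled by $\epsilon$. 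Here the key technical input is that trace distance on the system marginal bounds the trace distance achievable after an optimal ``Petz-like'' recovery that restores the product form; for block-diagonal states this amounts to an elementary classical coupling argument, level-by-level in energy, which does not inflate the error beyond a constant. Setting $U_{\rm SB}$ (with $\rm B$ now meaning $\rm B\rm B'$) equal to the composition $W_{\rm SBB'}\,(\widetilde U_{\rm SB}\otimes\mathbb{1}_{\rm B'})$ gives an energy-conserving unitary with the claimed property.

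The hard part will be making the ``repair'' step precise while keeping the error at $\epsilon$ rather than, say, $\sqrt{\epsilon}$ or $2\epsilon$: one must exhibit the recovery unitary $W$ explicitly from the structure of the thermal operation and argue that, because everything commutes with $H_{\rm S}+H_{\rm B}$ and the states are diagonal, the problem decouples across energy eigenspaces into a classical transport problem where an optimal coupling realizing the trace distance exists and can be implemented by a permutation (hence a unitary). A clean alternative route, which I would pursue if the direct construction gets unwieldy, is to invoke the characterization of thermal operations on block-diagonal states via thermo-majorization together with the embedding map: $\norm{\mathcal{T}[\rho_{\rm S}]-\sigma_{\rm S}}_1\le\epsilon$ means some $\sigma_{\rm S}''$ with $\norm{\sigma_{\rm S}''-\sigma_{\rm S}}_1\le\epsilon$ is \emph{exactly} thermo-majorized by $\rho_{\rm S}$, and an exact thermo-majorization relation is implemented by a permutation of the embedded (microcanonical) levels, i.e.\ by a genuine energy-conserving unitary on $\rm S$ together with a large enough degenerate bath; that permutation, being a global unitary on $\rm SB$, automatically acts ``gently'' and one reads off the bound on $\rm SB$ directly from the bound on $\rm S$. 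I would present this second route as the proof and relegate the coupling lemma to a remark.
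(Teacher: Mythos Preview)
Your preferred second route is the paper's approach in outline, but you skip the part that carries all the weight. The paper does not invoke thermo-majorization or the embedding map; it works directly with the Gibbs-stochastic transition matrix $r(j|i)$ of $\mathcal{T}$ on diagonal states, builds an explicit bath with degeneracies scaling as $g(E+\epsilon)=g(E)e^{\beta\epsilon}$, and takes $U_{\rm SB}$ to be a permutation within each total-energy shell that moves exactly $n_{j|i}(E)=d_i(E)\,r(j|i)$ levels from system block $i$ to block $j$. It then writes out the joint output state in this basis and reads off that the $\rm SB$ trace distance equals $\tfrac{1}{2}\sum_j|q_j'-q_j|\le\epsilon$ with $q_j'=\sum_i r(j|i)p_i$. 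Your assertion that the permutation ``automatically acts gently'' is exactly the step this calculation is meant to justify, and it is \emph{not} automatic: a generic energy-conserving permutation producing the correct $\rm S$-marginal will leave the bath correlated with $\rm S$, and then the $\rm SB$ error need not be bounded by the $\rm S$ error. One needs the specific choice $n_{j|i}(E)=d_i(E)\,r(j|i)$ applied uniformly across shells, together with the exponential-degeneracy assumption on the bath, to make the bath factor out; your proposal names the right object but does not supply this construction or the accompanying computation.

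Your first route---take the given Stinespring dilation and then repair the bath with a further energy-conserving $W$---is genuinely different from the paper and, as you anticipate, makes the error control awkward. The paper avoids it entirely: it never uses the original dilation of $\mathcal{T}$, keeps only the classical numbers $r(j|i)$, and builds a fresh bath and unitary tailored to them. That freedom to choose $\rm B$ from scratch is what delivers the sharp $\epsilon$ bound without any Petz-type recovery or coupling argument.
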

\begin{proof}
Let us write the states $\rho_{\rm S}$ and $\sigma_{\rm S}$ in their energy eigenbasis as:
\begin{align}
    \rho_{\rm S} = \sum_{i} p_{i} \dyad{i}_{\rm S}, \qquad 
    \sigma_{\rm S} = \sum_{i} q_i \dyad{i}_{\rm S}.
\end{align}
Let $\{r(j|i)\}$ denote the transition probabilities of the map $\mathcal{T}$, i.e. $\mathcal{T}[\rho_{\rm S}] = \sum_{i,j} p_{i}r(j|i) \dyad{j}_{\rm S}$. Then (\ref{eq:lemma1_1}) implies:
\begin{align}
    \label{eq:error_sigma}
    \norm{\mathcal{T}[\rho_{\rm S}] - \sigma_{\rm S}}_1 =  \frac{1}{2} \sum_j  \Big|q_j - \sum_ip_i r(j|i)\Big| \leq \epsilon. 
\end{align}
We now represent our total state $\rho_{\rm S} \ot \gamma_{\rm B}$ in blocks of constant energy. This is a standard approach which is described in more detail e.g. in \cite{Horodecki2013,Lostaglio2019}. To do so we first write the Hamiltonian of the bath as: 
\begin{align}
    \label{eq:heat_bath_ham}
    H_{\rm B} = \sum_{j}\sum_{g}^{g(E_j^{\rm B})}E_j^{\rm B} \dyad{E_j^{\rm B}}_{\rm B},
\end{align}
where $g(E)$ is the degeneracy of the energy level $E$. In what follows we  make two assumptions about the spectrum of the heat bath: $(i)$ all energy gaps are present, i.e. for any two energy levels of the system, $E_i^{\rm S}$ and $E_j^{\rm S}$, there exists $E_k^{\rm B}$ and $E_l^{\rm B}$ such that $E_i^{\rm S}-E_j^{\rm S} = E_k^{\rm B} - E_l^{\rm B}$ and $
(ii)$ degeneracies scale exponentially, that is $g(E + \epsilon) = g(E) e^{\beta \epsilon}$. Both of these  are standard assumptions in thermodynamics. The first condition is valid for heat baths with continuous energy spectra and can be approached arbitrarily well by discrete heath baths. The second condition is naturally satisfied when the average energy of the heat bath is much larger than that of the system.

With (\ref{eq:heat_bath_ham}) in mind we can write the initial state of the environment as: 
\begin{align}
    \tau_{\rm B} = \sum_{j}\sum_{g}^{g\left(E_{j}^{\rm B}\right)} \frac{e^{-\beta E_{j}^{\rm B}}}{Z_{\rm B}}  \dyad{E_j^{\rm B}, g}_{\rm B}
\end{align}
The joint state of the system $\rm S$ and the environment $\rm B$ can be written as:
\begin{align}
    \rho_{\rm S} \ot \tau_{\rm B} &= \sum_{i, j} \sum_{g}^{g\left(E_j^{\rm B}\right)} p_i \frac{e^{-\beta E_j^{\rm B}}}{Z_{\rm B}} \dyad{i}_{\rm S} \ot \dyad{E_j^{\rm B}, g}_{\rm B}\\ &=\sum_{E} \sum_{i} \sum_{g}^{g(E-E_i^{\rm S})} p_i \frac{e^{-\beta E}}{Z_{\rm B}} e^{\beta E_i^{\rm S}} \dyad{i}_{\rm S} \ot \dyad{E - E_i^{\rm S}, g}_{\rm B},
\end{align}
where in the second line we replaced summation over the heat bath energies with a summation over the total energies $E := E_{i}^{\rm S} + E_j^{\rm B}$. Using assumption $(ii)$ and denoting $d_i(E) = g(E)e^{-\beta E_i^{\rm S}}$ leads to:
\begin{align}
    \rho_{\rm S} \ot \tau_{\rm B} = \sum_E  \sum_i \sum_g^{d_i(E)} C(E)  \frac{p_i}{d_i(E)} \dyad{E - E_i^{\rm S}, i, g}_{\rm SB}. 
\end{align}
where we labelled $C(E) := g(E) e^{-\beta E} / Z_{\rm B}$. Now, the assumption that $[U_{\rm SB}, H_{\rm S}+H_{\rm B}] = 0$ implies that the unitary $U_{\rm SB}$ is block-diagonal in the basis of $H_{\rm SB}$. Writting $U_{\rm SB} = \bigoplus_{E} U^{(E)}_{\rm SB}$ we can choose each $U^{(E)}_{\rm SB}$ to be a permutation of eigenvalues within a fixed energy shell determined by $E$. Such a permutation can be described by providing a collection of numbers $\{n_{j|i}(E)\}$ which denote the number of eigenvalues moved from block $i$ to block $j$ in energy shell $E$. Unitarity of $U_{\rm SB}$ and the fact that it is energy-preserving implies that these numbers for all $E$ have to satisfy the constraints: $\sum_{i} n_{j|i}(E) = d_j(E)$ and $n_{j|i}(E) = d_i(E)$. For our purposes we choose these numbers to be:
\begin{align}
    \forall \, E \qquad n_{j|i} (E) = d_i(E) r(j|i). 
\end{align}
It can be easily checked that this is a feasible choice as long as the channel probabilities $\{r(j|i)\}$ preserve the thermal state (which is always the case). The final state of the system and the environment after application of the unitary $U_{\rm SB}$ becomes:
\begin{align}
    U_{\rm SB} \,  \left[\rho_{\rm S} \ot \tau_{\rm B} \right]\, U_{\rm SB}^{\dagger} &= \sum_{E} \sum_{i,j} \sum_{g}^{d_i(E)} C(E)  \frac{p_i}{d_i(E)} \frac{n_{j|i}(E)}{d_j(E)} \dyad{E, j,g}_{\rm SB}\\
    &= \sum_{E} \sum_{i,j} C(E) \frac{p_i}{d_j(E)} r(j|i) \dyad{E, j, g}_{\rm SB} \\
    &=  \sum_{E} \sum_{j}C(E) \frac{q_j'}{d_j(E)} \dyad{E, j,g}_{\rm SB},
\end{align}
where in the third line we labelled $q_j' = \sum_i p_i r(j|i)$. The total error on the system and the bath then reads:
\begin{align}
    \norm{U_{\rm SB} \,\left[\rho_{\rm S} \ot \tau_{\rm B}\right]\, U_{\rm SB}^{\dagger} - \sigma_{\rm S} \ot \tau_{\rm B}}_{1} &= \norm{\sum_{j, E}\!\!\sum_{g}^{d_j(E)} \frac{C(E)}{d_j(E)} \left(q_j' - q_j\right) \dyad{E,j,g}_{\rm SB}}_1\! \\
    &= \sum_{j, E} \sum_{g}^{d_j(E)} \frac{C(E)}{d_j(E)} \cdot \frac{1}{2} |q_j'-q_j| \\
    &= \frac{1}{2} \sum_j |q_j'-q_j| \\
    &\leq \epsilon,
\end{align}
where the inequality follows from (\ref{eq:error_sigma}).
\end{proof}
The second lemma was proven in \cite{Chubb_2019} and provides an estimate on the transformation error when converting between multiple copies of states.  
\begin{lemma}
\label{lemma:nonasymp} 
For any $\alpha \in (0, 1)$ and any states diagonal in the energy basis $\rho = \emph{diag}[\bm{p}]$ and $\sigma = \emph{diag}[\bm{q}]$ there exists a thermal operation $\mathcal{T}$ such that:
\begin{align}
    \norm{\mathcal{T}[\rho^{\ot n} \ot \tau^{\ot n r_n}] - \sigma^{\ot n r_n} \ot \tau^{\ot n
    }}_1 \leq \epsilon_n,
\end{align}
where the error $\epsilon_n$ and the conversion rate $r_n$ are given by:
\begin{align}
    \label{eq:lemma:nonasymp_eqs}
    \epsilon_n &= e^{-n^{\kappa} }, \\
    \label{eq:lemma:nonasymp_rn}
    r_n &= \frac{D(\rho||\tau)}{D(\sigma||\tau)}\left(1 - \sqrt{\frac{2V(\rho||\tau)}{D(\rho||\tau)}} \left|1-\frac{1}{\sqrt{v}}\right| \cdot \frac{1}{\sqrt{n^{1-\kappa}}}  \right) = \frac{D(\rho||\tau)}{D(\sigma||\tau)} - \mathcal{O}\left(\frac{1}{\sqrt{n^{1-\kappa}}}\right).
\end{align}
and where the parameter $v$ is given by:
\begin{align}
    v := \frac{V(\bm{p}||\bm{g}) / D(\bm{p}||\bm{g})}{V(\bm{q}||\bm{g}) / D(\bm{q}||\bm{g})}
\end{align}
\end{lemma}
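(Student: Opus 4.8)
The plan is to turn the statement into a purely combinatorial claim about domination of thermo-majorisation curves of i.i.d.\ states, and then to control those curves with a \emph{moderate-deviation} refinement of the asymptotic equipartition property (this is exactly the argument of \cite{Chubb_2019}, specialised to thermodynamics). First I would reduce to approximate thermo-majorisation: since adjoining or discarding a subsystem in the Gibbs state is itself a thermal operation, a channel $\mathcal{T}$ with $\|\mathcal{T}[\rho^{\otimes n}\otimes\tau^{\otimes n r_n}]-\sigma^{\otimes n r_n}\otimes\tau^{\otimes n}\|_1\le\epsilon_n$ exists iff there is a thermal operation mapping $\rho^{\otimes n}$ to within trace distance $\epsilon_n$ of $\sigma^{\otimes m}$, with $m=n r_n$ (the explicit thermal padding only serves to equalise input and output dimensions, which matters when $\mathcal{T}$ is later realised by the gentle unitary of Lemma~\ref{lemma:gentleu}). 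For block-diagonal states the latter has a clean characterisation \cite{Horodecki2013}: such a thermal operation exists precisely when the generalised Lorenz (thermo-majorisation) curve $L_{\rho^{\otimes n}}$ dominates, pointwise, the curve of $\sigma^{\otimes m}$ raised by $\epsilon_n$ and suitably renormalised. So it suffices to exhibit $m=n r_n$ as large as claimed for which this curve domination holds.

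The key observation is that $L_{\rho^{\otimes n}}$ is the receiver-operating-characteristic curve of the hypothesis test $\rho^{\otimes n}$ versus $\tau^{\otimes n}$: parametrised by a log-likelihood threshold $u$, it passes through $\big(\Pr_{\tau^{\otimes n}}[\,\overline{Y}^{\rho}_{n}\ge u\,],\ \Pr_{\rho^{\otimes n}}[\,\overline{Y}^{\rho}_{n}\ge u\,]\big)$, where $\overline{Y}^{\rho}_{n}=\tfrac1n\sum_{k}\log(p_{i_k}/g_{i_k})$ is the empirical mean of the relative surprisal, and likewise for $\sigma^{\otimes m}$ with $\overline{Y}^{\sigma}_{m}$. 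Under $\rho^{\otimes n}$ this average has mean $D(\rho\|\tau)$ and variance $V(\rho\|\tau)/n$, analogously for $\sigma$; under $\tau^{\otimes n}$ the relevant upper-tail rate function is the Legendre transform $\Lambda^{*}(u)\approx u+(u-D)^{2}/2V$ near $u=D$. Chasing the geometry (using concavity and monotonicity of the curves), curve domination with vertical slack $\epsilon_n$ reduces to two one-sided tail estimates tied together by a balance condition $m\,D(\sigma\|\tau)+(\text{target slack})\le n\,D(\rho\|\tau)-(\text{source slack})$: a lower-tail event $\{\overline{Y}^{\rho}_{n}< D(\rho\|\tau)-\delta^{\rho}_{n}\}$ under $\rho^{\otimes n}$, and an analogous event for $\sigma^{\otimes m}$ whose likelihood ratio controls where the target curve's knee sits.

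To finish I would invoke the moderate-deviation principle for i.i.d.\ sums: for deviations $t_n$ with $t_n/\sqrt n\to\infty$ and $t_n=o(n)$ one has $\Pr[\sum_{k=1}^{n}Y_k\le n\mu-t_n]=\exp(-\tfrac{t_n^{2}}{2n\,\mathrm{Var}(Y)}(1+o(1)))$, and similarly for the upper tail. Demanding that each tail be at most a constant multiple of $\epsilon_n=e^{-n^{\kappa}}$ forces $\delta^{\rho}_{n}\asymp\sqrt{2V(\rho\|\tau)\,n^{\kappa-1}}$ on the source side and the analogous $\delta^{\sigma}_{m}$ on the target side; inserting these into the balance condition and solving for the largest admissible $m/n$ yields the leading term $D(\rho\|\tau)/D(\sigma\|\tau)$ together with a correction of order $n^{-(1-\kappa)/2}$, whose explicit coefficient is obtained after rewriting the target-side fluctuation in per-$\rho$-copy units (which is where the ratio $v$ enters) and is exactly $\sqrt{2V(\rho\|\tau)/D(\rho\|\tau)}\,|1-1/\sqrt v|$. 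The absolute value reflects that, depending on the sign of $1-1/\sqrt v$, the binding constraint switches between the source lower tail and the target knee location.

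I expect the main obstacle to be the word \emph{pointwise}: the two tail bounds control the curves only near the single threshold where domination is tightest, whereas the domination must hold at every $x$. This is handled by a union bound over a polynomially fine net of thresholds (the polynomial loss being absorbed into the exponent by an arbitrarily small change of $\kappa$, or of the constant), together with the concavity and monotonicity of the Lorenz curves, and by checking that the $1+o(1)$ factors in the moderate-deviation asymptotics are uniform enough not to degrade the stated leading coefficient. This bookkeeping, rather than any single inequality, is the technical heart of the argument.
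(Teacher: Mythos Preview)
Your sketch is essentially the argument of \cite{Chubb_2019}, and it is correct in outline: the reduction to approximate thermo-majorisation, the identification of the Lorenz curve with the ROC curve of the binary hypothesis test against $\tau^{\otimes n}$, and the moderate-deviation control of the relevant tails are precisely the ingredients that produce the rate \eqref{eq:lemma:nonasymp_rn} with error $e^{-n^{\kappa}}$.

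However, there is nothing to compare against in the present paper. The paper does not prove Lemma~\ref{lemma:nonasymp}; it is quoted as an external result (``The second lemma was proven in \cite{Chubb_2019}\ldots'') and used as a black box inside the main protocol. So your proposal is not an alternative to the paper's proof but rather a reconstruction of the cited reference's proof. In that sense it is appropriate, and the only caveats are the ones you already flag: the uniformity of the moderate-deviation $1+o(1)$ factors and the passage from a single tight threshold to pointwise curve domination. Both are handled in \cite{Chubb_2019} exactly along the lines you indicate (netting plus concavity), so the sketch would go through.
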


\section{Details of the main protocol}
\subsection{Setting up the scene}
We begin by describing our main protocol of catalytic state conversion in full detail. Let $\rm S$ label the system, $\rm C$ the catalyst and let $\rm B = \rm B_{1} B_{2}$ denote the ambient heat bath which we decompose into two parts for reasons that will become clear later on. Let $U_{\rm CB_{1}}$  be the unitary which is applied in the first step of the protocol. 
Similarly we define the unitary $V_{\rm SCB_2}$ which is applied in the second step and which acts on $\rm SC$ and the second part of the heat bath $\rm B_2$. To simplify notation we will denote the channels generated by these unitaries with:
\begin{align}
\mathcal{U}_{\rm CB_1}[\cdot] := U_{\rm CB_1} \left(\cdot\right)  U_{\rm CB_1}^{\dagger}, \qquad   \mathcal{U}^{\dagger}_{\rm CB_1}[\cdot] := U_{\rm CB_1}^{\dagger} \left(\cdot\right)  U_{\rm CB_1}, \qquad \mathcal{V}_{\rm SCB_{2}} := V_{\rm SCB_2} \left(\cdot \right) V_{\rm SCB_2}^{\dagger}.
\end{align}
Our protocol starts with a product state $\rho_{\rm SCB}^{(0)} = \rho_{\rm S} \ot \omega_{\rm C}^{\ot n} \ot \tau_{\rm B_1}^{\,} \ot \tau_{\rm B_2}^{\,}$. The full state of the joint system at each step of the protocol can be written as:
\begin{align}
    &\rho_{\rm SCB_{1}B_{2}}^{(1)} = \left(\mathcal{I}_{\rm S} \ot \mathcal{U}_{\rm CB_1}\ot \mathcal{I}_{\rm B_2}\right)[\rho_{\rm SCB}^{(0)}], \qquad 
    \rho_{\rm SCB_{1}B_{2}}^{(2)} = \left(\mathcal{V}_{\rm SCB_{2}}\ot \mathcal{I}_{\rm B_1}\right)[\rho_{\rm SCB_{1}B_{2}}^{(1)}], \\
    &\rho_{\rm SCB_{1}B_{2}}^{(3)} = \left(\mathcal{I}_{\rm S B_2} \ot \mathcal{U}_{\rm CB_1}^{\dagger} \right)[\rho_{\rm SCB_{1}B_{2}}^{(2)}].
\end{align}
Notice that if the unitaries $U_{\rm CB_1}$ and $V_{\rm SCB_2}$ both commute with the total Hamiltonian $H_{\rm SCB}$ then so their composition commutes as well. Hence, tracing out the degrees of freedom associated with the bath leads to a valid thermal operation. The final state of the system and the catalyst after this procedure can be expressed as:
\begin{align}
    \rho_{\rm SC}' 
    = \Tr_{\rm B_1B_2}\left[\rho_{\rm SCB_1B_2}^{(3)}\right] 
    &=  \Tr_{\rm B_1B_2} \left[\left(\mathcal{I}_{\rm S} \ot \mathcal{U}_{\rm CB_1}^{\dagger}\ot \mathcal{I}_{\rm B_2}\right)\left(\mathcal{V}_{\rm SCB_{2}}\ot \mathcal{I}_{\rm B_1}\right)\left(\mathcal{I}_{\rm S B_2} \ot \mathcal{U}_{\rm CB_1}^{} \right)[\rho_{\rm S} \ot \omega_{\rm C}^{\ot n} \ot \tau_{\rm B_1}^{\,} \ot \tau_{\rm B_2}^{\,}]\right] \\
    &=: \mathcal{T}_{\rm SC}[\rho_{\rm S} \ot \omega_{\rm C}^{\ot n}],
\end{align}
where we implicitly defined the effective thermal operation $\mathcal{T}_{\rm SC}$ acting on the system and the catalyst. In what follows we will argue that by an appropriate choice of unitaries $U_{\rm CB_1}$ and $V_{\rm SCB_2}$ we can obtain the approximate catalytic state conversion for a pair of states $\rho_{\rm S}$ and $\sigma_{\rm S}$, i.e.:
\begin{align}
\mathcal{T}_{\rm SC}[\rho_{\rm S} \ot \omega_{\rm C}^{\ot n}] \approx \sigma_{\rm S} \ot \omega_{\rm C}^{\ot n},
\end{align}
Before we get into the details of how to choose these unitaries let us first find a general expression for the errors on the system and the catalyst.

\subsection{Error analysis}
We defined the errors on the system and the catalyst in the following way:
\begin{align}
    \eps{S} := \norm{\Tr_{\rm C}\left[\mathcal{T}_{\rm SC}[\rho_{\rm S} \ot \omega_{\rm C}^{\ot n}]\right] - \sigma_{\rm S}}_1, \qquad
    \eps{C} := \norm{\Tr_{\rm S}\left[\mathcal{T}_{\rm SC}[\rho_{\rm S} \ot \omega_{\rm C}^{\ot n}]\right] - \omega_{\rm C}^{\ot n}}_1
\end{align}
Let us start by writing explicitly the error on the catalyst. To keep the notation compact we define the following effective channel acting on the catalyst:
\begin{align}
    \label{def:EC}
    \mathcal{E}_{\rm C}[\omega] := \Tr_{\rm SB_{2}}\left[ V_{\rm SCB_2} \left( \rho_{\rm S} \ot \omega_{\rm C} \ot \tau_{\rm B_2}\right)V_{\rm SCB_2}^{\dagger} \right].
\end{align}
Using this definition we can write the final error on the catalyst as:
\begin{align}
\eps{C} &= \norm{\Tr_{\rm SB_1B_2} \left[\left(\mathcal{I}_{\rm S} \ot \mathcal{U}_{\rm CB_1}^{\dagger}\ot \mathcal{I}_{\rm B_2}\right)\left(\mathcal{V}_{\rm SCB_{2}}\ot \mathcal{I}_{\rm B_1}\right)\left(\mathcal{I}_{\rm S B_2} \ot \mathcal{U}_{\rm CB_1}^{} \right)[\rho_{\rm S} \ot \omega_{\rm C}^{\ot n} \ot \tau_{\rm B_1}^{\,} \ot \tau_{\rm B_2}^{\,}]\right] - \omega_{\rm C}^{\ot n}}_{1} \\
&= \norm{\Tr_{\rm B_1} \left[\mathcal{U}_{\rm CB_1}^{\dagger} \left(\mathcal{E}_{\rm C}\ot \mathcal{I}_{\rm B_1}\right) \mathcal{U}_{\rm CB_1} \left[\omega_{\rm C}^{\ot n} \ot \tau_{\rm B_1}^{\,} \right]\right] - \omega_{\rm C}^{\ot n}}_{1} \\
&=  \norm{\Tr_{\rm B_1} \left[\mathcal{U}_{\rm CB_1}^{\dagger} \left(\mathcal{E}_{\rm C}\ot \mathcal{I}_{\rm B_1}\right) \left(\mathcal{U}_{\rm CB_1} \left[\omega_{\rm C}^{\ot n} \ot \tau_{\rm B_1}^{\,}\right] - \omega_{\rm CB_1}'\right)\right] + \Tr_{\rm B_1} \left[\mathcal{U}_{\rm CB_1}^{\dagger} \left(\mathcal{E}_{\rm C}\ot \mathcal{I}_{\rm B_1}\right) \left[ \omega'_{\rm CB_1} \right]\right] - \omega_{\rm C}^{\ot n} }_1 \\
&\leq\norm{\Tr_{\rm B_1} \left[\mathcal{U}_{\rm CB_1}^{\dagger} \left(\mathcal{E}_{\rm C}\ot \mathcal{I}_{\rm B_1}\right) \left(\mathcal{U}_{\rm CB_1} \left[\omega_{\rm C}^{\ot n} \ot \tau_{\rm B_1}^{\,}\right] - \omega_{\rm CB_1}'\right)\right]}_1 + \norm{\Tr_{\rm B_1} \left[\mathcal{U}_{\rm CB_1}^{\dagger} \left(\mathcal{E}_{\rm C}\ot \mathcal{I}_{\rm B_1}\right) \left[ \omega'_{\rm CB_1} \right]\right] - \omega_{\rm C}^{\ot n} }_1 \\
&\leq \norm{\mathcal{U}_{\rm CB_1} \left[\omega_{\rm C}^{\ot n} \ot \tau_{\rm B_1}^{\,}\right] - \omega_{\rm CB_1}'}_1 + \norm{\Tr_{\rm B_1} \left[\mathcal{U}_{\rm CB_1}^{\dagger} \left(\mathcal{E}_{\rm C}\ot \mathcal{I}_{\rm B_1}\right) \left[ \omega'_{\rm CB_1} \right]\right] - \omega_{\rm C}^{\ot n} }_1
\end{align}
Let us label the first of the terms above with $\delta(n)$, i.e. $\delta(n):= \norm{\mathcal{U}_{\rm CB_1} \left[\omega_{\rm C}^{\ot n} \ot \tau_{\rm B_1}^{\,}\right] - \omega_{\rm CB_1}'}_1$. With this in mind and after subtracting and adding $\omega_{\rm CB_1}'$ in the second term we can further write:
\begin{align}
\eps{C} &\leq \delta(n) + \norm{\Tr_{\rm B_1} \left[\mathcal{U}_{\rm CB_1}^{\dagger} \left( \left(\mathcal{E}_{\rm C}\ot \mathcal{I}_{\rm B_1}\right) \left[ \omega'_{\rm CB_1} \right]-\omega_{\rm CB_1}'\right)\right] + \Tr_{\rm B_1}\left[\mathcal{U}_{\rm CB_1}^{\dagger}[\omega_{\rm CB_1}']\right]- \omega_{\rm C}^{\ot n} }_1 \\
&\leq \delta(n) + \norm{\left(\mathcal{E}_{\rm C}\ot \mathcal{I}_{\rm B_1}\right)[\omega_{\rm CB_1}'] - \omega_{\rm CB_1}'}_1 + \norm{\Tr_{\rm B_1}\left[\mathcal{U}_{\rm CB_1}^{\dagger}[\omega_{\rm CB_1}'] - \omega_{\rm C}^{\ot n} \ot \tau_{\rm B_1}\right]}_1.
\end{align}
We now label the second term with $\nu(n) := \norm{\left(\mathcal{E}_{\rm C}\ot \mathcal{I}_{\rm B_1}\right)[\omega_{\rm CB_1}'] - \omega_{\rm CB_1}'}_1$ and again use the fact that trace distance decreases under partial trace and is invariant under unitaries. This leads to the following upper bound:
\begin{align}
    \label{eq:tot_err}
    \eps{C} \leq 2 \delta(n) + \nu(n).
\end{align}

This decomposes the total error on the catalyst into two contributions, $\delta(n)$ and $\nu(n)$. The first of these terms accounts for the error due to producing a state different from $\omega_{\rm CB_1}'$ in the preprocesssing step, as well as the failure in recovering the initial state $\omega_{\rm C}^{\ot n} \ot \tau_{\rm B_1}$ after applying channel $\mathcal{E}_{\rm C}$ in the postprocessing step. The second term, $\nu(n)$, is responsible for the disturbance applied to the catalyst and the heat bath $\rm B_1$ in the catalytic step via channel $\mathcal{E}_{\rm C}$ and changes depending on which regime of catalysis one is currently interested in.

Note that so far the state $\omega_{\rm CB_1}'$ is arbitrary and appears in both contributions $\delta(n)$ and $\nu(n)$ to the total error on the catalyst. Let us now decompose the bath $\rm B_1$ into two parts $\rm B_1'$ and $\rm B_1''$ and choose the following state:
\begin{align}
\label{def:omegaCB1}
\omega_{\rm CB_1}' := \eta_{\rm C}^{\ot n r_n} \ot \tau_{\rm B_1'}^{\ot n} \ot \tau_{\rm B_1''},
\end{align}
where $0 \leq r_n \leq 1$ is some real number which potentially depends on $n$ and $\eta_{\rm C}$ is a arbitrary state which we will choose later. 

We now invoke Lemma \ref{lemma:nonasymp} which assures that there is a thermal operation $\mathcal{T}_{\rm CB_1'}$ which transforms $n$ copies of the initial catalyst state $\omega_{\rm C}^{\ot n}$ into another state $\eta_{\rm C}^{\ot n r_n}$ (potentially with different dimensions). Here $r_n$ is the conversion rate which, up to the leading order, depends on non-equilibrium Helmholtz free energies of $\omega$ and $\eta$. Formally thermal operation $\mathcal{T}_{\rm CB_1'}$ satisfies:
\begin{align}
    \norm{\mathcal{T}_{\rm CB_1'}\left[\omega_{\rm C}^{\ot n} \ot \tau_{\rm B_1'}^{\ot n r_n}\right]  - \eta_{\rm C}^{\ot nr_n} \ot \tau_{\rm B_1'}^{\ot n}}_1 \leq \epsilon_n,
\end{align}
where the conversion rate $r_n$ and the transformation error $\epsilon_n$ are given in (\ref{eq:lemma:nonasymp_eqs}). Now we use Lemma \ref{lemma:gentleu} which states that every thermal operation can be written in a ``Stinespring form'' using a global unitary $U_{\rm CB_1'B_1''}$ acting on the system $\rm CB_1'$ and the heat bath $\rm B_1''$, i.e.:
\begin{align}
    \mathcal{T}_{\rm CB_1'}[\rho_{\rm CB_1'}] = \Tr_{\rm B_1''}\left[U_{\rm CB_1'B_1''}\left(\rho_{\rm CB_1'}\ot \tau_{\rm B_1''}\right)U_{\rm CB_1'B_1''}^{\dagger}\right].
\end{align}
Importantly, this unitary does not increase the transformation error $\epsilon_n$ when enlarged to the bigger system $\rm CB_1'B_1''$, that is:
\begin{align}
    \norm{U_{\rm CB_1'B_1''} \left( \omega_{\rm C}^{\ot n} \ot \tau_{\rm B_1'}^{\ot n r_n} \ot \tau_{\rm B_1''}^{} \right)U_{\rm CB_1'B_1''}^{\dagger} - \eta_{\rm C}^{\ot nr_n} \ot \tau_{\rm B_1'}^{\ot n} \ot \tau_{\rm B_1''} }_1  \leq \epsilon_n
\end{align}
This implies that the transformation error induced by the unitary channel $\mathcal{U}_{\rm CB_1}$ is upper bounded as:
\begin{align}
\label{eq:del_err}
\delta(n) \leq \epsilon_n = e^{-n^{\kappa}}.
\end{align}
Up to this point we have the freedom to choose the operation $\mathcal{E}_{\rm C}$ and the state $\eta$, both of which can be thought of as parameters of our protocol. We now discuss two different choices which both lead to quantitatively different results.

\subsubsection{Embezzlement regime}
In this section we will apply a recent result from the theory of quantum communication called ``convex-split lemma'' to show that any state, given that enough copies of it are available, can act as an embezzler for any state transformation. 

Recall that our main goal is to carry out the transformation $\rho \rightarrow \sigma$ on the system $\rm S$ in the same time keeping the error on the catalyst small. To simplify notation let us denote the number of copies of the state $\eta$ obtained after applying the unitary $U_{\rm CB_1}$ with $m = n r_n$. In order to specify the channel $\mathcal{E}_{\rm C}$ defined in $(\ref{def:EC})$ let us first recall the process from the main text:
\begin{align}
   \mathcal{T}^{(\text{mix})} [\cdot] = \frac{1}{m} \sum_{i=0}^{m} S_{(0, i)} (\cdot) S_{(0,i)}^{\dagger},
\end{align}
where $S_{(i, j)}$ is a unitary which swaps subsystems $i$ and $j$ leaving all the remaining systems untouched, i.e.:
\begin{align}
    S_{(i, j)} \ket{a_0, a_1, a_2, \ldots a_i, \ldots a_j, \ldots} = \ket{a_0, a_1, a_2, \ldots a_j, \ldots a_i, \ldots}.
\end{align}
We apply this transformation to the state of the system $\rho_{\rm S}$ (which is treated as the zeroth subsystem) and $m$ copies of the catalyst state $\eta_{\rm C}^{\ot m}$ (which are treated as the remaining $m$ subsystems). Due to the Lemma \ref{lemma:gentleu} we can always find a unitary $V_{\rm SCB_{2}}$ which acts on $\rm SC$ and the corresponding part of the heat bath $\rm B_{\rm 2}$ such that, after tracing out the bath, the action on $\rm SC$ is fully described by the channel (\ref{eq:mix_perm}). With this in mind we can write the error term $\nu(n)$ as:
\begin{align}
    \nu(n) &= \norm{\left(\mathcal{E}_{\rm C}\ot \mathcal{I}_{\rm B_1}\right)[\omega_{\rm CB_1}'] - \omega_{\rm CB_1}'}_1 \\
    &= \norm{\left(\mathcal{E}_{\rm C}\ot \mathcal{I}_{\rm B_1}\right)[\eta_{\rm C}^{\ot m} \ot \tau_{\rm B_1'}^{\ot n} \ot \tau_{\rm B_1''}] - \eta_{\rm C}^{\ot m} \ot \tau_{\rm B_1'}^{\ot n} \ot \tau_{\rm B_1''} }_1 \\
    &= \norm{\mathcal{E}_{\rm C} [\eta_{\rm C}^{\ot m}] - \eta_{\rm C}^{\ot m}}_1 \\
    &= \norm{\Tr_{\rm SB_{2}}\left[ V_{\rm SCB_2} \left( \rho_{\rm S} \ot \eta_{\rm C}^{\ot m} \ot \tau_{\rm B_2}\right)V_{\rm SCB_2}^{\dagger} \right] - \eta_{\rm C}^{\ot  m}}_{1}\\
    &= \norm{\Tr_{\rm S}\left[ \mathcal{T}^{\text{(mix)}}_{\rm SC} [\rho_{\rm S} \ot \eta_{\rm C}^{\ot m}]\right] - \eta_{\rm C}^{\ot m}}_{1} \\
    & \leq \norm{ \mathcal{T}^{\text{(mix)}}_{\rm SC} [\rho_{\rm S} \ot \eta_{\rm C}^{\ot m}] - \sigma_{\rm S} \ot \eta_{\rm C}^{\ot m}}_{1}, 
\end{align}
where in the second line we used the explicit form of the state $\omega_{\rm CB_1}'$ given in (\ref{def:omegaCB1}) and in the fourth line we used the definition of the channel $\mathcal{E}_{\rm C}$ from Eq. (\ref{def:EC}). In the fifth line labelled the mixing channel induced by the unitary $V_{\rm SCB_2}$ with $\mathcal{T}_{\rm SC}^{\text{(mix)}}$, i.e. $\mathcal{T}_{\rm SC}^{\text{(mix)}}[\cdot] := \Tr_{\rm B_2}\left[ V_{\rm SCB_2} \left( (\cdot)_{\rm SC} \ot \tau_{\rm B_2}\right)V_{\rm SCB_2}^{\dagger}\right]$ .  The last inequality follows from the fact that the trace distance is contractive when tracing out subsystems. Notice now that so far we have not made the choice for the state $\eta$. As we will soon see, a good choice is to pick the state $\eta$ to be precisely our target state on $\rm S$, i.e. $\eta = \sigma$. In this way the error term $\nu(n)$ becomes:
\begin{align}
    \label{eq:nu_conv_split}
    \nu(n) \leq \norm{ \mathcal{T}^{\text{(mix)}}_{\rm SC} [\rho_{\rm S} \ot \sigma_{\rm C}^{\ot m}] - \sigma_{\rm S} \ot \sigma_{\rm C}^{\ot m}}_{1}
\end{align}
In what follows we will label the subsystems which comprise the catalyst $\rm C$ with $\mathrm{C}_i$ for $i \in [1, m]$, i.e.  $\mathrm{C} = \mathrm{C}_1 \mathrm{C}_2 \ldots \mathrm{C}_m$. To further upper bound $\nu(n)$ we need the following result adapted from \cite{Anshu_2017}. Actually, we will write the lemma in a more specialised form which better demonstrates its applicability to our particular problem.
\begin{lemma}[Convex-split, \cite{Anshu_2017}]
\label{lemma:convex}
Let $\rho$ and $\sigma$ be two quantum states satisfying $\emph{supp}(\rho) \subseteq \emph{supp}(\sigma)$. Then the following state:
\begin{align}
    \mathcal{T}_{\mathrm{SC}} ^{\emph{(mix)}}\left[\rho_{\mathrm{S}} \ot \sigma_{\mathrm{C}}^{\ot m}\right] = \frac{1}{m}\left[\rho_{\mathrm{S}} \ot \sigma_{\mathrm{C}_{1}} \ot \ldots \ot \sigma_{\mathrm{C}_{m}} + \sigma_{\mathrm{S}} \ot \rho_{\mathrm{C}_{1}} \ot \ldots \ot \sigma_{\mathrm{C}_{m}} + \ldots + \sigma_{\mathrm{S}} \ot \sigma_{\mathrm{C}_{1}} \ot \ldots \ot \rho_{\mathrm{C}_{m}} \right]
\end{align}
for large $m$ is close to the state $\sigma_{\rm S} \ot \sigma^{\ot m}_{\rm C}$, that is, for all $m \geq 1$ the following holds:
\begin{align}
    \norm{\mathcal{T}_{\mathrm{SC}} ^{\emph{(mix)}}\left[\rho_{\mathrm{S}} \ot \sigma_{\mathrm{C}}^{\ot m}\right] - \sigma_{\rm S} \ot \sigma_{\rm C}^{\ot m}}_{1} \leq \frac{1}{\sqrt{m}} \cdot 2^{\frac{1}{2} D_{\emph{max}}(\rho||\sigma)}.
\end{align}
where $D_{\emph{max}}(\rho||\sigma) := \log \min \left\{\lambda \,|\, \rho \leq \lambda \sigma \right\}$ is the max-relative entropy.
\end{lemma}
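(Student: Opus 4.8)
The plan is to recognise the left-hand side as a \emph{convex-split} state and to control its distance from $\sigma_{\rm S}\ot\sigma_{\rm C}^{\ot m}$ not through the trace norm directly, but through a $\chi^2$-type divergence, which — unlike the trace norm — depends affinely on the uniform mixture defining $\mathcal{T}^{(\mathrm{mix})}$ and therefore ``dilutes'' the discrepancy by a factor $1/m$.

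First I would use the hypothesis $\mathrm{supp}(\rho)\subseteq\mathrm{supp}(\sigma)$ to introduce the likelihood operator $L:=\sigma^{-1/2}\rho\,\sigma^{-1/2}$, which is well defined and positive on $\mathrm{supp}(\sigma)$, satisfies $\Tr[\sigma L]=1$, and — this being exactly the meaning of the max-relative entropy — obeys $L\le 2^{D_{\max}(\rho||\sigma)}\,\id$. Setting $\Sigma:=\sigma^{\ot(m+1)}$ and letting $L_j$ denote $L$ acting on the $j$-th tensor factor and the identity on the other $m$ factors, every term of $\mathcal{T}^{(\mathrm{mix})}[\rho_{\rm S}\ot\sigma_{\rm C}^{\ot m}]$ equals $\Sigma^{1/2}L_j\Sigma^{1/2}$, so the state in question is $\tau=\Sigma^{1/2}\bar L\,\Sigma^{1/2}$ with $\bar L=\tfrac{1}{m+1}\sum_{j=0}^{m}L_j$. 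Next I would compare norms: writing $\tau-\Sigma=\Sigma^{1/4}\bigl(\Sigma^{-1/4}(\tau-\Sigma)\Sigma^{-1/4}\bigr)\Sigma^{1/4}$ and applying the Hölder inequality $\norm{ABC}_1\le\norm{A}_4\norm{B}_2\norm{C}_4$ together with $\norm{\Sigma^{1/4}}_4=1$ (because $\Tr[\Sigma]=1$) gives $\Tr[\lvert\tau-\Sigma\rvert]\le\sqrt{\Tr[(\tau-\Sigma)\Sigma^{-1/2}(\tau-\Sigma)\Sigma^{-1/2}]}$, and expanding the radicand using $\Sigma^{-1/2}\tau\,\Sigma^{-1/2}=\bar L$ and $\Tr[\tau]=1$ collapses it to $\Tr[\Sigma^{1/2}\bar L\,\Sigma^{1/2}\bar L]-1$.

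The heart of the argument is then a short trace computation. Since each $L_j$ is supported on a single factor, expanding $\bar L=\tfrac1{m+1}\sum_j L_j$ splits $\Tr[\Sigma^{1/2}\bar L\,\Sigma^{1/2}\bar L]$ into $m+1$ ``diagonal'' terms, each equal to $\Tr[\sigma^{1/2}L\sigma^{1/2}L]=\Tr[\rho L]\le 2^{D_{\max}(\rho||\sigma)}$, and $(m+1)m$ ``cross'' terms ($j\ne k$), each factorising across the tensor factors as $\Tr[\rho]\,\Tr[\sigma L]\,\prod_{i\ne j,k}\Tr[\sigma]=1$. Collecting, $\Tr[\Sigma^{1/2}\bar L\,\Sigma^{1/2}\bar L]-1=\tfrac{\Tr[\rho L]-1}{m+1}\le\tfrac{2^{D_{\max}(\rho||\sigma)}}{m+1}\le\tfrac{2^{D_{\max}(\rho||\sigma)}}{m}$, so $\Tr[\lvert\tau-\Sigma\rvert]\le m^{-1/2}\,2^{\frac12 D_{\max}(\rho||\sigma)}$; since the norm $\norm{\cdot}_1$ used here is at most $\Tr[\lvert\cdot\rvert]$, this already yields the claimed bound.

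The step I expect to be the real obstacle is passing from the trivial classical identity $\tau/\sigma^{\ot(m+1)}=\tfrac1{m+1}\sum_j L(x_j)$ (valid pointwise) to the noncommutative setting, where there is no operator analogue of a pointwise density ratio. The right fix — conjugating by $\Sigma^{\pm1/2}$ and working with the ``half-power'' $\chi^2$-divergence $\Tr[(\tau-\Sigma)\Sigma^{-1/2}(\tau-\Sigma)\Sigma^{-1/2}]$ — is precisely what makes both the Hölder bound and the affine dependence on $\bar L$ go through, with $D_{\max}$ (rather than the weaker $D$ or $D_2$) entering automatically via $L\le 2^{D_{\max}}\,\id$. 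An alternative would be to bound the relative entropy instead, $D(\tau||\Sigma)\le\log\!\bigl(1+2^{D_{\max}(\rho||\sigma)}/(m+1)\bigr)$ as in \cite{Anshu_2017}, and then apply the quantum Pinsker inequality; this is conceptually parallel but loses a factor $\sqrt2$ and so would not reproduce the stated constant.
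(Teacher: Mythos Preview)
Your argument is correct. The paper does not actually prove this lemma; it is stated as an imported result from \cite{Anshu_2017} and used as a black box, so there is no ``paper's own proof'' to compare against in detail.

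That said, your route differs from the one in \cite{Anshu_2017}, and you already identify the difference yourself: Anshu et al.\ bound the relative entropy $D(\tau\|\Sigma)$ by $\log\bigl(1+2^{D_{\max}(\rho\|\sigma)}/m\bigr)$ and then invoke quantum Pinsker, whereas you go through the sandwiched $\chi^2$-divergence $\Tr[(\tau-\Sigma)\Sigma^{-1/2}(\tau-\Sigma)\Sigma^{-1/2}]$ and the H\"older inequality $\norm{ABC}_1\le\norm{A}_4\norm{B}_2\norm{C}_4$. Both approaches exploit the same structural feature --- affinity of the divergence in $\bar L$, so that cross terms collapse to $1$ and only the $(m+1)$ diagonal terms carry $\Tr[\rho L]\le 2^{D_{\max}}$ --- but your $\chi^2$ route lands exactly on the constant stated in the lemma, while the Pinsker route would pick up an extra factor from the Pinsker constant. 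Your closing remark that the paper's trace-distance convention $\norm{M}_1=\max_{0\le P\le\mathbb{1}}\Tr[PM]$ is bounded above by the Schatten $1$-norm $\Tr\lvert M\rvert$ (indeed equals half of it for trace-zero Hermitian $M$) correctly closes the last gap.
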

Using the above lemma we can readily bound the error term from (\ref{eq:nu_conv_split}) as:
\begin{align}
 \nu(n) &\leq \frac{1}{\sqrt{m}} \cdot 2^{\frac{1}{2} D_{\text{max}}(\rho||\sigma)} \\
 &=  \sqrt{\frac{2^{D_{\text{max}}(\rho||\sigma)}}{r_n}}  \cdot \frac{1}{\sqrt{n}} \\
 &= c_n \cdot \frac{1}{\sqrt{n}},
\end{align}
where we used $m = n r_n$ and $r_n$ is given in (\ref{eq:lemma:nonasymp_rn}). Notice now that for fixed input and output states $\rho$ and $\sigma$ and for large $n$ the factor $c_n$ is effectively constant. Hence the error term $\nu(n)$ is, up to the leading order in $n$, scaling with $n$ as :
\begin{align}
    \nu(n) \sim \frac{1}{\sqrt{n}}.
\end{align}
Importantly, notice that we have not assumed anything special about the states $\rho$ and $\sigma$ (in fact our only technical assumption is that $\text{supp}({\rho}) \subseteq \text{supp}(\sigma)$ which can be always satisfied by considering arbitrarily small perturbations of the states). Consequently, the two states are arbitrary, which is an embodiment of the fact that in the embezzlement regime partial order between states fully vanishes.
\subsubsection{Genuine catalysis regime}
We now move to the case when the partial order between states is fully retained, i.e. when the second laws of thermodynamics are satisfied by a pair of states $\rho$ and $\sigma$. Recall that we are working with block-diagonal states and hence, any such state can be expressed as a a probability vector. Let us denote:
\begin{align}
    \bm{p} := \diag[\rho_{\rm S}], \qquad \bm{q} := \diag[\sigma_{\rm S}], \qquad \bm{c} := \diag[\eta_{\rm C}], \qquad \bm{g} := \diag[\tau_{\rm S}]
\end{align}
By definition then $\bm{p}$, $\bm{q}$ and $\bm{g}$ are $d_{\rm S}$-dimensional vectors and $\bm{c}$ is a $d_{\rm C}$-dimensional vector. Our only assumption in this section is that the states described by probability vectors $\bm{p}$ and $\bm{q}$ satisfy the second laws of thermodynamics, i.e.:
\begin{align}
    \label{eq:sec_laws_supp}
    \forall\, \alpha \geq 0\qquad F_{\alpha}(\bm{p}, \bm{g}) \geq F_{\alpha}(\bm{q}, \bm{g}),
\end{align}
where the generalized free energies are defined in (\ref{def:gen_fre_ene}). Before we specify the state $\eta_{\rm C}$ and the channel $\mathcal{E}_{\rm C}$ let us rephrase the conditions (\ref{eq:sec_laws_supp}) in a form which relates them with Renyi entropies. 

The generalized free energies $F_{\alpha}$ can be connected with Renyi entropies $H_{\alpha}$ using the so-called \emph{embedding map} introduced in \cite{Brand_o_2015}. Intuitively, the embedding map is is an operation which allows to translate between the microcanonical and macrocanonical descriptions of a thermodynamic system (see e.g. \cite{Egloff2015}). For clarity we also define it here. 
\begin{definition}[Embedding map]
The embedding map $\Gamma_{\bm{d}}: \mathbb{R}^{n} \rightarrow \mathbb{R}^{D}$ is a transformation between vectors such that for $\bm{x} = (x_1, x_2, \ldots, x_n)$ we have:
\begin{align}
    \Gamma_{\bm{d}}\left(\bm{x}\right) := \left( \underbrace{\frac{x_1}{d_1}, \ldots, \frac{x_1}{d_1}}_{d_1 \, \text{terms}}, \underbrace{\frac{x_2}{d_2}, \ldots, \frac{x_2}{d_2}}_{d_2 \, \text{terms}}, \ldots, \underbrace{\frac{x_n}{d_n}, \ldots, \frac{x_n}{d_n}}_{d_n \, \text{terms}}  \right) = \bigoplus_{i = 1}^{n} \left[x_i \cdot \bm{u}_i \right],
\end{align}
where $\bm{d} = (d_1, d_2, \ldots, d_n)$ is a vector of natural numbers which sum to $D = \sum_{i=1}^{n} d_i$ and $\bm{u}_i := \frac{1}{i} \left(1, 1, \ldots,1 \right)$ is an $i-$dimensional uniform distribution. 
\end{definition}
We assume that the thermal state $\bm{g}$ is a vector with rational entries, i.e. there exists a collection of natural numbers $\{d_1, d_2, \ldots, d_{d_{\rm S}}\}$ such that $D = \sum_{i=1}^{d_{\rm S}} d_i$ and:
\begin{align}
    \label{eq:gibbs_rational}
    \bm{g} := \left(\frac{d_1}{D}, \frac{d_2}{D}, \ldots, \frac{d_{d_{\rm S}}}{D}\right).
\end{align}
Taking the thermal state as in (\ref{eq:gibbs_rational}) and applying the embedding map with $\bm{d} = (d_1, d_2, \ldots, d_{d_{\rm S}})$ leads to:
\begin{align}
    \Gamma_{\bm{d}}(\bm{g}) = \bm{u}_{D},
\end{align}
where $\bm{u}_{D}$ is the uniform distribution of dimension $D = \sum_{i=1}^{d_{\rm S}} d_i$. As described in \cite{Brand_o_2015}, the embedding map leads to the following relationship between the generalized free energies $F_{\alpha}$ and Renyi entropies $H_{\alpha}$:
\begin{align}
    \log Z_{\rm S} + \beta F_{\alpha}(\bm{x}, \bm{g})  = \log D - H_{\alpha}\left(\Gamma_{\bm{d}}(\bm{p})\right).
\end{align}
In this way the embedding map also allows to rewrite the second laws purely in terms of the Renyi entropies. Denoting $\widetilde{\bm{p}} := \Gamma_{\bm{d}}(\bm{p})$ and $\widetilde{\bm{q}} := \Gamma_{\bm{d}}(\bm{q})$ the conditions in (\ref{eq:sec_laws_supp}) become:
\begin{align}
    H_{\alpha}(\widetilde{\bm{p}}) \leq  H_{\alpha}(\widetilde{\bm{q}})
\end{align}
The second laws expressed using Renyi entropies not only imply the existence of a state which catalyzes a given transformation. In fact, they also determine when multiple copies of one state can be converted into multiple copies of another state asymptotically. 
This relationship is captured by the following result from majorization theory adapted from  \cite{jensen2019asymptotic} (Proposition 3.2.7):
\begin{lemma}
Let $\widetilde{\bm{p}}$ and $\widetilde{\bm{q}}$ be two probability distributions of dimension $D$ such that:
\begin{align}
    \label{eq:renyi_conds}
    H_{\alpha}(\widetilde{\bm{p}}) < H_{\alpha}(\widetilde{\bm{q}}) \qquad \forall \, \alpha \geq 0,
\end{align}
then for sufficiently large $k$ the following holds:
\begin{align}
    \label{eq:k_copy}
    \widetilde{\bm{p}}^{\,\ot k} \succ \widetilde{\bm{q}}^{\,\ot k}.
\end{align}
\end{lemma}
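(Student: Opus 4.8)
The plan is to deduce the statement from the Lorenz--curve description of majorization together with the large--deviation (``method of types'') asymptotics of tensor powers, as in \cite{jensen2019asymptotic}. First I would recall the standard reformulation: for normalised vectors of a common dimension, $\bm{x}\succ\bm{y}$ holds if and only if the (concave) Lorenz curve of $\bm{x}$ --- the piecewise--linear interpolation of the points $(t,\sum_{i=1}^{t}x_{i}^{\downarrow})$, $t=0,1,\dots$, where $x^{\downarrow}$ is the decreasing rearrangement --- lies nowhere below that of $\bm{y}$, equivalently if and only if $\sum_{i}(x_{i}-\lambda)_{+}\ge\sum_{i}(y_{i}-\lambda)_{+}$ for every $\lambda\ge 0$, with $(z)_{+}:=\max\{z,0\}$. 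So the task is to exhibit a single $k$ for which the Lorenz curve of $\widetilde{\bm{p}}^{\,\otimes k}$ dominates that of $\widetilde{\bm{q}}^{\,\otimes k}$ at every point at once. A useful preliminary remark is that the hypothesis is tensor--stable: because $\sum_{i}x_{i}^{\alpha}$ is multiplicative under $\otimes$ one has $H_{\alpha}(\widetilde{\bm{x}}^{\,\otimes k})=k\,H_{\alpha}(\widetilde{\bm{x}})$, so the strict gaps $H_{\alpha}(\widetilde{\bm{p}})<H_{\alpha}(\widetilde{\bm{q}})$ are merely amplified, not altered, by passing to $\otimes k$ --- the content of the lemma is precisely that the Renyi entropies, though not a complete monotone for single--copy majorization, become one asymptotically. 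Here ``$\forall\,\alpha\ge 0$'' is understood over the whole Renyi family $\alpha\in[0,\infty]$; in practice only finite $\alpha$ are used, but their $\alpha\to0$ and $\alpha\to\infty$ limits (which recover the support size and the largest entry) enter as well.

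Next I would use the concentration of the ``spectrum'' of tensor powers. Regard $-\log\widetilde{x}_{I}$, with $I$ drawn from $\widetilde{\bm{x}}$, as a random variable; its cumulant generating function is an affine repackaging of $s\mapsto\log\sum_{i}\widetilde{x}_{i}^{1+s}$, i.e.\ of the family $\{H_{\alpha}(\widetilde{\bm{x}})\}_{\alpha}$. For the $k$--th tensor power, $-\log(\widetilde{\bm{x}}^{\,\otimes k})_{I}$ is a sum of $k$ i.i.d.\ copies, so by Cram\'er's theorem / the method of types the empirical statistics of $\{-\tfrac{1}{k}\log(\widetilde{\bm{x}}^{\,\otimes k})_{i}\}_{i}$ concentrate, and the suitably rescaled Lorenz curve of $\widetilde{\bm{x}}^{\,\otimes k}$ converges --- uniformly on compact subsets away from its two endpoints --- to a deterministic concave limit $\mathcal{L}_{\widetilde{\bm{x}}}$. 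The structural input is the Legendre duality between the ``pressure'' $s\mapsto\log\sum_{i}\widetilde{x}_{i}^{1+s}$ and the rate function governing $\mathcal{L}_{\widetilde{\bm{x}}}$: the supporting lines of $\mathcal{L}_{\widetilde{\bm{x}}}$ are indexed by $\alpha\in[0,\infty]$ with intercepts affine in $H_{\alpha}(\widetilde{\bm{x}})$, so the family $\{H_{\alpha}(\widetilde{\bm{x}})\}_{\alpha}$ is just an equivalent encoding of $\mathcal{L}_{\widetilde{\bm{x}}}$. Consequently $\mathcal{L}_{\widetilde{\bm{p}}}\ge\mathcal{L}_{\widetilde{\bm{q}}}$ pointwise is equivalent to $H_{\alpha}(\widetilde{\bm{p}})\le H_{\alpha}(\widetilde{\bm{q}})$ for all $\alpha$, and the \emph{strict} hypothesis promotes this to a uniform strict gap $\mathcal{L}_{\widetilde{\bm{p}}}\ge\mathcal{L}_{\widetilde{\bm{q}}}+\epsilon$ on every compact interior window.

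Then I would descend from the limit back to a finite $k$. On a fixed compact interior window, the uniform convergence together with the strict gap $\epsilon$ gives, for all $k$ large enough, that the Lorenz curve of $\widetilde{\bm{p}}^{\,\otimes k}$ lies above that of $\widetilde{\bm{q}}^{\,\otimes k}$ there. The two boundary windows --- the top (largest entries) and the bottom (support and smallest entries) --- must be handled directly, since there the bulk estimate degenerates; but their behaviour is itself dictated by the hypothesis via the endpoint expansions of $H_{\alpha}$. For instance, at the top one has $H_{\alpha}(\bm{x})=H_{\infty}(\bm{x})-\alpha^{-1}\log(n_{\max}(\bm{x})\max_{i}x_{i})+o(\alpha^{-1})$ as $\alpha\to\infty$, where $n_{\max}$ is the multiplicity of the largest entry, so the strict inequality at large finite $\alpha$ forces \emph{either} $\max_{i}\widetilde{p}_{i}>\max_{i}\widetilde{q}_{i}$ --- which yields an exponential separation of the leading blocks of $\widetilde{\bm{p}}^{\,\otimes k}$ and $\widetilde{\bm{q}}^{\,\otimes k}$ that dominates the whole top window --- \emph{or} equal maxima with $n_{\max}(\widetilde{\bm{p}})\ge n_{\max}(\widetilde{\bm{q}})$, in which case the leading flat piece of the Lorenz curve of $\widetilde{\bm{p}}^{\,\otimes k}$ sits weakly above that of $\widetilde{\bm{q}}^{\,\otimes k}$. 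The bottom window is treated analogously through the $\alpha\to0$ expansion (support size, then geometric mean, then smallest entry). Gluing the bulk and the two edges, for every sufficiently large $k$ the Lorenz curve of $\widetilde{\bm{p}}^{\,\otimes k}$ lies nowhere below that of $\widetilde{\bm{q}}^{\,\otimes k}$, i.e.\ $\widetilde{\bm{p}}^{\,\otimes k}\succ\widetilde{\bm{q}}^{\,\otimes k}$.

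The hard part will be exactly this gluing: the convergence of the rescaled Lorenz curves is genuinely non--uniform at the two endpoints (the limiting curve has corner--type behaviour there), so one needs quantitative, $k$--explicit control of the method--of--types deviations in the bulk \emph{and} of the two staircases at the edges, and then a matching at the two ``seams'' where the bulk window meets the boundary windows; choosing the size of the interior window and the threshold for $k$ consistently is the delicate step. A secondary, more bookkeeping difficulty is reducing ``for every $\lambda$ (equivalently every position along the curve)'' to ``for finitely many $\lambda$, up to a controlled error'', which follows from the compactness of the parameter range together with the monotonicity and concavity of the Lorenz curves. The detailed, routine but lengthy execution of these estimates is carried out in \cite{jensen2019asymptotic} (Prop.~3.2.7), which I would follow.
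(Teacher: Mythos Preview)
Your proposal is correct and, in substance, identical to what the paper does: the paper does not prove this lemma at all but simply imports it from \cite{jensen2019asymptotic} (Proposition~3.2.7), which is exactly the reference you say you would follow, and your sketch is a faithful outline of the large--deviations/Lorenz--curve argument carried out there. The only additional content the paper supplies around this lemma is the short observation (immediately after its statement) that the strict inequalities can be relaxed to non--strict ones by mixing $\bm{q}$ with the uniform distribution and using strict Schur--concavity of $H_{\alpha}$; your write--up does not need this, since you work directly with the strict hypothesis.
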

\noindent Let us note that using the properties of the embedding map it can be easily verified that (\ref{eq:k_copy}) can be equivalently written as:
\begin{align}
    \bm{p}^{\ot k} \succ_{\rm  T} \bm{q}^{\ot k}.
\end{align}
Let us begin by showing that we can replace strict inequalities in the above Lemma with non-strict ones. The argument proceeds similarly as in \cite{Brand_o_2015} (Appendix B, Proposition 3) and we repeat it here for convenience. 

It is easy to see that when (\ref{eq:k_copy}) holds then the non-strict inequalities are satisfied. In order to prove the converse suppose that the conditions (\ref{eq:renyi_conds}) hold for all $\alpha \geq 0$ but with non-strict inequalities. Consider $\bm{q}_{\epsilon} := (1-\epsilon) \bm{q} + \epsilon \bm{u}$, with $\bm{u}$ being a uniform distribution. Since $\bm{q}_{\epsilon}$ is more mixed than $\bm{q}$ we have that $\bm{q} \succ \bm{q}_{\epsilon}$. Since Renyi entropies are (strictly) Schur-concave for any $\alpha > 0$ \cite{Marshall2011}, we have that $H_{\alpha}(\bm{p}) < H_{\alpha}(\bm{q_{\epsilon}})$. Therefore, for a sufficiently large $k$ we have that $\bm{p}^{\ot k} \succ \bm{q}_{\epsilon}^{\ot k}$. Now, since this holds for any $\epsilon \geq 0$ and majorization relation is continuous with respect to probability distributions, we also have that $\bm{p}^{\ot k} \succ \bm{q}^{\ot k}$.

Let us now recall the definition of the \emph{Duan state} $\omega^{\rm D}_k({\rho}, {\sigma})$ \cite{PhysRevA.71.062306}. For convenience we denote $\bm{z}^{\rm D}_k(\widetilde{\bm{p}}, \widetilde{\bm{q}}) = \text{diag}[\omega^{\rm D}_k({\rho}, {\sigma})]$, where:
\begin{align}
    \label{def:dua_sta}
    \bm{z}^{\rm D}_k(\widetilde{\bm{p}}, \widetilde{\bm{q}}) :=\! \frac{1}{k}\!\left[ \widetilde{\bm{p}}^{\,\ot k-1}\! \oplus\! \left(\widetilde{\bm{p}}^{\,\ot k-2}\!\ot \widetilde{\bm{q}} \right)\! \oplus \! \left(\widetilde{\bm{p}}^{\,\ot k-3}\!\ot \widetilde{\bm{q}}^{\, \ot 2} \right)\! \oplus \ldots \oplus \! \left(\widetilde{\bm{p}} \ot \widetilde{\bm{q}}^{\, \ot k-2} \right)\! \oplus \widetilde{\bm{q}}^{\,\ot k-1}\right]
\end{align}
Then, as it was shown in  \cite{PhysRevA.71.062306, PhysRevA.74.042312}, the Duan state $\bm{z}^{\rm D}_k(\widetilde{\bm{p}}, \widetilde{\bm{q}})$ is an exact catalyst for a transformation between states $\widetilde{\bm{p}}$ and $\widetilde{\bm{q}}$ if $k$ copies of $\widetilde{\bm{p}}$ can be converted into $k$ copies of $\widetilde{\bm{q}}$. In other words, if the condition (\ref{eq:k_copy}) is satisfied then the following also holds:
\begin{align}
    \widetilde{\bm{p}} \ot \bm{z}^{\rm D}_k(\widetilde{\bm{p}}, \widetilde{\bm{q}}) \succ  \widetilde{\bm{q}} \ot \bm{z}^{\rm D}_k(\widetilde{\bm{p}}, \widetilde{\bm{q}})
\end{align}
To see this note that if (\ref{eq:k_copy}) is satisfied for some $k$ then we also have:
\begin{align}
    \widetilde{\bm{p}} \ot \bm{z}^{\rm D}_k(\widetilde{\bm{p}}, \widetilde{\bm{q}}) & = \frac{1}{k}\!\left[ \widetilde{\bm{p}}^{\,\ot k}\! \oplus\! \left(\widetilde{\bm{p}}^{\,\ot k-1}\!\ot \widetilde{\bm{q}} \right)\! \oplus \! \left(\widetilde{\bm{p}}^{\,\ot k-2}\!\ot \widetilde{\bm{q}}^{\, \ot 2} \right)\! \oplus \ldots \oplus \! \left(\widetilde{\bm{p}}^{\, \ot 2} \ot \widetilde{\bm{q}}^{\, \ot k-2} \right)\! \oplus
    \left(\widetilde{\bm{p}}\ot \widetilde{\bm{q}}^{\,\ot k-1}\right)\right] \\
    &\succ \frac{1}{k}\!\left[ \widetilde{\bm{q}}^{\,\ot k}\! \oplus\! \left(\widetilde{\bm{p}}^{\,\ot k-1}\!\ot \widetilde{\bm{q}} \right)\! \oplus \! \left(\widetilde{\bm{p}}^{\,\ot k-2}\!\ot \widetilde{\bm{q}}^{\, \ot 2} \right)\! \oplus \ldots \oplus \! \left(\widetilde{\bm{p}}^{\, \ot 2} \ot \widetilde{\bm{q}}^{\, \ot k-2} \right)\! \oplus
    \left(\widetilde{\bm{p}}\ot \widetilde{\bm{q}}^{\,\ot k-1}\right)\right] \\
    &= \widetilde{\bm{q}} \ot \bm{z}^{\rm D}_k(\widetilde{\bm{p}}, \widetilde{\bm{q}})
\end{align}
Hence, if the second laws are satisfied there exists large enough $k$ such that $k$ copies of $\widetilde{\bm{p}}$ can be converted into $k$ copies of $\widetilde{\bm{q}}$. This, on the other hand, means that there exists a special catalyst (the Duan state) which can catalyze the transformation from $\widetilde{\bm{p}}$ to  $\widetilde{\bm{q}}$ \emph{without} any disturbance. In this way we obtain the following theorem:
\begin{theorem}
    \label{theorem_k_copy}
    Let $\bm{p}$ and $\bm{q}$ be two probability distributions of dimension $d_{\rm S}$ such that:
\begin{align}
    \qquad F_{\alpha}({\bm{p}}, \bm{g}) \geq F_{\alpha}({\bm{q}}, \bm{g}) \qquad  \forall \, \alpha \geq 0,
\end{align}
then for sufficiently large $k$ the following holds:
\begin{align}
    \label{eq:theorem_k_copy}
    \bm{p} \ot \bm{z}^{\rm D}_k({\bm{p}}, \bm{q}) \, \succ_{\small \rm T}\, \bm{q} \ot \bm{z}^{\rm D}_k({\bm{p}}, {\bm{q}}),
\end{align}
where $\bm{z}^{\rm D}_k({\bm{p}}, \bm{q})$ is the Duan state defined in (\ref{def:dua_sta}).
\end{theorem}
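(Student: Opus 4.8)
The plan is to obtain the theorem as the composition of two facts already assembled above: Lemma~\ref{lemma3}, which turns the second laws into a \emph{finite}-copy thermo-majorization $\bm p^{\ot k}\succ_{\rm T}\bm q^{\ot k}$, and the defining property of the Duan state, which promotes any such finite-copy majorization into an exact catalytic transformation. So the proof would have two parts, the first being (a re-run of) the proof of Lemma~\ref{lemma3} and the second being the Duan-state bookkeeping.

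\emph{From the second laws to a $k$-copy majorization.} Reducing, by density of rational probability vectors in the simplex and continuity of $F_\alpha$ and of $\succ_{\rm T}$ in all arguments, to a thermal vector $\bm g=(d_1/D,\dots,d_{d_{\rm S}}/D)$ with $\sum_i d_i=D$, the embedding map $\Gamma_{\bm d}$ sends $\bm g$ to the uniform distribution $\bm u_D$ and, via the identity $\log Z_{\rm S}+\beta F_\alpha(\bm x,\bm g)=\log D-H_\alpha(\Gamma_{\bm d}(\bm x))$ recorded above, rewrites the hypothesis $F_\alpha(\bm p,\bm g)\ge F_\alpha(\bm q,\bm g)$ as $H_\alpha(\widetilde{\bm p})\le H_\alpha(\widetilde{\bm q})$ for all $\alpha\ge0$, where $\widetilde{\bm p}=\Gamma_{\bm d}(\bm p)$ and $\widetilde{\bm q}=\Gamma_{\bm d}(\bm q)$. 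Since the majorization-theoretic input of Ref.~\cite{jensen2019asymptotic} is stated with \emph{strict} R\'enyi inequalities, one first strengthens these by the perturbation trick: with $\bm q_\epsilon=(1-\epsilon)\bm q+\epsilon\bm u$ one has $\bm q\succ\bm q_\epsilon$, hence $H_\alpha(\widetilde{\bm p})<H_\alpha(\widetilde{\bm q}_\epsilon)$ for every $\alpha>0$ by strict Schur-concavity of the R\'enyi entropies, and one returns to $\bm q$ by continuity of majorization, exactly as in Ref.~\cite{Brand_o_2015}. Ref.~\cite{jensen2019asymptotic} then yields a finite $k$ with $\widetilde{\bm p}^{\ot k}\succ\widetilde{\bm q}^{\ot k}$, that is, $\bm p^{\ot k}\succ_{\rm T}\bm q^{\ot k}$.

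\emph{From the $k$-copy majorization to catalysis via the Duan state.} I would carry this out at the level of the embedded distributions. Writing out $\widetilde{\bm p}\ot\bm z^{\rm D}_k(\widetilde{\bm p},\widetilde{\bm q})$ and $\widetilde{\bm q}\ot\bm z^{\rm D}_k(\widetilde{\bm p},\widetilde{\bm q})$ as direct sums from the definition of $\bm z^{\rm D}_k$, every block of the form $\widetilde{\bm p}^{\,\ot k-j}\ot\widetilde{\bm q}^{\,\ot j}$ with $1\le j\le k-1$ is common to both, so the two vectors differ only in the single block $\tfrac{1}{k}\widetilde{\bm p}^{\,\ot k}$ versus $\tfrac{1}{k}\widetilde{\bm q}^{\,\ot k}$. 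Invoking the elementary stability of majorization under direct sums ($\bm a\succ\bm b$ implies $\bm a\oplus\bm c\succ\bm b\oplus\bm c$ for any further block $\bm c$, seen by applying $D\oplus\mathbb{1}$ to the doubly stochastic $D$ with $\bm b=D\bm a$), the relation $\widetilde{\bm p}^{\,\ot k}\succ\widetilde{\bm q}^{\,\ot k}$ propagates to $\widetilde{\bm p}\ot\bm z^{\rm D}_k(\widetilde{\bm p},\widetilde{\bm q})\succ\widetilde{\bm q}\ot\bm z^{\rm D}_k(\widetilde{\bm p},\widetilde{\bm q})$; this is precisely the Duan property of Refs.~\cite{PhysRevA.71.062306,PhysRevA.74.042312}. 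Since $\Gamma_{\bm d}$ acts trivially on the degenerate catalyst factor and commutes with tensor products, undoing the embedding gives exactly $\bm p\ot\bm z^{\rm D}_k(\bm p,\bm q)\succ_{\rm T}\bm q\ot\bm z^{\rm D}_k(\bm p,\bm q)$.

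The genuinely delicate point is the passage from the non-strict second laws to the strict R\'enyi inequalities feeding Ref.~\cite{jensen2019asymptotic}: the copy number $k(\epsilon)$ produced for $\bm q_\epsilon$ need not remain bounded as $\epsilon\to0$, so the continuity argument has to be organised with care (as in the corresponding step of Ref.~\cite{Brand_o_2015}) rather than by a naive limit. The remaining ingredients — the embedding-map identities, its commutation with tensor products, and the block-wise majorization for the Duan state — are routine.
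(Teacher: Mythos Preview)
Your proposal is correct and follows essentially the same route as the paper: reduce the second laws to R\'enyi-entropy inequalities via the embedding map, invoke the asymptotic-majorization result of Ref.~\cite{jensen2019asymptotic} (with the $\bm q_\epsilon$ perturbation to pass from non-strict to strict inequalities, as in Ref.~\cite{Brand_o_2015}), and then apply the Duan-state block decomposition to promote $k$-copy majorization to exact catalysis. You are slightly more explicit than the paper about the direct-sum stability argument and about the potential unboundedness of $k(\epsilon)$ as $\epsilon\to0$; the paper glosses over the latter with a one-line continuity remark, so your caution there is well placed.
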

We can now return to our main protocol. We choose the Duan state corresponding to $\bm{p}$ and $\bm{q}$ as the intermediate state, i.e.:
\begin{align}
    \label{eq:eta_cat}
    \eta_{\rm C} = \omega_{k}^{\rm D}(\rho, \sigma).
\end{align}
In fact we only need one copy of the state $\eta_{\rm C}$ to transform $\rho_{\rm S}$ into $\sigma_{\rm S}$ and so we choose $m = 1$. In order to specify the intermediate transformation $\mathcal{E}_{\rm C}$ let us recall that it is fully specified by the unitary $V_{\rm SCB_2}$ and defined as:
\begin{align}
    \mathcal{E}_{\rm C} &:= \Tr_{\rm SB_{2}}\left[ V_{\rm SCB_2} \left( \rho_{\rm S} \ot \eta_{\rm C}^{\ot m} \ot \tau_{\rm B_2}\right)V_{\rm SCB_2}^{\dagger} \right] \\
    &= \mathcal{T}_{\rm SC}^{\text{(cat)}}[\rho_{\rm S} \ot \eta_{\rm C}^{\ot m}],
\end{align}
where we labelled the thermal operation induced by the unitary $V_{\rm SCB_2}$ with:
\begin{align}
    \mathcal{T}_{\rm SC}^{\text{(cat)}}[\cdot] := \Tr_{\rm B_2}\left[ V_{\rm SCB_2} \left( (\cdot)_{\rm SC} \ot \tau_{\rm B_2}\right)V_{\rm SCB_2}^{\dagger}\right].
\end{align}
Theorem \ref{theorem_k_copy} assures that for $\eta$ as chosen in (\ref{eq:eta_cat}) and $m \geq 1$ the state $\rho_{\rm S} \ot \eta_{\rm C}$ thermo-majorizes the state $\sigma_{\rm S} \ot \eta_{\rm C}$. This, due to the fundamental result of \cite{Horodecki2013}, also means that there exists a thermal operation connecting this two states and hence we choose $\mathcal{T}_{\rm SC}^{\text{(cat)}}$ to be precisely this transformation. As as a result we have:
\begin{align}
    \label{eq:tau_SC}
    \mathcal{T}_{\rm SC}^{\text{(cat)}}[\rho_{\rm S} \ot \omega_k^{\rm D}(\rho_{\rm S}, \sigma_{\rm S})] = \sigma_{\rm S} \ot \omega_k^{\rm D}(\rho, \sigma).
\end{align}
Notice that even though we chose the transformation $\mathcal{T}_{\rm SC}^{\text{(cat)}}$ implicitly, it can still be constructed using the methods described e.g. in \cite{Renes2016} (p. 29, below Lemma 7, see also the references therein), i.e. by finding an appropriate Gibbs-stochastic matrix and then using the methods described in the proof of Lemma \ref{lemma:gentleu} to construct a permutation unitary which leads to the desired thermal operation.   

Importantly, the catalytic transformation from (\ref{eq:tau_SC}) is exact, i.e. it does not induce any new error on the system nor the catalyst. As a result the error term $\nu(n)$ vanishes:
\begin{align}
    \nu(n) &\leq \norm{ \mathcal{T}_{\rm SC}^{\text{(cat)}}[\rho_{\rm S} \ot \omega_k^{\rm D}(\rho, \sigma)] - \sigma_{\rm S} \ot \omega_k^{\rm D}(\rho, \sigma)}_{1} \\
    &=0,
\end{align}
and so the only error on the catalyst system is due to the pre and post-processing steps of the main protocol (\ref{eq:tot_err}).

\section{\chg{Numerical analysis for higher dimensional systems}}
In this section we present the results of a supplementary numerical computation that provides further evidence that catalytic universality is a generic phenomenon. In particular, we compute the fraction $f(\bm{p})$ from Sec. IV.C for different dimension of the system $d_{\rm S}$ and different choice of the error parameter $\mu$ and threshold value $\gamma_{\rm thld}$. The results are presented in Fig. \ref{fig:supplementary_numerics}.   
\begin{figure}
    \centering
    \includegraphics[width=\linewidth]{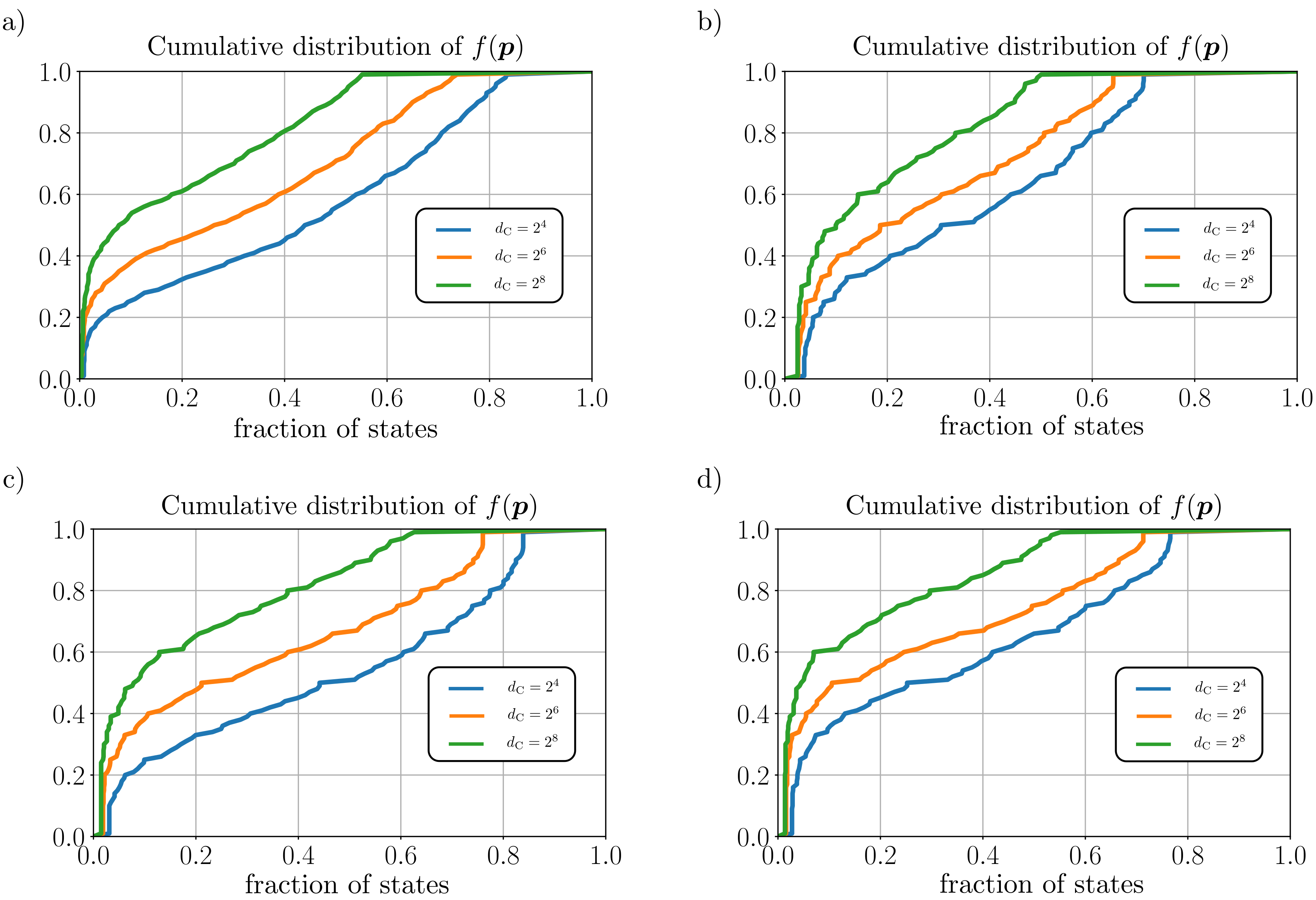}
    \caption{Cumulative distribution of $f(\bm{p})$ for different choices of parameters: $(a)$ $[d_{\rm S} = 4, \mu = 0.05, \gamma_{\rm thld} = 0.9]$, $(b)$ $[d_{\rm S} = 5, \mu = 0.05, \gamma_{\rm thld} = 0.9]$, $(c)$ $[d_{\rm S} = 4, \mu = 0.1, \gamma_{\rm thld} = 0.8]$ $(d)$ $[d_{\rm S} = 5, \mu = 0.1, \gamma_{\rm thld} = 0.8]$. }
    \label{fig:supplementary_numerics}
\end{figure}

\end{document}